\documentclass[pra, onecolumn,superscriptaddress,notitlepage, preprintnumbers,amsmath,amssymb,nobalancelastpage,longbibliography]{revtex4-1}

\usepackage{tikz}
\usetikzlibrary{shapes,arrows,positioning}
\usepackage{standalone}
\usepackage{amssymb,amsmath,amstext}
\usepackage{mathtools}
\usepackage{graphicx}
\usepackage{epstopdf}
\usepackage{color}
\usepackage{appendix}
\usepackage[T1]{fontenc}
\usepackage{bbold}
\usepackage{bbm} 
\usepackage{float}
\usepackage{latexsym}
\usepackage[colorlinks=true,citecolor=blue,linkcolor=magenta]{hyperref}
\usepackage[latin1]{inputenc}
\usepackage{bibunits}
\defaultbibliographystyle{unsrt}

\usepackage{graphicx, color, graphpap}
\usepackage{enumitem}
\usepackage{amssymb}
\usepackage{amsmath,amsthm}
\usepackage{float}
\usepackage{multirow}
\usepackage[colorlinks=true,citecolor=blue,linkcolor=magenta]{hyperref}
\usepackage[T1]{fontenc}
\usepackage{bbm}
\usepackage{thmtools,thm-restate}
\usepackage{verbatim}
\usepackage{mathtools}
\usepackage{xspace}
\usepackage{amssymb}




\long\def\ca#1\cb{} 



\newcommand{\ket}[1]{|#1\rangle}               
\newcommand{\bra}[1]{\langle #1|}              
\newcommand{\dya}[1]{\ket{#1}\!\bra{#1}}
\newcommand{\matl}[3]{\langle #1|#2|#3\rangle} 



\newcommand{\QC}{\ensuremath{\mathsf{QC}}\xspace}

\newcommand{\HST}{\ensuremath{\mathsf{HST}}\xspace}
\newcommand{\LHST}{\ensuremath{\mathsf{LHST}}\xspace}

\newcommand{\LET}{\ensuremath{\mathsf{LET}}\xspace}
\newcommand{\LLET}{\ensuremath{\mathsf{LLET}}\xspace}




\newcommand{\DC}{\mathcal{D}}
\newcommand{\EC}{\mathcal{E}}

\newcommand{\HC}{\mathcal{H}}
\newcommand{\IC}{\mathcal{I}}

\newcommand{\NC}{\mathcal{N}}
\newcommand{\PC}{\mathcal{P}}

\newcommand{\UC}{\mathcal{U}}
\newcommand{\VC}{\mathcal{V}}
\newcommand{\WC}{\mathcal{W}}

\newcommand{\Tr}{{\rm Tr}}

\newcommand{\ave}[1]{\langle #1\rangle}               
\renewcommand{\geq}{\geqslant}
\renewcommand{\leq}{\leqslant}
\newcommand{\mte}[2]{\langle#1|#2|#1\rangle }
\newcommand{\mted}[3]{\langle#1|#2|#3\rangle }

\newcommand{\Ct}{\widetilde{C}}

\newcommand{\VB}{\mathbb{V}}

\renewcommand{\vec}[1]{\boldsymbol{#1}}  

\newcommand{\ad}{^\dagger}

\newcommand*{\id}{\openone}



\newcommand{\Pt}{\widetilde{P}}

\newcommand{\Ft}{\widetilde{F}}
\newcommand{\jj}{^{(j)}}
\newcommand{\opt}{\text{opt}}

{}
{}
\newtheorem{theorem}{Theorem}
\newtheorem{theoremApp}{Theorem}
\newtheorem{lemma}{Lemma}
\newtheorem{corollary}{Corollary}
\newtheorem{corollaryApp}{Corollary}
\newtheorem{definition}{Definition}

\begin{document}
\title{Noise Resilience of Variational Quantum Compiling}

\author{Kunal Sharma} 
\address{Theoretical Division, Los Alamos National Laboratory, Los Alamos, NM 87545, USA}
\address{Hearne Institute for Theoretical Physics and Department of Physics and Astronomy, Louisiana State University, Baton Rouge, LA USA.}

\author{Sumeet Khatri} 
\address{Hearne Institute for Theoretical Physics and Department of Physics and Astronomy, Louisiana State University, Baton Rouge, LA USA.}

\author{M. Cerezo}
\address{Theoretical Division, Los Alamos National Laboratory, Los Alamos, NM 87545, USA}
\address{Center for Nonlinear Studies, Los Alamos National Laboratory, Los Alamos, NM, USA
}

\author{Patrick J. Coles}
\address{Theoretical Division, Los Alamos National Laboratory, Los Alamos, NM 87545, USA}

\begin{abstract}
Variational hybrid quantum-classical algorithms (VHQCAs) are near-term algorithms that leverage classical optimization to minimize a cost function, which is efficiently evaluated on a quantum computer. Recently VHQCAs have been proposed for quantum compiling, where a target unitary $U$ is compiled into a short-depth gate sequence $V$. In this work, we report on a surprising form of noise resilience for these algorithms. Namely, we find one often learns the correct gate sequence $V$ (i.e., the correct variational parameters) despite various sources of incoherent noise acting during the cost-evaluation circuit. Our main results are rigorous theorems stating that the optimal variational parameters are unaffected by a broad class of noise models, such as measurement noise, gate noise, and Pauli channel noise. Furthermore, our numerical implementations on IBM's noisy simulator demonstrate resilience when compiling the quantum Fourier transform, Toffoli gate, and W-state preparation. Hence, variational quantum compiling, due to its robustness, could be practically useful for noisy intermediate-scale quantum devices. Finally, we speculate that this noise resilience may be a general phenomenon that applies to other VHQCAs such as the variational quantum eigensolver.
\end{abstract}

\maketitle

\section{Introduction}

Obtaining accurate answers from near-term quantum computers is a challenge with major scientific and technological implications. In these so-called noisy intermediate-scale quantum (NISQ) computers~\cite{preskill2018quantum}, errors arise, for example, due to decoherence processes, gate noise, and measurement noise. Clearly, error mitigation techniques will be necessary to make use of NISQ devices. Several promising error mitigation strategies have recently emerged, including zero-noise extrapolation~\cite{temme2017error}, quasi-probability decomposition~\cite{temme2017error}, post-selection~\cite{linke2018measuring, subasi2018entanglement}, noise-aware compiling~\cite{murali2019noise}, and machine learning for circuit-depth compression~\cite{cincio2018learning}. Let us consider two other strategies for error mitigation in what follows.

Hybridizing a quantum algorithm by pushing some of the complexity onto a classical computer allows one to only run a portion of the computation on the (error-prone) quantum computer. Excellent examples of this strategy are variational hybrid quantum-classical algorithms (VHQCAs)~\cite{mcclean2016theory}. VHQCAs only employ a quantum computer to evaluate a cost function that depends on the parameters of a quantum gate sequence and then leverage a classical optimization routine to minimize the cost and hence train the parameters. The most famous VHQCA is the variational quantum eigensolver (VQE)~\cite{peruzzo2014VQE}, where the cost function is the energy for some Hamiltonian and hence the goal is to prepare the ground state. VHQCAs have been proposed for many other applications~\cite{farhi2014QAOA, johnson2017qvector, romero2017quantum, larose2018, arrasmith2019variational, cerezo2019variational, jones2019variational, yuan2018theory, li2017efficient, kokail2019self, Khatri2019quantumassisted, jones2018quantum, heya2018variational,carolan2019variational}.

Another strategy for error mitigation is to find quantum circuits or quantum algorithms that are inherently noise resilient. Circuits for quantum error correction~\cite{devitt2013quantum, fowler2012surface}, of course, have this property of inherent noise resilience, and in fact, such circuits are resilient to all types of noise on a subset of the qubits. More generally, one could ask whether a circuit is resilient to a particular kind of noise process. Hence, for every circuit, which aims to compute some quantity, one could ask what noise models do not affect the output of the circuit.

The two strategies just mentioned have an interesting intersection: researchers have observed that some VHQCAs have some inherent noise resilience. McClean et al.~\cite{mcclean2016theory} noted that coherent errors (e.g., systematic gate biases) can lead to a situation where the formal unitary $V(\vec{\alpha})$ specified by the parameters $\vec{\alpha}$ is different from the actual unitary that is physically implemented $\widetilde{V}(\vec{\alpha})$. This error is correctable if there exists a vector $\vec{\beta}$ such that one can physically implement the unitary $\widetilde{V}(\vec{\alpha} + \vec{\beta})$ within one's ansatz, with the condition that $\widetilde{V}(\vec{\alpha} + \vec{\beta}) = V(\vec{\alpha})$. If this condition is satisfied, then one could still physically achieve the minimum value of the cost function, where the minimum value would be associated with different parameters than one would have in the noiseless case. We refer to this kind of noise resilience as {\it Cost Value Resilience}, since the value of the cost function at the global minimum is unaffected by the noise. Cost Value Resilience is important, e.g., if one is interested in estimating the ground state energy of a Hamiltonian with VQE.

In this work, we report on a different kind of noise resilience for VHQCAs. Instead of considering Cost Value Resilience, we consider the case where the optimal parameters are noise resilient, which we call {\it Optimal Parameter Resilience}. While Cost Value Resilience is related to coherent noise, we find that Optimal Parameter Resilience holds for certain kinds of incoherent noise, such as decoherence processes and readout errors. For certain applications, obtaining the correct optimal parameters is more important than obtaining the correct value of the cost function.

Quantum compiling~\cite{chong2017programming,haner2018software,venturelli2018compiling} is one of these applications. Compiling refers to transforming a high-level algorithm into a low-level machine code. For quantum compiling, it is crucial to do this transformation optimally, i.e., to keep the low-level code as short as possible, since errors accumulate with circuit depth. VHQCAs offer a promising framework for (optimal) quantum compiling. Three recent works introduced VHQCAs for quantum compiling, henceforth referred to as variational quantum compiling (VQC)~\cite{Khatri2019quantumassisted, jones2018quantum, heya2018variational}. In VQC one trains the parameters $\vec{\alpha}$ of a short-depth gate sequence $V(\vec{\alpha})$ such that it is close to a target unitary $U$. Here, some distance measure between $V(\vec{\alpha})$ and $U$ serves as the cost function and is efficiently evaluated on a quantum computer, while a classical optimizer adjusts the parameters $\vec{\alpha}$ to minimize the cost. VQC could be an important tool for NISQ computing since it could optimally shrink the depth of quantum circuits. However, a potential issue is that one needs to put the target unitary $U$ on the NISQ device, and hence the target itself is noisy or defective. Furthermore, there are noise sources in other parts of the cost-evaluation circuit. All of these may lead to a defective optimal $V(\vec{\alpha})$, with the noise effectively compiled into $V(\vec{\alpha})$.

Addressing these concerns, our main results are rigorous theorems stating that many different types of noise during cost evaluation do not affect the optimal $V(\vec{\alpha})$. For example, we show that VQC is resilient to measurement noise (readout error). We also show resilience to incoherent gate noise and decoherence processes, such as Pauli channels and non-unital Pauli channels, acting at specific times during the cost-evaluation circuit. In addition to these analytical results, we implement VQC on IBM's noisy quantum simulator~\cite{cross2017open} (which simulates their quantum hardware) for several quantum gates: quantum Fourier transform, Toffoli, and W-state preparation. In each case, we observed significant noise resilience (even more resilience than what is explained by our theorems) such that we effectively learned the true optimal values of $\vec{\alpha}$ despite the noise.

Finally, we speculate that the resilience phenomenon that we demonstrate for VQC may be more general, potentially applying to other VHQCAs. For example, we discuss the potential for seeing this resilience for VQE, and as a warm-up for the reader, we give a simple example in the next section where VQE exhibits Optimal Parameter Resilience. We also establish in the Discussion section that VQC is a special case of VQE, and hence our main results can be viewed as being relevant to VQE.

\section{Warm-up: Simple VQE example}\label{sec:VQE}

Here we show that VQE~\cite{peruzzo2014VQE} exhibits Optimal Parameter Resilience (OPR) to uncorrelated measurement noise for a special class of Hamiltonians. VQE may exhibit OPR more generally, although the proof would certainly be more involved. Hence we consider here this special case for illustration  and leave the more general case for future work. 

Consider a Hamiltonian that is a sum of local Pauli operators
\begin{equation}
\label{eqn1}
    H = -\sum_{j=1}^n c^{(j)} \sigma^{(j)}_{{w(j)}} \,,
\end{equation}
where $\sigma^{(j)}_{w(j)} = U^{(j)}_{w(j)}  \sigma^{(j)}_{z} (U^{(j)}_{w(j)})\ad $ is a local operator on qubit $j$ that is unitarily equivalent to the Pauli $z$ operator $\sigma^{(j)}_{z}$.
Physically, this Hamiltonian arises for a system of $n$ non-interacting spin-1/2 particles in a non-uniform (i.e., $j$-dependent) magnetic field. Without loss of generality, one can take the $ c^{(j)}$ coefficients to be non-negative (i.e., absorb any negativity into the definition of the Pauli operator). The ground state $\ket{\psi_0}$ of $H$ has a tensor product form: $\ket{\psi_0} = \bigotimes_{j=1}^n \ket{{w(j)}_{+}}$, where $\ket{{w(j)}_{+}}$ is the eigenvector of $\sigma^{(j)}_{{w(j)}}$ with the $+1$ eigenvalue.

Now suppose there is measurement noise in the cost-evaluation circuit. In the ideal case, one measures
$\ave{H} = \sum_j c^{(j)} \ave{\sigma^{(j)}_{{w(j)}}} = \sum_j c^{(j)} \ave{U^{(j)}_{w(j)}  \sigma^{(j)}_{z} (U^{(j)}_{w(j)})\ad}$ by applying $(U^{(j)}_{w(j)})\ad$ on the $j$-th qubit and measuring it on the standard basis to estimate $\ave{\sigma^{(j)}_{{w(j)}}}$. Then, by performing classical post-processing we compute the weighted sum in $\ave{H}$. However, with measurement noise, the $\sigma^{(j)}_{z}$ operator gets replaced by $\widetilde{\sigma}^{(j)}_{z} = (p_{00}^{(j)} - p_{10}^{(j)})\dya{0} - (p_{11}^{(j)} - p_{01}^{(j)})\dya{1}$. Here, $p_{kl}^{(j)}$ is the probability to obtain the $k$ outcome when feeding in the $\ket{l}$ state on the $j$-th qubit. Hence, instead of measuring $\ave{\sigma^{(j)}_{{w(j)}}}$, one measures $\ave{\widetilde{\sigma}^{(j)}_{{w(j)}}}$ with $\widetilde{\sigma}^{(j)}_{{w(j)}} = U^{(j)}_{w(j)}  \widetilde{\sigma}^{(j)}_{z}  (U^{(j)}_{w(j)})\ad  $. In other words, the Hamiltonian $H$ gets replaced by an effective Hamiltonian: 
\begin{equation}
\label{eqn2}
  \widetilde{H} = -\sum_{j=1}^n c^{(j)} \widetilde{\sigma}^{(j)}_{{w(j)}}  \,. 
\end{equation}

The ground state of $\widetilde{H}$ is a tensor product of one-qubit states that are the eigenvectors of $\widetilde{\sigma}^{(j)}_{{w(j)}}$ with the largest eigenvalue. Suppose we assume that $p_{00}\jj+ p_{11}\jj > p_{01}\jj + p_{10}\jj $ for all $j$, which means that the probability of getting the correct outcome is greater than the probability for getting the wrong outcome. With this assumption, the largest eigenvalue of $\widetilde{\sigma}^{(j)}_z$ is associated with the $\ket{0}$ state, and hence the largest eigenvalue of $\widetilde{\sigma}^{(j)}_{{w(j)}}$ is associated with $\ket{{w(j)}_{+}}$. Therefore, despite the measurement noise, one still finds that the ground state is $\ket{\psi_0} = \bigotimes_{j=1}^n \ket{{w(j)}_{+}}$. This implies that one would still learn the correct optimal parameters of the state-preparation circuit if one implemented VQE for this Hamiltonian.

\section{Background: Variational Quantum Compiling}\label{sec:background}

Let us now move on to Variational Quantum Compiling (VQC). VQC was first introduced in Ref.~\cite{Khatri2019quantumassisted}, under the name of Quantum-Assisted Quantum Compiling (QAQC). Two later works further investigated VQC~\cite{jones2018quantum, heya2018variational} with slightly different approaches. Since we are attempting to unite these works~\cite{Khatri2019quantumassisted,jones2018quantum, heya2018variational} under one umbrella, we are proposing the name VQC (instead of QAQC) as a unifying term.

There are two overarching approaches to VQC. One is to compile the full unitary matrix $U$ by considering the action of $U$ on all input states (or an informationally complete set of states)~\cite{Khatri2019quantumassisted, heya2018variational}. The other is to compile only a particular column of the matrix $U$ by considering the action of $U$ on a fixed input state~\cite{Khatri2019quantumassisted,jones2018quantum}.
The benefit of the first approach is that it is fully general, applying even when one does not know what the input state to $U$ will be (for example, if $U$ occurs in the middle of one's quantum algorithm). The benefit of the second approach is that, when the input state is known, it could lead to a shorter-depth compilation since it does not require compilation of the entire unitary matrix.

\subsection{Full unitary matrix compiling}

Full Unitary Matrix Compiling (FUMC) was treated in detail in Ref.~\cite{Khatri2019quantumassisted}. This work introduced cost functions based on the entanglement fidelity and proposed quantum circuits to quantify the cost based on the overlap between maximally entangled states. A slightly different but equivalent approach was employed in Ref.~\cite{heya2018variational}. We focus on the approach of Ref.~\cite{Khatri2019quantumassisted} in what follows.

\begin{figure}[t]
    \centering
    \includegraphics[width=\columnwidth]{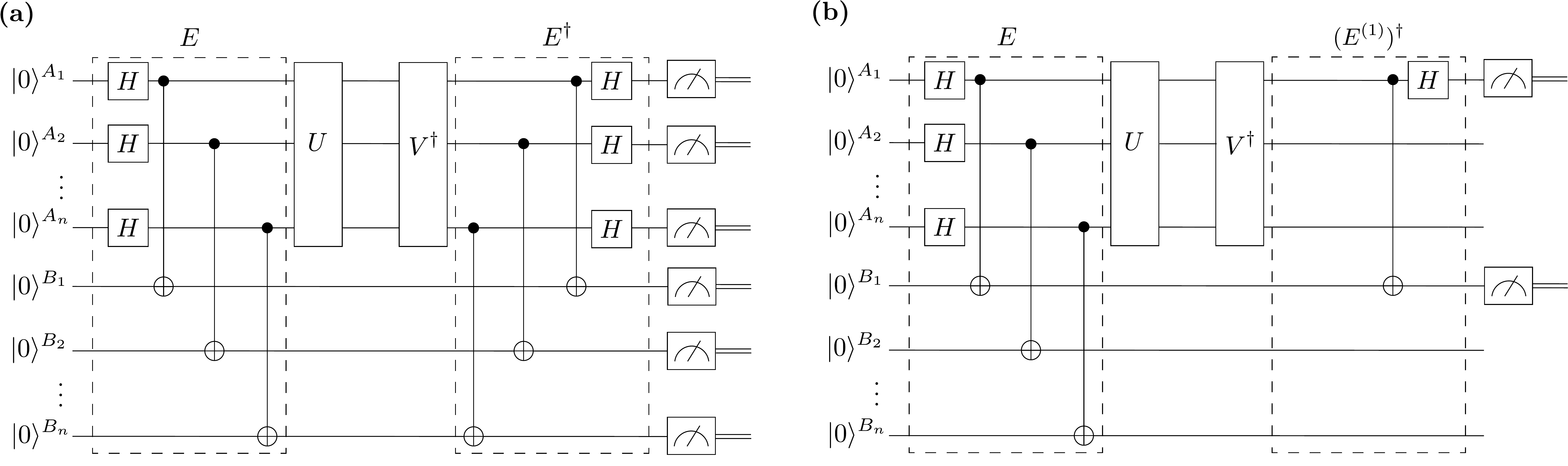}
    \caption{Circuits for cost evaluation in full unitary matrix compiling. (a) The Hilbert-Schmidt Test (HST). An entangling gate $E$, consisting of Hadamards and CNOTs, prepares a maximally entangled state between systems $A$ and $B$. Then a target unitary $U$ is applied on $A$, which is followed by a trainable unitary $V^{\dagger}$. Finally, a measurement in the Bell basis is performed by applying the adjoint of $E$, followed by a standard basis measurement. This circuit computes the Hilbert-Schmidt inner product between $U$ and $V$, as the probability to obtain the measurement outcome in which all $2n$ qubits are in the $\ket{0}$ state is  $F_{\HST}=(1/2^{2n})\vert \Tr(V^{\dagger}U)\vert^2$.   (b) The Local Hilbert-Schmidt Test (LHST), which is same as the HST circuit, except the disentangling gate $E^{\dagger}$ is applied only on one $A_jB_j$ pair of qubits (depicted here for the $A_1 B_1$ pair) and subsequently, the same two qubits are measured in the standard basis. The probability for the outcome associated with the $\ket{00}$ state is $F\jj_{\LHST}$ in \eqref{eqn6}.}
    \label{fig:HST-LHST-circuit}
\end{figure}

Two cost functions were considered in Ref.~\cite{Khatri2019quantumassisted}. One cost function $C_{\HST}$ quantifies the Hilbert-Schmidt inner product between the target unitary $U$ and the trainable gate sequence $V$, as follows:
\begin{equation}
\label{eqn4}
    C_{\HST}  = 1 - F_{\HST}\,, \quad \text{with} \quad  F_{\HST}=  |\Tr(V\ad U)|^2 / d^2\,,
\end{equation}
where $d = 2^n$ is the Hilbert-space dimension and $n$ is the number of qubits that $U$ acts on, and where we write $V$ instead of $V(\vec{\alpha})$ for simplicity. The circuit for computing $C_{\HST}$ is called the Hilbert-Schmidt Test (HST) and is shown in Fig.~\ref{fig:HST-LHST-circuit}(a). First, one prepares a maximally entangled state $\ket{\Phi}^{AB}$ by acting with a depth-two circuit $E$, then one applies $U$ followed by $V\ad$ on half of this maximally entangled state. Finally one measures the overlap with the original maximally entangled state $\ket{\Phi}^{AB}$ by applying $E\ad$ and quantifying the probability of the all-zeros measurement outcome. One can verify that this probability is equal to $F_{\HST}=|\Tr(V\ad U)|^2 / d^2$. This cost function is operationally meaningful since it is equivalent to the average fidelity $\overline{F}(U,V) = \int  |\mte{\psi}{V\ad U}|^2 d\psi$ between states acted upon by $U$ versus those acted upon by $V$, as follows \cite{horodecki1999general,nielsen2002simple}:
\begin{equation}
\label{eqn5}
    C_{\HST}  = \frac{d+1}{d}(1-\overline{F}(U,V))\,.
\end{equation}
Note that $C_{\HST}$ is faithful in that $C_{\HST} = 0$ iff $V = U$ (up to a global phase).

An alternative cost function~\cite{Khatri2019quantumassisted} is given by
\begin{equation}
\label{eqn6}
    C_{\LHST}  = 1 - F_{\LHST}\,, \quad \text{with} \quad  F_{\LHST}\  =  \frac{1}{n}\sum_{j=1}^n F\jj_{\LHST}\,,
\end{equation}
where $F\jj_{\LHST}$ is the probability of the $00$ measurement outcome in the Local Hilbert-Schmidt Test (LHST), which is the circuit shown in Fig.~\ref{fig:HST-LHST-circuit}(b). Note that $F_{\HST}$ is the entanglement fidelity for the quantum channel defined by $V\ad U$. On the other hand, $F\jj_{\LHST}$ is the entanglement fidelity for the quantum channel obtained from feeding into $V\ad U$ the maximally mixed state on $\overline{A}_{j}$ and then tracing over $\overline{A}_{j}$, where $\overline{A}_{j}$ consists of all qubits in $A$ other than $A_j$. As shown in Ref.~\cite{Khatri2019quantumassisted},
\begin{equation}
\label{eqn7}
    C_{\LHST}  \leq C_{\HST} \leq n C_{\LHST}\,,
\end{equation}
which implies that $C_{\LHST}$ is also a faithful cost function, i.e., $C_{\LHST} = 0$ iff $V = U$ (up to a global phase).

The overall cost function proposed by Ref.~\cite{Khatri2019quantumassisted} was a convex combination of $C_{\HST}$ and $C_{\LHST}$:
\begin{equation}
\label{eqn3}
    C(q) = q C_{\HST} + (1-q) C_{\LHST}\,.
\end{equation}
Here, $q$ is a free parameter with $0\leq q \leq 1$. The definition of $C(q)$ was motivated in Ref.~\cite{Khatri2019quantumassisted} by the fact that $C_{\HST}$ has a direct operational meaning (Eq.~\eqref{eqn5}) but it becomes difficult to train for large $n$ due to a vanishing gradient~\cite{cerezo2020cost}, whereas $C_{\LHST}$ is trainable but does not have a direct operational meaning. Hence one can take a weighted average of these two functions, where for small $n$ one can choose $q\approx 1$, while for large $n$ one can choose $q\approx 0$.

\subsection{Compiling with a fixed input state}

Fixed Input State Compiling (FISC) of a unitary matrix was introduced in \cite{jones2018quantum,Khatri2019quantumassisted} and treated in significant detail in \cite{jones2018quantum}. In this case, the goal is to train a gate sequence $V$ so that it has the same effect as a target unitary $U$ when acting on a given input state $\ket{\psi_0}$. For simplicity and due to its technological relevance, we will consider the case where $\ket{\psi_0} = \ket{\vec{0}}$ is the all-zero state, so that we are interested in training $V$ to satisfy  (up to a global phase):
\begin{equation}
    U\ket{\vec{0}}=V\ket{\vec{0}}\,,\quad\text{or equivalently}\quad W\ket{\vec{0}} = \ket{\vec{0}}\,,  \label{eq-SE-compiler}
\end{equation}
with $W = V\ad U$.  To quantify how far 
$W\ket{\vec{0}}$ is from the  state $\ket{\vec{0}}$, one can define the cost function
\begin{equation}
\label{eqn9} 
    C_{\LET}=1-G_{\LET}\,,
\end{equation}
where $G_{\LET}$ is the fidelity $F(\rho,\sigma)=\left(\Tr[\sqrt{\sqrt{\rho}\sigma\sqrt{\rho}}]\right)^2$ between these two states:
\begin{equation}
    G_{\LET} = 
    F(\dya{\vec{0}},W\dya{\vec{0}}W\ad) = |\matl{\vec{0}}{W}{\vec{0}}|^2=\Tr\left[P_{\vec{0}} W \dya{\vec{0}}W\ad \right]\,, \label{eq-f-SE}
\end{equation}
with $P_{\vec{0}}=\dya{\vec{0}}$ the projector onto the all-zero state. We employed the LET subscript here since we refer to the circuit used to quantify \eqref{eqn9} and \eqref{eq-f-SE} as the Loschmidt Echo Test (LET), shown in Fig.~\ref{fig:SE-LLE-circuit}(a). The Loschmidt echo~\cite{goussev2012loschmidt} refers to a forward and backward time evolution with the intent of recovering the initial state. This is analogous to the circuit in Fig.~\ref{fig:SE-LLE-circuit}(a) where one first evolves forward with $U$ and then attempts to undo that evolution with $V\ad$, to recover the initial state $\ket{\vec{0}}$. Hence the probability of the all-zero measurement outcome in Fig.~\ref{fig:SE-LLE-circuit}(a) is precisely $G_{\LET}$.

\begin{figure}[t]
    \centering
    \includegraphics[scale=0.3]{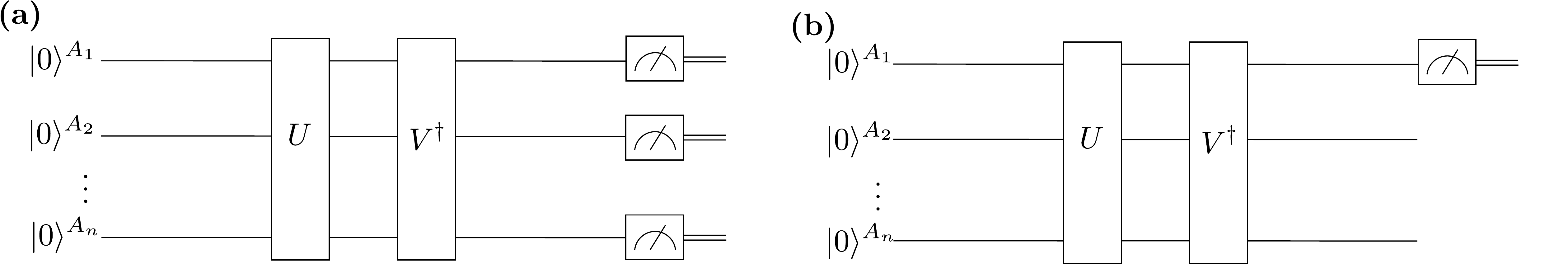}
    \caption{Circuits for cost evaluation in compiling with a fixed input state. (a) The Loschmidt Echo Test (LET). In this circuit, the probability of obtaining the measurement outcome in which all $n$ qubits are in the $\ket{0}$ state is   $G_{\LET}= |\matl{\vec{0}}{V\ad U}{\vec{0}}|^2$.  (b) The Local Loschmidt Echo Test (LLET), which is the same as the LET but only the $A_j$ qubit is measured. The probability that this qubit is in the $\ket{0}$ state is $G\jj_{\LLET}$ in \eqref{eq-LLET-func}.}
    \label{fig:SE-LLE-circuit}
\end{figure}

One can see that compiling with a fixed input state leads to more freedom and hence more solutions than full unitary matrix compiling. Note that $C_{\HST} =0$ iff $W = e^{i\phi}\id$ where $\phi$ is a global phase factor. On the other hand, $C_{\LET} =0$ iff $|\mted{\vec{0}}{W}{\vec{z}}| = |\mted{\vec{z}}{W}{\vec{0}}| = \delta_{\vec{z},\vec{0}}$ for all bit strings $\vec{z}$. Hence, for $W$ that achieve $C_{\LET} =0$, the $(n-1)\times(n-1)$ unitary principal submatrix of $W$ with matrix elements  $\mted{\vec{z}}{W}{\vec{z'}}$ (such that $\vec{z},\vec{z'}\neq\vec{0}$) remains completely arbitrary. This degeneracy of optima can simplify the optimization of $V$ as any of these optima will lead to $C_{\LET}=0$.

Analogous to the LHST cost for full unitary matrix compiling, one can define a cost function for fixed input state compiling that involves local observables:
\begin{equation}\label{eqn11}
    C_{\LLET}=1-G_{\LLET} = 1- \frac{1}{n}\sum_{j=1}^n G_{\LLET}^{(j)} \,, \qquad \text{with} \qquad G_{\LLET}^{(j)}=\Tr\left[\left(P_{0}^{A_j}\otimes\id^{\overline{A}_j}\right) W \dya{\vec{0}}W\ad \right]\,.
\end{equation}
Here, $P_{0}^{A_j}$ is the projector onto the zero state on the $A_j$ qubit, and $\id^{\overline{A}_j}$ denotes the identity on all qubits except $A_j$ and $n$ is the number of qubits. We call the circuit used to compute $C_{\LLET}$ the Local Loschmidt Echo Test (LLET), and this circuit is shown in Fig.~\ref{fig:SE-LLE-circuit}(b). Note that
\begin{equation}
    G_{\LLET}^{(j)}=\Tr_{A_j}\left[P_{0}^{A_j} \rho\jj \right]=\mte{0}{\rho\jj} = F(\dya{0},\rho\jj )\,, \label{eq-LLET-func}
\end{equation}
where $\rho\jj=\Tr_{\overline{A}_j}\left[W \dya{\vec{0}}W\ad \right]$. Hence $G\jj_{\LLET}$ corresponds to the probability of the zero outcome for the circuit in Fig.~\ref{fig:SE-LLE-circuit}(b). With a proof similar to that of \eqref{eqn7} one can show that
\begin{equation}
C_{\LLET}\leq C_{\LET}\leq n  C_{\LLET}\,,
\end{equation}
and hence $C_{\LLET}= 0$ iff $C_{\LET}=0$. Furthermore, one can define an overall cost function analogous to $C(q)$ in \eqref{eqn3},
\begin{equation}
\label{eqnCprimeQ}
C'(q) = q C_{\LET} + (1-q)C_{\LLET},
\end{equation}
which again is motivated by the fact that $C_{\LET}$ has a direct operational meaning but is difficult to train for large $n$, whereas the opposite is true for $C_{\LLET}$. Hence one can take $q\approx 1$ for small $n$ and $q\approx 0$ for large $n$.

\section{Noise Processes}\label{sec:noise}

In this work, we consider three different types of noise \cite{nielsen2010,wilde2017}: (1) decoherence noise, (2) gate noise, and (3) measurement noise. We now discuss how we mathematically model these three types of noise.

Let us start with decoherence. Physical models of decoherence often refer to $T_1$ and $T_2$ processes, which respectively pertain to thermal relaxation (energy dissipation) and dephasing (loss of phase coherence). These processes are typically modeled as local quantum channels acting independently on individual qubits. However, mathematically it is easier to deal with classes of quantum channels that act globally on sets of qubits (which can contain the independent local channels as a special case). In what follows, we define three types of global quantum channels: depolarizing noise, Pauli noise, and non-unital Pauli noise. It is worth noting that Pauli noise includes $T_2$ processes as a special case (i.e., the dephasing channel is a Pauli channel), and non-unital Pauli noise includes $T_1$ processes as a special case (i.e., the amplitude damping channel is a non-unital Pauli channel). Consider the following precise definitions.

\begin{definition}\label{def-GDN}
We define Depolarizing Noise (DN) as a Completely Positive Trace-Preserving (CPTP) map that maps an $n$-qubit state $\rho$ to the state $p\rho + (1-p) \id / (2^n)$.
\end{definition}

\begin{definition}\label{def-UPC}
We define Pauli Noise (PN) as a CPTP map $\PC$ whose superoperator is diagonal in the Pauli basis. In other words, its action on a Pauli operator $X^{\vec{l}} Z^{\vec{k}} \coloneqq X^{l_1}Z^{k_1}\otimes ... \otimes X^{l_n}Z^{k_n}$ is given by $\PC( X^{\vec{l}} Z^{\vec{k}} ) = c_{\vec{l}\vec{k}} X^{\vec{l}} Z^{\vec{k}}$, where $c_{\vec{0}\vec{0}}=1$. Furthermore, we assume that $c_{\vec{l}\vec{k}}  \geq 0$ for all $\vec{l}$ and $\vec{k}$, where $l_1, \dots, l_n, k_1, \dots, k_n \in \{0, 1\}$.
\end{definition}

\begin{definition}
We define Non-Unital Pauli Noise (NUPN) as a CPTP map $\PC_{\text{NU}}$ whose action on the identity is $\PC_{\text{NU}}(\id) = \id + \sum_{(\vec{l},\vec{k}) \neq (\vec{0},\vec{0}) } d_{\vec{l}\vec{k}}X^{\vec{l}}Z^{\vec{k}}$, and whose action on all other Pauli operators $X^{\vec{l}} Z^{\vec{k}}$ with $(\vec{l},\vec{k}) \neq (\vec{0},\vec{0})$ is given by $\PC_{\text{NU}}( X^{\vec{l}} Z^{\vec{k}} ) = c_{\vec{l}\vec{k}} X^{\vec{l}} Z^{\vec{k}}$. Furthermore, we assume that $c_{\vec{l}\vec{k}}  \geq 0$ for all $\vec{l}$ and $\vec{k}$.
\end{definition}

Next, we consider gate noise. While gate noise can involve coherent errors such as systematic gate bias, such errors are hardware-specific, and hence we focus on incoherent gate noise. We consider a simple model for gate noise in which every time a gate is implemented, a Pauli channel acts both before and after this gate. Furthermore, for generality, we allow these Pauli channels to act globally on all qubits, which serves as a model for cross-talk (where gates affect qubits on which they are intended to act trivially).

\begin{definition}
We define Pauli gate noise (PGN) as a simple noise model in which all gates are preceded and followed by global Pauli channels. In other words, for a gate $G$, instead of its action on a state $\rho$ being $G\rho G\ad$, we model its action as $\PC'(G \PC(\rho) G\ad)$ where $\PC$ and $\PC'$ are Pauli channels. Note that these Pauli channels act on all qubits, including qubits on which $G$ acts trivially. 
\end{definition}

Finally, we consider measurement noise, also known as readout error. For a single qubit, we model measurement noise as a classical bit-flip channel, where feeding in the standard basis state $\ket{l}$ leads to the $k$ outcome with probability $p_{kl}$. We allow for asymmetry in that one can have $p_{01}\neq p_{10}$, which is an important generality, e.g., when $T_1$ noise occurs during the measurement process. For multiple qubits, our measurement noise model is a tensor product of the aforementioned bit-flip channels, corresponding to uncorrelated measurement noise.

\begin{definition}\label{def-noisy-POVM}
We define Measurement Noise (MN) as a modification of the standard-basis POVM elements, which  are $\{P_0 = \dya{0}, P_1 =\dya{1}\}$ for a noiseless single qubit. With measurement noise, this POVM gets replaced by $\{\Pt_{0}, \Pt_{1} \}$, with $\Pt_{0} = p_{00} \dya{0} + p_{01} \dya{1}$ and $\Pt_{1} =  p_{10} \dya{0} + p_{11} \dya{1}$, where $p_{00}+ p_{10} = 1$, $p_{01}+ p_{11} = 1$, and $p_{kl}$ is the probability of getting the $k$ outcome given the $l$ input. Furthermore we assume that $p_{kk} > p_{kl}$ for $l\neq k$. Hence, for an $n$-qubit standard-basis measurement with measurement noise, we write the POVM element associated with the bit string $\vec{z} = (z_1,...,z_n)$ as 
\begin{equation}
\Pt_{\vec{z}} = \bigotimes_{j=1}^{n} \left( p_{z_j 0}^{(j)} \dya{0} + p_{z_j 1}^{(j)} \dya{1} \right)\,,    
\end{equation}
with $\sum_{z_j} p_{z_j 0}^{(j)} = 1$ and $\sum_{z_j} p_{z_j 1}^{(j)} = 1$, and we assume that $p_{z_jz_j}^{(j)} > p_{z_jl}^{(j)} $ for $l\neq z_j$. 
\end{definition}

\section{Main Results}

Before proceeding to the main results we first define two versions of Optimal Parameter Resilience (OPR), i.e., of  learning the correct gate sequence $V$ despite various sources of noise, which we refer to as strong-OPR and weak-OPR.

\begin{definition}\label{def:OPR}
Let $\VB_d$ be the set of $d\times d$ unitary matrices. Let $C_{\QC}(V)$ be a cost function of $V$ with $V\in \VB_d$, and suppose that $C_{\QC}(V)$ can be evaluated using a quantum circuit denoted $\QC$. Let $\Ct_{\QC}(V)$ denote the noisy version of $C_{\QC}(V)$, i.e., the corresponding function whenever the circuit $\QC$ is run in the presence of some noise process $\NC$. Let $\VB_d^{\opt}$ and $\widetilde{\VB}_d^{\opt}$ respectively denote the sets of unitaries that optimize $C_{\QC}(V)$ and $\Ct_{\QC}(V)$, i.e.,
\begin{align}
    \label{eqnOptimalSet}
    \mathbb{V}_d^{\opt} &= \{V' \in \VB_d : C_{\QC}(V') = \min_{V \in \VB_d} C_{\QC}(V)\}\,,\\
\widetilde{\VB}_d^{\opt} &= \{V' \in \VB_d : \Ct_{\QC}(V') = \min_{V \in \VB_d} \Ct_{\QC}(V)\}\,.
    \label{eqnOptimalNoisySet}
\end{align}
We say that $C_{\QC}(V)$ exhibits strong-OPR to $\NC$ if $ \widetilde{\VB}_d^{\opt} = \VB_d^{\opt}$. We say that $C_{\QC}(V)$ exhibits weak-OPR to $\NC$ if $ \widetilde{\VB}_d^{\opt} \subseteq \VB_d^{\opt}$.
\end{definition}

\begin{figure}[!t]
    \centering
    \includegraphics[width=\columnwidth]{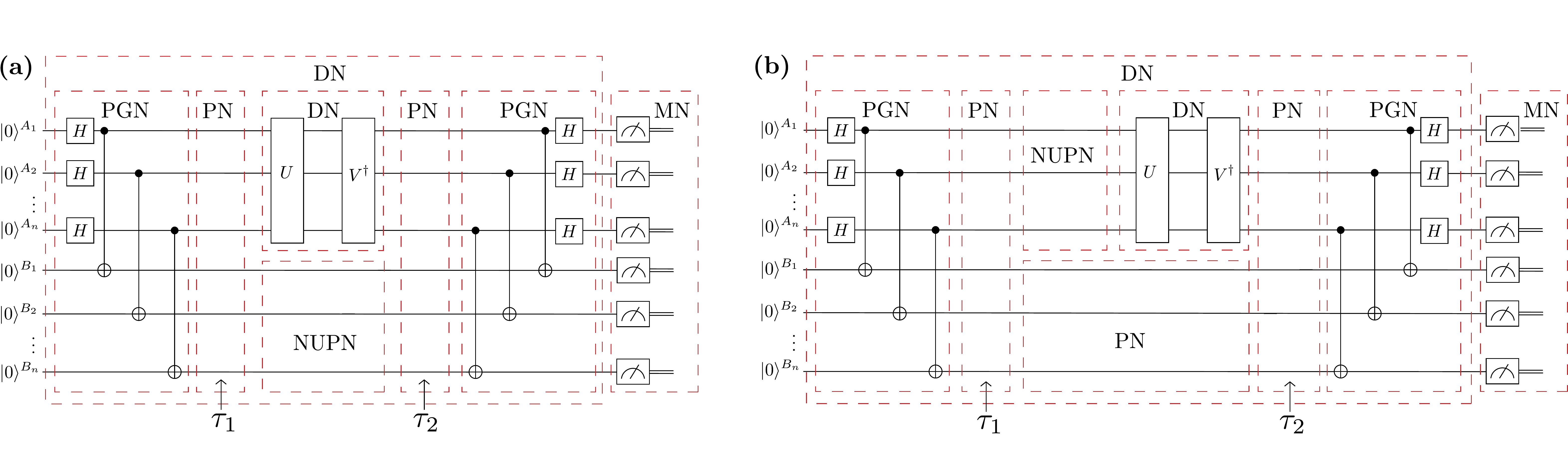}
\caption{Schematic diagram of: (a) Noise Model 1 of Definition \ref{def:HST-noise-model-1}, and (b) Noise Model 2 of Definition \ref{def:HST-noise-model-2}. The following acronyms are employed: Depolarizing Noise (DN), Pauli Gate Noise (PGN), Pauli Noise (PN), Non-Unital Pauli Noise (NUPN), and Measurement Noise (MN). Red dashed boxes indicate the time period and the qubits on which the noise process acts. Time $\tau_1$ ($\tau_2$) corresponds to the time immediately before (after) the action of the unitary $V\ad U$. While both panels show the HST, these noise models are also applicable to the LHST, provided one replaces $E\ad$ with $(E\jj)\ad$.}
    \label{fig:noisy-HST-circuit}
\end{figure}

\subsection{Noise Resilience of Full Unitary Matrix Compiling}

Let us begin with Full Unitary Matrix Compiling (FUMC). Figure~\ref{fig:noisy-HST-circuit} shows the two noise models that we will consider for FUMC. As shown in this figure, $\tau_1$ and $\tau_2$ are respectively defined as the times just before and just after the application of $V\ad U$. We note that the noise models considered in Fig.~\ref{fig:noisy-HST-circuit} capture fairly well the physical noise that is present in, e.g., superconducting-qubit quantum computers, with the exception that only depolarizing noise is allowed during the action of $V\ad U$. We make this simplification for ease of analysis, although our numerics in Sec.~\ref{secImplementations} relax this assumption.

Consider the following definition for the noise model depicted in Fig.~\ref{fig:noisy-HST-circuit}(a).

\begin{definition}\label{def:HST-noise-model-1}
We define  Noise Model 1 to be the following noise process during the HST circuit: (1) global depolarizing noise acting continuously throughout the circuit, (2) global Pauli noise at times $\tau_1$ and $\tau_2$, (3) global depolarizing noise on system A acting continuously in between $\tau_1$ and $\tau_2$, (4) global non-unital Pauli noise  on system $B$ acting continuously in between $\tau_1$ and $\tau_2$, (5) Pauli gate noise during $E$ and $E\ad$, and (6) measurement noise. We also use the term  Noise Model 1 when the same noise model acts during the LHST circuit, provided one replaces $E\ad$ with $(E\jj)\ad$.
\end{definition}

We now state our first main result. The proof of this result is given in Appendix~\ref{sec:proof-thm-1}, with some useful preliminaries and lemmas given in Appendices~\ref{sec:prem}--\ref{sec:meas-noise-FUMC}.

\begin{theorem}
\label{thm1}
The cost functions $C_{\HST}$ and $C_{\LHST}$ exhibit strong-OPR to  Noise Model 1 in Definition \ref{def:HST-noise-model-1}.
\end{theorem}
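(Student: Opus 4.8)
The plan is to reduce the claim of strong-OPR to a single structural statement: that the noisy fidelity $\Ft_{\HST}(V)$ (writing $\Ft$ for the noisy version of $F_{\HST}$) is maximized over $V\in\VB_d$ \emph{exactly} when $W \coloneqq V\ad U \propto \id$. Since the noiseless cost $C_{\HST}$ is faithful, $\VB_d^{\opt} = \{e^{i\phi}U : \phi\in\mathbb{R}\}$, which is precisely the set with $W\propto\id$; so establishing that the noisy optimum occurs at, and only at, $W\propto\id$ gives $\widetilde{\VB}_d^{\opt} = \VB_d^{\opt}$. Concretely, I would compute $\Ft_{\HST}$ in closed form in the Pauli (Heisenberg) picture and show it takes the shape
\[
\Ft_{\HST}(V) = \frac{1}{d^2}\sum_{\mu} r_\mu\,\bigl|\Tr(W P_\mu)\bigr|^2 ,
\]
where $\{P_\mu\}_{\mu=1}^{d^2}$ is the $n$-qubit Pauli basis and the weights $r_\mu\geq 0$ depend on the noise parameters but \emph{not} on $V$, with $r_{\id}$ strictly the largest.

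The computation would proceed by propagating each noise channel onto this Pauli decomposition using three tools. First, the ricochet identity $(P\ot\id)\ket{\Phi} = (\id\ot P\trp)\ket{\Phi}$ lets me convert the global Pauli noise at $\tau_1$ and $\tau_2$ (and the global depolarizing noise, viewed as a symmetric Pauli channel with $c_{\id}=1$ and $c_\mu=p$ otherwise) into Pauli channels acting effectively on system $A$ alone; their net effect on the ideal signal $\Tr(W)/d$ is to replace it by the incoherent mixture, over Pauli pairs $(P_a,P_b)$ drawn from the two channels, of the shifted signals $\Tr(P_b W P_a)/d = \Tr(WP_\mu)/d$ (with $P_\mu\propto P_a P_b$ a single Pauli by cyclicity). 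Second, since $E$ is Clifford, the Pauli gate noise flanking $E$ and $E\ad$ commutes through to Paulis and merges with the $\tau_1/\tau_2$ noise, only renormalizing the weights $r_\mu$. Third, the noisy POVM element factorizes as $\Pt_{\vec{0}}=\bigotimes_j(\alpha_j\id + \beta_j Z)$ with $\alpha_j = p^{(j)}_{00}+p^{(j)}_{01}>0$ and $\beta_j = p^{(j)}_{00}-p^{(j)}_{01}>0$ (the last inequality being exactly the assumption $p_{kk}>p_{kl}$ in Definition~\ref{def-noisy-POVM}), so measurement noise likewise only reweights the Pauli sum with strictly positive coefficients. The one algebraic identity that drives everything is Pauli completeness together with unitarity of $W$: $\sum_\mu |\Tr(WP_\mu)|^2 = d\,\Tr(W\ad W) = d^2$, a $V$-independent constant.

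Given the displayed form, the maximization argument is short. Writing $r'_{\max}=\max_{\mu\neq \id} r_\mu$ and using $\sum_\mu|\Tr(WP_\mu)|^2 = d^2$,
\[
\Ft_{\HST}(V) \leq \frac{1}{d^2}\Bigl[(r_{\id}-r'_{\max})\,|\Tr W|^2 + r'_{\max}\,d^2\Bigr],
\]
which, since $r_{\id} > r'_{\max}$, is strictly increasing in $|\Tr W|^2$ and hence is maximized only at $|\Tr W|^2 = d^2$, i.e.\ $W\propto\id$; conversely $W\propto\id$ saturates the bound. This pins $\widetilde{\VB}_d^{\opt}$ to $\{W\propto\id\}=\VB_d^{\opt}$, giving strong-OPR for $C_{\HST}$. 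For $C_{\LHST}$ I would run the identical argument on each local term $F^{(j)}_{\LHST}$, replacing the global Pauli completeness relation by its single-pair ($A_jB_j$) analogue and using $C_{\LHST}=0 \iff C_{\HST}=0$ from \eqref{eqn7} to identify the optimal set.

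The main obstacle I anticipate is the bookkeeping for the non-unital Pauli noise on $B$ together with the measurement noise. Because a non-unital channel sends $\id\mapsto \id + \sum_{(\vec{l},\vec{k})\neq (\vec{0},\vec{0})} d_{\vec{l}\vec{k}} P_{\vec{l}\vec{k}}$, it injects Pauli terms on $B$ that are absent in the unital analysis; combined with the $Z$-components of $\Pt_{\vec{0}}$ on the $B$ qubits and the Bell-state correlations, these could a priori generate contributions that are \emph{linear} in $W$ (rather than the benign quadratic $|\Tr(WP_\mu)|^2$) or that carry residual $V$-dependence into the background. The crux of the proof is to show that every such term either vanishes by Pauli orthogonality against the entangled-state structure (since $W$ acts only on $A$ while the extra operators live on $B$) or collapses into a $V$-independent constant, so that all $V$-dependence is confined to the $|\Tr(WP_\mu)|^2$ terms and the strict domination $r_{\id}>r'_{\max}$ survives. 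Carefully verifying this cancellation, and that the accumulated noise never destroys the gap $r_{\id}-r'_{\max}>0$, is where the real work lies.
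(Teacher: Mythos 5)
Your overall strategy coincides with the paper's: propagate every noise process into the Pauli basis in the Heisenberg picture, use the Clifford structure of $E$ to absorb the Pauli gate noise into effective Pauli channels at $\tau_1$ and $\tau_2$ (this is exactly the paper's Lemma on Clifford conjugation together with its computation of the noisy entangler and the effective noisy POVM), reduce the $V$-dependent part of $\Ft_{\HST}$ to a weighted Pauli quadratic form in $W=V\ad U$, and show the maximum sits exactly at $W\propto\id$. Where you genuinely diverge is the endgame. The paper arrives at $f(V)=\sum_{\mu\neq\id}\kappa_\mu\Tr[P_\mu WP_\mu W\ad]$ and bounds it by Cauchy--Schwarz, with saturation at $W\propto\id$ provided $\kappa_\mu\geq0$; you work with the dual parametrization $\sum_\mu r_\mu|\Tr(WP_\mu)|^2$ and invoke the completeness identity $\sum_\mu|\Tr(WP_\mu)|^2=d\,\Tr(W\ad W)=d^2$. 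The two forms are equivalent: the quantities $\Tr[P_\mu WP_\mu W\ad]/d$ and $|\Tr(WP_\nu)|^2/d^2$ are related by a signed (Walsh--Hadamard-type) sum over commutation relations, and your $r_\mu\geq0$ is the statement that the accumulated $\kappa$-coefficients are eigenvalues of a genuine composed Pauli channel --- which they are, being products of eigenvalue vectors of CPTP Pauli channels together with the strictly positive vector obtained by pushing $\bigotimes_j(\alpha_j\id+\beta_j\sigma_z)$ through $E$. Your version buys a cleaner ``only if'' direction: the paper's saturation claim is stated loosely and does not carefully exclude other saturating $W$, whereas your condition $r_{\id}>\max_{\mu\neq\id}r_\mu$ pins the maximizer uniquely to $W\propto\id$, which is what strong-OPR actually requires.

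Two caveats. First, the bookkeeping you defer is precisely what the paper's appendix carries out, and it resolves the way you predict: the non-unital channels act on $B$ while $W$ acts on $A$, so every injected Pauli term rides on $\id_A$ (since $\WC_i(\id_A)=\id_A$ at each time step) and contributes only a $V$-independent constant after tracing against the POVM; the continuous global depolarizing noise is handled by a separate unitality lemma. So this is a completable plan rather than a logical hole. Second, the strict dominance $r_{\id}>\max_{\mu\neq\id}r_\mu$ is a genuine non-degeneracy assumption, not a consequence of Noise Model 1 as defined: the Pauli channel applying $\sigma_x$ on one qubit with probability $1/2$ at both $\tau_1$ and $\tau_2$ satisfies Definition 2 (its superoperator eigenvalues are $0$ or $1$), yet it yields $r_{\id}=r_{\sigma_x}$, so both $W\propto\id$ and $W\propto\sigma_x\otimes\id$ maximize the noisy cost and strong-OPR fails. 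The paper's proof has the same blind spot (it quietly assumes $\kappa_\mu\geq0$ and never rules out such degeneracies); your formulation at least makes the required condition explicit.
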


Note that this theorem also implies that $C(q) = q C_{\HST} + (1-q) C_{\LHST}$ exhibits strong-OPR to Noise Model 1, for all values of $q$. This is because the set $\VB_d^{\opt} = \widetilde{\VB}_d^{\opt}$ defined in \eqref{eqnOptimalSet} and \eqref{eqnOptimalNoisySet} is the same for $C_{\HST}$ and $C_{\LHST}$ functions. Hence this same set is optimal for $C(q)$.

Consider the implications of Theorem~\ref{thm1}. First, this theorem implies that FUMC is resilient to the measurement noise model in Definition~\ref{def-noisy-POVM}. Second, FUMC is completely resilient to Pauli gate noise during the entangling and disentangling gates, $E$ and $E\ad$. Note that this Pauli gate noise is global and hence accounts for cross talk. Third, FUMC is resilient to global depolarizing noise acting continuously throughout the circuit, as well as global Pauli noise acting at the specific times $\tau_1$ and $\tau_2$. Fourth, FUMC is resilient to depolarizing noise acting on system $A$ and non-unital Pauli noise acting on system $B$, provided that each of these process act (possibly continuously) during the time interval between $\tau_1$ and $\tau_2$. We emphasize that Pauli noise includes dephasing channels ($T_2$ noise) as a special case, while non-unital Pauli noise includes the amplitude damping channel ($T_1$ noise) as a special case. Importantly, Theorem~\ref{thm1} states that FUMC is resilient to the general case where all of these noise processes occur together.

We now state our second main result 
(proven in Appendix~\ref{sec:proof-thm-2}),
which deals with the noise model in Fig.~\ref{fig:noisy-HST-circuit}(b).

\begin{definition}\label{def:HST-noise-model-2}
We define Noise Model 2 to be the following noise process during the HST circuit: (1) global depolarizing noise acting continuously throughout the circuit, (2) global Pauli noise at times $\tau_1$ and $\tau_2$, (3) 
global non-unital Pauli noise on system $A$ at time $\tau_1$, (4) global depolarizing noise on system $A$ acting continuously in between $\tau_1$ and $\tau_2$, (5) global Pauli noise on system $B$ acting continuously in between $\tau_1$ and $\tau_2$, (6) Pauli gate noise during $E$ and $E\ad$, and (7) measurement noise. We also use the term  Noise Model 2 when the same noise model acts during the LHST circuit, provided one replaces $E\ad$ with $(E\jj)\ad$.
\end{definition}

\begin{theorem}
\label{thm2}
The cost functions $C_{\HST}$ and $C_{\LHST}$ exhibit strong-OPR to Noise Model 2 in Definition \ref{def:HST-noise-model-2}.
\end{theorem}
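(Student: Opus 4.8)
The plan is to establish strong-OPR by showing that, under Noise Model 2, the noisy fidelity is a positive-affine function of the noiseless one,
$\tilde F_{\HST}(V) = \mu\, F_{\HST}(V) + \nu$, with $\mu>0$ and $\nu$ independent of $V$; an increasing affine map preserves the set of maximizers, so this forces $\widetilde{\VB}_d^{\opt}=\VB_d^{\opt}$. I would run the whole argument in the Pauli (superoperator) basis, expanding the entangled input as $|\Phi\rangle\langle\Phi| = d^{-2}\sum_P P_A\otimes\bar P_B$, since every ingredient of the model---depolarizing, (non-unital) Pauli, and the Clifford gates $E,E\ad$---acts transparently there. The $C_{\LHST}$ case would then follow by applying the identical analysis to each local test $F^{(j)}_{\LHST}$, with $E\ad$ replaced by $(E^{(j)})\ad$.

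Because Noise Model 2 differs from Noise Model 1 only in how noise is placed inside the window $[\tau_1,\tau_2]$, I would first import the lemmas proved for Theorem~\ref{thm1} in Appendices~\ref{sec:prem}--\ref{sec:meas-noise-FUMC} that control the pieces common to both models: the global depolarizing noise (an overall positive factor on the traceless sector plus a $V$-independent constant), the Pauli gate noise on $E$ and $E\ad$ (absorbed into an effective, Heisenberg-propagated measurement operator $M_{\text{eff}}$), and the measurement noise (which keeps the effective POVM $Z$-type). The structural fact I would extract and reuse is that $M_{\text{eff}}$ carries no Pauli support of the form $Q_A\otimes I_B$ with $Q_A\neq I$: the Clifford $E$ sends every $Z$-type Pauli to one with nontrivial $B$-component whenever its $A$-component is nontrivial, and Pauli gate noise only rescales Pauli coefficients. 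The unital Pauli noise on $B$ and depolarizing noise on $A$ in the window are then identical to, or strictly weaker special cases of, channels already handled in Theorem~\ref{thm1}; they rescale the signal terms $P_A\otimes\bar P_B$ by nonnegative factors while fixing $I_A\otimes I_B$.

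The genuinely new ingredient, and the step I expect to be the main obstacle, is the global non-unital Pauli noise on system $A$ at $\tau_1$, i.e.\ immediately before the variable unitary $W=V\ad U$. Its unital part merely rescales each signal term, but its non-unital part maps the identity component $I_A\otimes I_B$ to $I_A\otimes I_B + \sum_{(\vec l,\vec k)\neq(\vec 0,\vec 0)} d_{\vec l\vec k}\,(X^{\vec l}Z^{\vec k})_A\otimes I_B$, injecting spurious (traceless on $A$)$\otimes I_B$ terms. The crux is to show these contribute only to $\nu$, \emph{uniformly in $W$}. I would argue that such a term is thereafter conjugated only by $W$ on $A$ (the depolarizing on $A$, Pauli on $B$, and global Pauli at $\tau_2$ all keep it of the form (traceless on $A$)$\otimes I_B$, since the $A$-factor stays traceless and the $B$-factor stays $I_B$), and that conjugation by a unitary preserves tracelessness: $\Tr[W(X^{\vec l}Z^{\vec k})W\ad]=\Tr[X^{\vec l}Z^{\vec k}]=0$. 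Pairing against $M_{\text{eff}}$ and using its absence of $Q_A\otimes I_B$ support then makes every such contribution vanish for \emph{all} $W$. This uniformity in the optimization variable is the delicate point, precisely because $W$ is the quantity being varied; it is what lets the non-unital term drop out of the optimization entirely rather than merely shifting the optimum.

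Assembling the pieces, the $V$-dependent part of $\tilde F_{\HST}$ collapses to $\tfrac1d\sum_P \lambda_P\,\Tr[P W P W\ad]$ plus a constant, with weights $\lambda_P\geq 0$ built from the nonnegative Pauli coefficients and the positive depolarizing factors. Since $\Tr[PWPW\ad]\leq d$ with equality iff $[W,P]=0$, the noisy fidelity is maximized exactly when $W$ commutes with every $P$ of positive weight; under the positivity built into the noise definitions the positively-weighted Paulis generate a trivial commutant, pinning the maximizer at $W\propto\id$, i.e.\ $V=e^{i\phi}U$. This coincides with $\VB_d^{\opt}$, giving strong-OPR for $C_{\HST}$, and the per-$j$ version of the same computation---each $\tilde F^{(j)}_{\LHST}$ driven to its maximum precisely at $W\propto\id$, hence also their average---yields strong-OPR for $C_{\LHST}$.
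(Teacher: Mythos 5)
Your proposal is correct and follows essentially the same route as the paper's proof: expand everything in the Pauli basis, Heisenberg-propagate the noisy measurement through $E\ad$ and the Pauli gate noise to get an effective operator whose only term with trivial $B$-part is $\id_A\otimes\id_B$ (the paper's $\widetilde{\mathcal{E}}^{AB}(\widetilde{P}_{\vec{0}})$ with locked $A$/$B$ labels), kill the non-unital injection at $\tau_1$ by exactly this tracing-over-$B$ argument (the paper's $f_2(V)=g_2(V)=0$), and show the surviving signal $\sum_P\lambda_P\Tr[PWPW\ad]$ with $\lambda_P\geq 0$ is maximized at $W\propto\id$ (the paper does this via Cauchy--Schwarz rather than your commutant phrasing, with the same implicit positivity caveat on the noise coefficients). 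The one inaccuracy is your opening framing that $\widetilde{F}_{\HST}=\mu F_{\HST}+\nu$: since the weights $\lambda_P$ are Pauli-dependent, the noisy fidelity is generally \emph{not} an affine function of the noiseless one, but your executed argument never actually uses this and reaches the correct conclusion about the optimizer set regardless.
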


The implications of Theorem~\ref{thm2} are similar to those of Theorem~\ref{thm1}. The main difference is that Theorem~\ref{thm2} allows for non-unital Pauli noise on system $A$ at time $\tau_1$, at the expense of only allowing Pauli noise to act continuously on system $B$ between $\tau_1$ and $\tau_2$. The other aspects of the noise models treated by these two theorems are identical.

The above two theorems immediately imply several corollaries below. These corollaries establish resilience to noise models that are different and in some cases more general than the noise models previously considered, at the expense of possibly specializing the form of the unitary $W = V\ad U$. See Appendix~\ref{sec:proof-corol} for the proofs of all corollaries.

\begin{corollary}
\label{cor:Pauli-conjugation}
The cost functions $C_{\HST}$ and $C_{\LHST}$ exhibit strong-OPR to a noise model that includes the following: (1) all noise processes in  Noise Model 1, as well as
(2) a noise process during the implementation of $\mathcal{W} = \mathcal{W}_k \circ \cdots \circ \mathcal{W}_1 = \mathcal{V}^{\dagger}\circ\mathcal{U}$ (i.e., in the time interval between $\tau_1$ and $\tau_2$) in which global Pauli channels $\{\mathcal{P}^A_1, \dots, \mathcal{P}^A_k\}$ act on system $A$, such that the overall channel on $A$ is $\mathcal{P}^A_k\circ \mathcal{W}_k  \cdots \circ\mathcal{P}^A_1\circ \mathcal{W}_1$, provided that the following condition is satisfied:
\begin{align}\label{eq:pauli-channel-conjugation}
(\mathcal{P}^A_k\circ \mathcal{W}_k  \cdots \circ\mathcal{P}^A_1\circ \mathcal{W}_1)(\cdot) = (\mathcal{W}_k \circ \mathcal{W}_{k-1}\cdots \circ \mathcal{W}_1\circ \widehat{\mathcal{P}}^{A}) (\cdot)~.
\end{align}
Here $\widehat{\mathcal{P}}^{A}$ is also a Pauli channel, and the channels $\UC$, $\VC\ad$, and $\WC$ correspond to conjugating the state by the unitaries $U$, $V\ad$, and $W$, respectively.
\end{corollary}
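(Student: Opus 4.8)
The plan is to reduce the corollary to Theorem~\ref{thm1} by showing that the additional interspersed Pauli channels on system $A$ collapse, under the hypothesis \eqref{eq:pauli-channel-conjugation}, into a single global Pauli channel acting at time $\tau_1$, which can then be absorbed into the Pauli noise already present at $\tau_1$ in Noise Model 1. Concretely, I would first invoke \eqref{eq:pauli-channel-conjugation} to rewrite the noisy implementation of $\mathcal{W}$ on $A$, namely $\mathcal{P}^A_k\circ \mathcal{W}_k \circ \cdots \circ \mathcal{P}^A_1\circ \mathcal{W}_1$, as the clean channel $\mathcal{W}$ preceded by the single Pauli channel $\widehat{\mathcal{P}}^A$. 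Read in the circuit picture, this places $\widehat{\mathcal{P}}^A$ precisely at time $\tau_1$ and leaves the unitary $W = V\ad U$ implemented noiselessly between $\tau_1$ and $\tau_2$, as far as these extra channels are concerned.

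The next step uses the composition property of Pauli channels. A channel on $A$ that is diagonal in the Pauli basis, extended by the identity on $B$, is a global Pauli channel on $AB$; and the composition of two global Pauli channels is again a global Pauli channel, since composition multiplies the (non-negative) Pauli coefficients entrywise and preserves $c_{\vec{0}\vec{0}} = 1$. Hence $\widehat{\mathcal{P}}^A$ merges with the global Pauli noise already present at $\tau_1$ (item (2) of Noise Model 1) into a single global Pauli channel at $\tau_1$. After this merger the combined noise process is, term-for-term, exactly Noise Model 1 with a possibly different Pauli channel at $\tau_1$, so Theorem~\ref{thm1} gives $\widetilde{\VB}_d^{\opt} = \VB_d^{\opt}$ for both $C_{\HST}$ and $C_{\LHST}$, which is strong-OPR. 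The same argument carries over verbatim to the LHST upon replacing $E\ad$ with $(E\jj)\ad$.

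The step that requires the most care, and the main obstacle, is verifying that invoking \eqref{eq:pauli-channel-conjugation} is legitimate in the presence of the \emph{other} noise processes of Noise Model 1 that also act between $\tau_1$ and $\tau_2$, in particular the global depolarizing noise on $A$. The key observation that removes this obstacle is that the global depolarizing channel on $A$ is itself diagonal in the Pauli basis, so it commutes both with conjugation by the gates $\mathcal{W}_i$ and with each $\mathcal{P}^A_i$. One can therefore slide all the depolarizing factors past the gates and the extra Pauli channels, collecting them into a single depolarizing (hence Pauli) channel that is handled together with $\widehat{\mathcal{P}}^A$ at $\tau_1$; only after this decoupling does the hypothesis \eqref{eq:pauli-channel-conjugation}, stated for the $\mathcal{P}^A_i$ acting between clean gates, apply cleanly. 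I would also check that the non-unital Pauli noise and global depolarizing noise supported on $B$ (or acting globally in a Pauli-diagonal manner) commute with the $A$-supported manipulations, so that reorganizing the noise on $A$ does not disturb the rest of the circuit. Once these commutation checks are in place, the reduction to Theorem~\ref{thm1} is immediate.
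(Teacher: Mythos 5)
Your proposal is correct and follows essentially the same route as the paper: use the hypothesis \eqref{eq:pauli-channel-conjugation} to replace the interspersed Pauli channels by a single Pauli channel $\widehat{\mathcal{P}}^{A}$ acting at $\tau_1$, absorb it into the Pauli noise already allowed there, and invoke Theorem~\ref{thm1}. Your additional commutation checks (sliding the depolarizing factors past the $\mathcal{W}_i$ and $\mathcal{P}^A_i$, and decoupling the $B$-supported noise) are details the paper's one-sentence proof leaves implicit, but they do not change the argument.
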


The condition in \eqref{eq:pauli-channel-conjugation} implies that the overall channel consisting of global Pauli channels acting on system $A$ during the implementation of $\mathcal{W}$ is mathematically equivalent (although physically inequivalent) to a Pauli channel followed by $\mathcal{W}$. Therefore, Corollary \ref{cor:Pauli-conjugation} follows from Theorem \ref{thm1}. 

Consider the following implications of Corollary \ref{cor:Pauli-conjugation}. Unitaries corresponding to the Clifford group necessarily satisfy the condition in \eqref{eq:pauli-channel-conjugation}, as shown in Appendix~\ref{sec:prem}. Therefore, Corollary~\ref{cor:Clifford} below holds for any Clifford unitary $W$. Moreover, tensor-product unitaries satisfy this same condition provided that the noise is local depolarizing noise, and hence Corollary~\ref{cor:TensorProduct} below also follows from Corollary \ref{cor:Pauli-conjugation}.

\begin{corollary}
\label{cor:Clifford}
Let the $W=V\ad U$ gate sequence have the form $W = W^A_2 W^A_1$ with $W_1^A$ composed only of Clifford gates. Then the cost functions $C_{\HST}$ and $C_{\LHST}$ exhibit strong-OPR to a noise model that includes the following: (1) all noise processes in Noise Model~1, as well as  
(2) a noise process during the implementation of $\mathcal{W}^A_1 = \mathcal{W}_{1,k} \circ \cdots \circ \mathcal{W}_{1,1}$, in which global Pauli channels $\{\mathcal{P}^A_1, \dots, \mathcal{P}^A_k\}$ act on system $A$, such that the overall channel on $A$ is $\mathcal{P}^A_k\circ \mathcal{W}_{1,k}  \cdots \circ\mathcal{P}^A_1\circ \mathcal{W}_{1,1}$. 
\end{corollary}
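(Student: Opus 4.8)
The plan is to deduce Corollary~\ref{cor:Clifford} from Corollary~\ref{cor:Pauli-conjugation} by verifying that the Clifford structure of $W_1^A$ automatically enforces the commutation condition \eqref{eq:pauli-channel-conjugation}. First I would note that the extra noise acts only during $W_1^A = W_{1,k}\cdots W_{1,1}$, whereas $W_2^A$ is noiseless. Decomposing the full channel as $\mathcal{W}=\mathcal{V}\ad\circ\mathcal{U}=\mathcal{W}_2^A\circ\mathcal{W}_1^A$, I would fit this into the hypotheses of Corollary~\ref{cor:Pauli-conjugation} by assigning the trivial (identity) Pauli channel to every gate making up $W_2^A$. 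The overall $A$-system channel during $\mathcal{W}$ is then $\mathcal{W}_2^A\circ(\mathcal{P}^A_k\circ\mathcal{W}_{1,k}\cdots\circ\mathcal{P}^A_1\circ\mathcal{W}_{1,1})$, and note that $W_2^A$ need not itself be Clifford since no noise rides through it.

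The central step is the Clifford push-through. Because conjugation by any Clifford unitary sends each Pauli operator to a signed Pauli operator, for any Pauli channel $\mathcal{P}$ and any Clifford gate $W_{1,i}$ one has the identity $\mathcal{P}\circ\mathcal{W}_{1,i}=\mathcal{W}_{1,i}\circ\mathcal{P}'$, where $\mathcal{P}'$ is again a Pauli channel whose coefficients $c_{\vec{l}\vec{k}}$ are those of $\mathcal{P}$ permuted by the symplectic action of $W_{1,i}$ on the Pauli group. I would then apply this identity repeatedly, sliding each $\mathcal{P}^A_i$ leftward past all Clifford factors to its right and merging the resulting permuted Pauli channels into a single global Pauli channel $\widehat{\mathcal{P}}^A$ positioned at the front, i.e., at time $\tau_1$. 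This collapses the noisy $A$-channel to $\mathcal{W}_2^A\circ\mathcal{W}_1^A\circ\widehat{\mathcal{P}}^A=\mathcal{W}\circ\widehat{\mathcal{P}}^A$.

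This is exactly the right-hand side of \eqref{eq:pauli-channel-conjugation}, with $\widehat{\mathcal{P}}^A$ in the designated role, so Corollary~\ref{cor:Pauli-conjugation} applies verbatim and yields strong-OPR for $C_{\HST}$ and $C_{\LHST}$. Equivalently, since $\widehat{\mathcal{P}}^A$ now sits at $\tau_1$---where Noise Model~1 already tolerates a global Pauli channel---one could instead absorb $\widehat{\mathcal{P}}^A$ into that $\tau_1$ Pauli noise (composition of Pauli channels is Pauli) and invoke Theorem~\ref{thm1} directly.

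The main obstacle is the bookkeeping in the push-through: one must confirm that commuting a Pauli channel past a Clifford really produces a legitimate Pauli channel in the sense of Definition~\ref{def-UPC}, i.e., that the sign factors from Clifford conjugation cancel between the two sides (so no off-diagonal terms appear) and that the induced relabeling of the Pauli indices $(\vec{l},\vec{k})$ is a bijection preserving non-negativity of the coefficients. Once this single-gate identity is nailed down, the iteration over the Clifford factors and the final appeal to Corollary~\ref{cor:Pauli-conjugation} are entirely mechanical.
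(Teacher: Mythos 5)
Your proposal is correct and follows essentially the same route as the paper: the paper derives this corollary from Corollary~\ref{cor:Pauli-conjugation} by noting that Lemma~\ref{lem:clifford-conjugation} (the Clifford--Pauli push-through, which is exactly your single-gate identity $\mathcal{P}\circ\mathcal{W}_{1,i}=\mathcal{W}_{1,i}\circ\mathcal{P}'$ up to applying the lemma to the inverse Clifford) guarantees that the condition \eqref{eq:pauli-channel-conjugation} holds, so the corollary is a special case. The ``bookkeeping obstacle'' you flag is precisely the content of that lemma, which the paper proves separately, so nothing is missing.
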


\begin{corollary}
\label{cor:TensorProduct}
Let the $W=V\ad U$ gate sequence have the form $W = W_2^A W_1^A$ with $W_1^A =  W^{A'}_1 \otimes W^{A''}_1$ being a tensor product, i.e., $W$ is a tensor product up to a particular time. Then the cost functions $C_{\HST}$ and $C_{\LHST}$ exhibit strong-OPR to a noise model that includes the following: (1) all noise processes in Noise Model 1, as well as (2) a noise process during the implementations of $\WC_1^{A'} = \WC_{1,k}^{A'} \circ \cdots \circ \WC_{1,1}^{A'}$ and $\WC_1^{A''} = \WC_{1,l}^{A''} \circ \cdots \circ \WC_{1,1}^{A''}$ in which local depolarizing channels $\{\DC^{A'}_{1,1}, \dots ,\DC^{A'}_{1,k}\}$ and  $\{\DC^{A''}_{1,1}, \dots ,\DC^{A''}_{1,l}\}$ act on subsystems 
$A'$ and $A''$, respectively, such that the overall channel on $A=A'A''$ is $(\DC^{A'}_{1,k}\circ \WC^{A'}_{1,k}\dots \DC^{A'}_{1,1}\circ \WC^{A'}_{1,1})\otimes(\DC^{A''}_{1,l}\circ \WC^{A''}_{1,l}\dots \DC^{A''}_{1,1}\circ \WC^{A''}_{1,1})$. 
\end{corollary}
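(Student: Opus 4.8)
The plan is to reduce Corollary~\ref{cor:TensorProduct} to Corollary~\ref{cor:Pauli-conjugation} by verifying that, under the stated tensor-product structure together with local depolarizing noise, the commutation condition \eqref{eq:pauli-channel-conjugation} is automatically satisfied. Since Corollary~\ref{cor:Pauli-conjugation} already delivers strong-OPR whenever the noise acting during $\WC=\VC\ad\circ\UC$ can be rewritten as a single Pauli channel applied before $\WC$, it suffices to massage the given noise into exactly that form and then invoke that corollary (and hence Theorem~\ref{thm1}).

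First I would record the single algebraic fact that does all the work: a depolarizing channel commutes with conjugation by any unitary acting on the same system. Writing $\DC(\rho)=p\rho+(1-p)\Tr(\rho)\,\id/d$ and $\WC_{1,i}^{A'}(\cdot)=W_{1,i}^{A'}(\cdot)(W_{1,i}^{A'})\ad$, one checks directly that $\DC^{A'}\circ\WC_{1,i}^{A'}=\WC_{1,i}^{A'}\circ\DC^{A'}$, because both $\Tr(\cdot)$ and $\id/d$ are invariant under unitary conjugation. I would also note the companion fact that the composition of two depolarizing channels on the same system is again depolarizing (with the product of the two parameters), so any number of them can be collected into one.

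Next I would apply these facts subsystem by subsystem. On $A'$ the noisy chain $\DC^{A'}_{1,k}\circ\WC^{A'}_{1,k}\cdots\DC^{A'}_{1,1}\circ\WC^{A'}_{1,1}$ can be rewritten, by commuting every $\DC^{A'}_{1,i}$ to the right past the unitary pieces and then merging them, as $\WC^{A'}_1\circ\widehat{\DC}^{A'}$, where $\WC^{A'}_1=\WC^{A'}_{1,k}\circ\cdots\circ\WC^{A'}_{1,1}$ and $\widehat{\DC}^{A'}$ is a single depolarizing channel on $A'$; the same holds on $A''$. Because the two local channels act on disjoint subsystems and the noise is confined to the tensor-product portion $W_1^A=W_1^{A'}\ot W_1^{A''}$, the overall channel on $A$ becomes
\begin{equation}
\WC_2^A\circ\big[(\WC_1^{A'}\circ\widehat{\DC}^{A'})\ot(\WC_1^{A''}\circ\widehat{\DC}^{A''})\big]=\WC_2^A\circ\WC_1^A\circ\big(\widehat{\DC}^{A'}\ot\widehat{\DC}^{A''}\big)=\WC\circ\widehat{\PC}^A\,,
\end{equation}
where $\widehat{\PC}^A=\widehat{\DC}^{A'}\ot\widehat{\DC}^{A''}$ is a Pauli channel on $A$, since a tensor product of depolarizing channels is diagonal in the Pauli basis with nonnegative coefficients and fixes the identity. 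This is precisely the right-hand side of \eqref{eq:pauli-channel-conjugation}, so Corollary~\ref{cor:Pauli-conjugation} applies and yields the claimed strong-OPR.

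The only real subtlety, and the reason the hypothesis restricts the interleaved noise to \emph{local depolarizing} channels rather than general local Pauli channels, is the commutation step: a generic Pauli channel does not commute with an arbitrary unitary on the same subsystem, so one could not slide it to the front of each $\WC_1^{A'}$ block. I would therefore emphasize that it is the conjugation-invariance of $\Tr(\cdot)\,\id/d$ that is special to depolarizing noise. I would also point out that $\WC_2^A$ is allowed to act jointly on $A'A''$ without causing any difficulty, because by the time it is applied all the noise has already been collected into the single Pauli channel $\widehat{\PC}^A$ sitting at the very start of the $A$-sequence, which is exactly the structure that Corollary~\ref{cor:Pauli-conjugation} requires.
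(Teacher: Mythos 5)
Your proposal is correct and follows essentially the same route as the paper: both establish that local depolarizing channels commute with the local unitary factors of $W_1^A = W_1^{A'}\otimes W_1^{A''}$ (the paper verifies this via the invariance of the partial trace under local unitary conjugation, you via the invariance of $\Tr(\cdot)\,\id/d$ on each subsystem), then push all the noise to the front and invoke Corollary~\ref{cor:Pauli-conjugation}. Your added remarks—that the collected depolarizing channels merge into a single one per subsystem and that their tensor product is a Pauli channel satisfying the hypothesis of Corollary~\ref{cor:Pauli-conjugation}—are useful explicit details that the paper leaves implicit.
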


The following corollary follows from Theorem~\ref{thm2} and is analogous to Corollary~\ref{cor:Pauli-conjugation}.

\begin{corollary}
\label{cor:non-unital-Pauli-conjugation}
The cost functions $C_{\HST}$ and $C_{\LHST}$ exhibit strong-OPR to the following noise model: (1) all noise processes in  Noise Model 2, as well as 
(2) a noise process during the implementation of $\mathcal{W} = \mathcal{W}_k \circ \cdots \circ \mathcal{W}_1 = \mathcal{V}^{\dagger}\circ\mathcal{U}$ (i.e., in the time interval between $\tau_1$ and $\tau_2$) in which global non-unital Pauli channels $\{\mathcal{P}^A_{\text{NU},1}, \dots, \mathcal{P}^A_{\text{NU},k}\}$ act on system $A$  such that the overall channel on $A$ is $\mathcal{P}^A_{\text{NU},k}\circ \mathcal{W}_k  \cdots \circ\mathcal{P}^A_{\text{NU},1}\circ \mathcal{W}_1$, provided that the following condition is satisfied:
\begin{align}\label{eq:non-unital-pauli-channel-conjugation2}
(\mathcal{P}^A_{\text{NU},k}\circ \mathcal{W}_k  \cdots \circ\mathcal{P}^A_{\text{NU},1}\circ \mathcal{W}_1)(\cdot) = (\mathcal{W}_k \circ \mathcal{W}_{k-1}\cdots \mathcal{W}_1\circ \widehat{\mathcal{P}}^A_{\text{NU}}) (\cdot)~,
\end{align}
where $\widehat{\mathcal{P}}^A_{\text{NU}}$ is also a non-unital Pauli channel.
\end{corollary}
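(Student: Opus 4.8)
The plan is to mirror the reasoning used for Corollary~\ref{cor:Pauli-conjugation}, now invoking Theorem~\ref{thm2} in place of Theorem~\ref{thm1}. The central observation is that hypothesis~\eqref{eq:non-unital-pauli-channel-conjugation2} is exactly the statement that lets us ``push'' the interleaved noise to the front of $\WC$. The left-hand side of~\eqref{eq:non-unital-pauli-channel-conjugation2} is the actual channel acting on system $A$ during $[\tau_1,\tau_2]$, namely the interspersed composition $\mathcal{P}^A_{\text{NU},k}\circ \mathcal{W}_k\cdots\circ\mathcal{P}^A_{\text{NU},1}\circ \mathcal{W}_1$, whereas its right-hand side factorizes this same superoperator as the clean unitary channel $\WC = \WC_k\circ\cdots\circ\WC_1$ preceded by a single non-unital Pauli channel $\widehat{\mathcal{P}}^A_{\text{NU}}$ acting at time $\tau_1$. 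Since $C_{\HST}$ and $C_{\LHST}$ depend only on the overall superoperator realized between $\tau_1$ and $\tau_2$, and not on the physical micro-structure of where the noise is inserted, the two (mathematically equal but physically inequivalent) sides of~\eqref{eq:non-unital-pauli-channel-conjugation2} produce identical noisy cost functions $\Ct_{\QC}$ and hence identical optimal sets $\widetilde{\VB}_d^{\opt}$.

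First I would substitute the right-hand side of~\eqref{eq:non-unital-pauli-channel-conjugation2} into the cost-evaluation circuit. This reduces the corollary's noise model to one in which, during $[\tau_1,\tau_2]$, system $A$ undergoes only the clean unitary $\WC$ preceded by the single non-unital Pauli channel $\widehat{\mathcal{P}}^A_{\text{NU}}$ at time $\tau_1$. Next I would absorb $\widehat{\mathcal{P}}^A_{\text{NU}}$ into the non-unital Pauli noise on $A$ already present at $\tau_1$ in Noise Model~2 (item (3) of Definition~\ref{def:HST-noise-model-2}), together with the $A$-part of the global Pauli noise at $\tau_1$ (item (2)). The key algebraic fact to record here is that the composition of non-unital Pauli channels is again a non-unital Pauli channel: the diagonal coefficients compose multiplicatively (remaining nonnegative as products of nonnegative numbers), while the off-diagonal ``non-unital part'' $d_{\vec{l}\vec{k}}$ accumulates additively, so the combined channel still satisfies the definition of NUPN. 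Consequently, the effective noise process is precisely Noise Model~2, with the depolarizing noise on $A$ in $[\tau_1,\tau_2]$ taken trivially (i.e.\ with $p=1$).

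With this reduction in hand, Theorem~\ref{thm2} applies directly: it guarantees $\widetilde{\VB}_d^{\opt} = \VB_d^{\opt}$ for both $C_{\HST}$ and $C_{\LHST}$ under Noise Model~2, which is exactly strong-OPR, completing the argument. The same replacement of $E\ad$ by $(E\jj)\ad$ handles the LHST case, just as in Theorem~\ref{thm2}.

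I expect the only genuinely delicate point to be the closure and coefficient bookkeeping in the second step: one must verify that composing $\widehat{\mathcal{P}}^A_{\text{NU}}$ with the pre-existing (non-unital) Pauli noise at $\tau_1$ yields a channel that still lies in the NUPN class assumed by Theorem~\ref{thm2}, in particular that the resulting map on the identity retains the form $\id + \sum_{(\vec{l},\vec{k})\neq(\vec{0},\vec{0})} d_{\vec{l}\vec{k}} X^{\vec{l}}Z^{\vec{k}}$ and that all diagonal coefficients $c_{\vec{l}\vec{k}}$ remain nonnegative. Everything else is a direct consequence of the mathematical equivalence encoded in~\eqref{eq:non-unital-pauli-channel-conjugation2} and of Theorem~\ref{thm2}.
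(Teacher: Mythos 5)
Your proposal is correct and follows essentially the same route as the paper, whose entire proof is the observation that condition \eqref{eq:non-unital-pauli-channel-conjugation2} renders the interleaved noise mathematically equivalent to a single non-unital Pauli channel acting at $\tau_1$ followed by the clean $\WC$, at which point Theorem~\ref{thm2} (which already admits non-unital Pauli noise on $A$ at $\tau_1$) applies. Your extra closure check is sound, with the small caveat that under composition the non-unital coefficients combine as $d^{(2)}_{\vec{l}\vec{k}} + c^{(2)}_{\vec{l}\vec{k}}d^{(1)}_{\vec{l}\vec{k}}$ rather than purely additively; the composite still has the required NUPN form.
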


Finally, we present a simple corollary of Theorem~\ref{thm1} based on the ricochet property of the standard Bell state. Note that the noise model in the following corollary is fairly simple but nonetheless physically distinct from those considered in Fig.~\ref{fig:noisy-HST-circuit}, since it allows for global non-unital Pauli noise to occur during the implementation of $W$.

\begin{corollary}\label{cor:nupn-bw-tau1-2}
The cost functions $C_{\HST}$ exhibits strong-OPR to the following noise model: (1) global depolarizing noise acting continuously throughout the circuit, (2) global non-unital Pauli noise on system $A$ at a fixed time in between $\tau_1$ and $\tau_2$. 
\end{corollary}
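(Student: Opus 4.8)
The plan is to evaluate the noisy Hilbert--Schmidt Test fidelity $\widetilde{F}_{\HST}$ directly as an entanglement fidelity, peel off the global depolarizing noise as a harmless affine rescaling, and then identify the remaining contribution with an instance of the system-$B$ non-unital-Pauli configuration that Theorem~\ref{thm1} already resolves. Let $\{K_m\}$ be Kraus operators for the global non-unital Pauli channel $\PC_{\text{NU}}$ acting on $A$, and write the implemented unitary as $W = V\ad U = W_2 W_1$, where $W_1$ is the part of $W$ applied before the fixed noise time and $W_2$ the part applied after. Because the HST measures the overlap of the output with the \emph{same} maximally entangled state $\ket{\Phi}^{AB}$ used at the input, and because $B$ carries no gates, the all-zeros probability is the entanglement fidelity of the effective channel on $A$, namely $\widetilde{F}_{\HST} = \tfrac{1}{d^2}\sum_m |\Tr(W_2 K_m W_1)|^2$.

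First I would dispose of the global depolarizing noise. It does not commute through $\PC_{\text{NU}}$ (which is non-unital), but this causes no harm: in every branch where the depolarizing channel resets the state to $\id/d^2$ at some step, the subsequent unitaries on $A$ leave $\id/d^2$ invariant and $\PC_{\text{NU}}$ maps it to $(\id + \sum_{(\vec{l},\vec{k})\neq (\vec{0},\vec{0})} d_{\vec{l}\vec{k}}\, X^{\vec{l}} Z^{\vec{k}})\otimes \id /d^2$, whose overlap with $\ket{\Phi}$ is a constant because $\Tr(X^{\vec{l}} Z^{\vec{k}}) = 0$ for every nonidentity Pauli. Hence $\widetilde{F}_{\HST} = p\, g(W_1 W_2) + \text{const}$ with $p > 0$ and $g(M) \coloneqq \tfrac{1}{d^2}\sum_m |\Tr(K_m M)|^2$, using cyclicity $\Tr(W_2 K_m W_1) = \Tr(K_m W_1 W_2)$. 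Since $p>0$, the optimizers of $\widetilde{F}_{\HST}$ are exactly those of $g(W_1 W_2)$.

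The core step invokes the ricochet property of the standard Bell state, which is the conceptual reason a non-unital Pauli channel on $A$ and one on $B$ yield the same HST fidelity. Concretely, a short computation using $\bra{\Phi}(B\otimes C)\ket{\Phi} = \tfrac{1}{d}\Tr(B\, C\trp)$ shows that the noise model consisting of global depolarizing noise plus the channel $\PC_{\text{NU}}$ placed on system $B$ between $\tau_1$ and $\tau_2$ (a special case of Noise Model~1 in which the remaining noises are set trivial) produces the HST fidelity $p\, g(W\trp) + \text{const}$. Theorem~\ref{thm1} asserts that this model exhibits strong-OPR, i.e.\ it is uniquely optimized at $V = e^{i\phi} U$; equivalently, $g(M)$ is uniquely maximized at $M = e^{i\phi}\id$. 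Applying this to $g(W_1 W_2)$ and using that $W_1 W_2 = e^{i\phi}\id$ if and only if $W = W_2 W_1 = e^{i\phi}\id$ (conjugation by a unitary cannot turn a scalar multiple of the identity into anything else), the maximizers of $\widetilde{F}_{\HST}$ are exactly the unitaries with $V = e^{i\phi} U$. This is the noiseless optimal set, so $\widetilde{\VB}_d^{\opt} = \VB_d^{\opt}$ and strong-OPR follows.

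The step I expect to be most delicate is making the reduction to Theorem~\ref{thm1} airtight in the presence of two complications at once: the global depolarizing channel does not commute past the non-unital channel, and the non-unital channel sits in the \emph{interior} of $W$ rather than at the boundary $\tau_1$. One must confirm that, despite this, the cost collapses to $p\, g(W_1 W_2) + \text{const}$ so that the single scalar fact extracted from Theorem~\ref{thm1} (unique maximization of $g$ at the identity) suffices. Finally, I would note why the corollary is stated for $C_{\HST}$ but not $C_{\LHST}$: the argument rests on ricocheting a channel across the full maximally entangled state on all $A_jB_j$ pairs, whereas in the local test only a single pair is disentangled before measurement, so the same global ricochet is unavailable.
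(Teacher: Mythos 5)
Your proof is correct and follows essentially the same route as the paper's: both split $W=W_2W_1$ around the noise insertion time, strip the global depolarizing noise down to an affine rescaling plus $W$-independent terms (the non-unital offset $\sum_{(\vec{g},\vec{h})\neq(\vec{0},\vec{0})} d_{\vec{g}\vec{h}}\,W_2X^{\vec{g}}Z^{\vec{h}}W_2\ad\otimes\id_B$ being traceless and hence contributing nothing to the Bell-state overlap), and use the ricochet/cyclicity identity $\Tr(W_2K_mW_1)=\Tr(K_mW_1W_2)$ to push the non-unital Pauli channel to the boundary of $W$. The only divergence is the final citation: the paper recognizes $\tfrac{1}{d^2}\sum_m|\Tr(K_mW_1W_2)|^2$ directly as the case of non-unital Pauli noise at $\tau_1$ on system $A$ covered by Theorem~\ref{thm2}, whereas you transpose the channel onto system $B$ and invoke Theorem~\ref{thm1}; the two reductions are equivalent.
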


\subsection{Noise Resilience of Fixed Input State Compiling}

Let us now consider Fixed Input State Compiling (FISC). Recall that the cost-evaluation circuits, shown in Fig.~\ref{fig:SE-LLE-circuit}, have less structure than the circuits in Fig.~\ref{fig:HST-LHST-circuit}. As a result, the noise model that we consider in the FISC case is simpler than the previously considered noise models. In particular, we define the following noise model, which is depicted in 
Fig.~\ref{fig:noisy-SE-LLE-circuit}. Note that, in this context, $\tau_1$ is defined as the time just before the application of $V\ad U$, and there is no need to consider a noisy quantum channel occurring after $V\ad U$ since the measurement occurs immediately after $V\ad U$.

\begin{definition}
\label{defFISCnoise}
We define Noise Model 3 to be the following noise process during the LET or the LLET: (1) global depolarizing noise acting continuously throughout the circuit, (2) global Pauli noise acting at time $\tau_1$, and (3) measurement noise. \end{definition}

We now state our main result for FISC, which is proven in Appendix~\ref{sec:proof-thm-3}.

\begin{figure}[t]
    \centering
    \includegraphics[width=0.75\columnwidth]{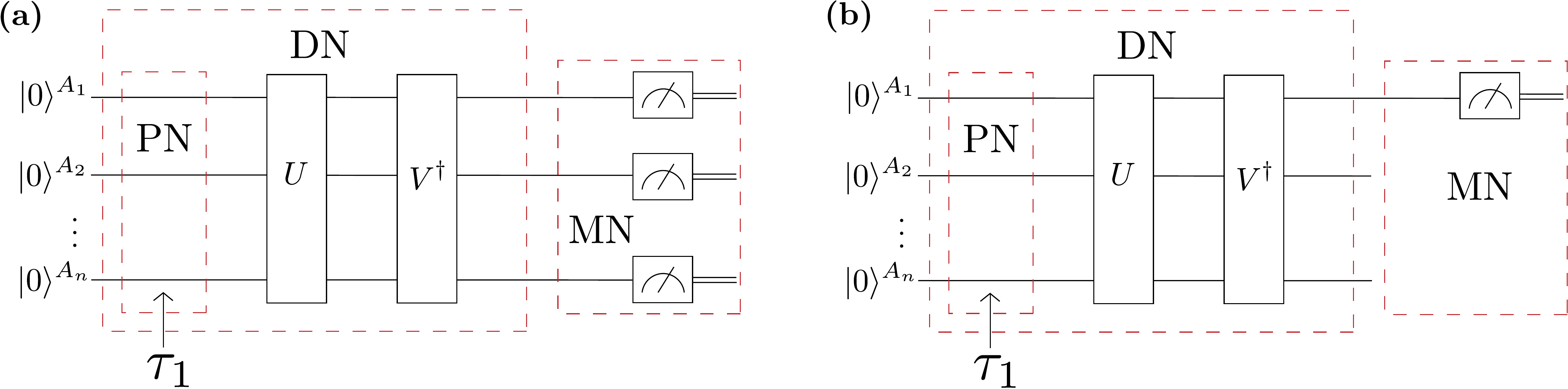}
    \caption{Schematic diagram of Noise Model 3 of Definition~\ref{defFISCnoise} for: (a) the LET circuit, and (b) the LLET circuit. Global depolarizing noise (DN) acts continuously throughout the circuit, global Pauli noise (PN) acts at time $\tau_1$, and measurement noise (MN) occurs during readout. }
    \label{fig:noisy-SE-LLE-circuit}
\end{figure}

\begin{theorem}
\label{thm3}
The cost functions $C_{\LET}$ and $C_{\LLET}$ exhibit weak-OPR, as defined in Definition~\ref{def:OPR}, to Noise Model 3 in Definition \ref{defFISCnoise}.
\end{theorem}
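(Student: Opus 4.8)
The plan is to reduce the noisy cost, for both $C_{\LET}$ and $C_{\LLET}$, to the maximization of a trace functional of the form $\Tr[A\,W\rho_{\tau_1}W\ad]$ over unitaries $W = V\ad U$, and then to invoke the equality conditions of the von Neumann (rearrangement) trace inequality. First I would propagate the noise through the circuit. Global depolarizing noise $\DC_p(\rho) = p\rho + (1-p)\id/d$ commutes with every unitary channel (since $\DC_p$ fixes $\id/d$) and composes multiplicatively, so all the ``continuous'' depolarizing noise collapses into a single channel, which I split as $\DC_{p_a}$ acting before $W$ and $\DC_{p_b}$ acting after it, with $p_a p_b = p$. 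Grouping the Pauli channel $\PC$ at time $\tau_1$ with the state preparation, the state entering $W$ is
\begin{equation}
\rho_{\tau_1} = \PC\!\left(\DC_{p_a}(\dya{\vec 0})\right),
\end{equation}
and the state reaching the noisy measurement is $\sigma = \DC_{p_b}(W\rho_{\tau_1}W\ad) = p_b\, W\rho_{\tau_1}W\ad + (1-p_b)\id/d$.

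Next I would record two structural facts. Because $\dya{\vec 0} = d^{-1}\sum_{\vec k} Z^{\vec k}$ is a sum of $Z$-type Paulis and $\PC$ acts diagonally with $\PC(Z^{\vec k}) = c_{\vec 0\vec k}Z^{\vec k}$, the state $\rho_{\tau_1}$ is diagonal in the computational basis, and the assumption $c_{\vec 0\vec k}\geq 0$ forces its largest diagonal entry to sit at $\ket{\vec 0}$, since the $\ket{\vec z}$ entry $q_{\vec z} = d^{-1}\sum_{\vec k} c_{\vec 0\vec k}(-1)^{\vec k\cdot\vec z}$ can only be decreased by the sign factors. Second, the effective measurement operator is diagonal with maximum at $\ket{\vec 0}$: for the LET it is $\Pt_{\vec 0}$, with eigenvalue $\prod_j p_{0z_j}\jj$ at $\ket{\vec z}$; for the LLET it is $Q = \tfrac1n\sum_j (\Pt_0^{A_j}\ot\id^{\overline{A}_j})$, with eigenvalue $\tfrac1n\sum_j p_{0z_j}\jj$ at $\ket{\vec z}$. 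In both cases the assumption $p_{z_jz_j}\jj > p_{z_j l}\jj$ makes $\ket{\vec 0}$ the \emph{unique} maximizer. Since the $\DC_{p_b}$ term and the identity component contribute $W$-independent constants, minimizing the noisy cost is equivalent to maximizing $\Tr[A\,W\rho_{\tau_1}W\ad]$ with $A = \Pt_{\vec 0}$ (LET) or $A = Q$ (LLET).

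The core step is the equality analysis. By the von Neumann trace inequality, $\Tr[A\,W\rho_{\tau_1}W\ad]\leq \sum_i \pi_{[i]} r_{[i]}$, with $\pi_{[i]}$ and $r_{[i]}$ the decreasingly sorted eigenvalues of $A$ and $\rho_{\tau_1}$, and equality holds only when $W\rho_{\tau_1}W\ad$ and $A$ are simultaneously diagonalizable with matching eigenvalue order. Since $A$ has a simple top eigenvalue whose eigenvector is $\ket{\vec 0}$, any optimal $W$ must carry the top eigenspace of $\rho_{\tau_1}$ onto $\ket{\vec 0}$; provided that top eigenvalue is also simple (and equal to the $\ket{\vec 0}$ entry, as shown), this yields $W\ket{\vec 0}\propto\ket{\vec 0}$, i.e.\ $W\in\VB_d^{\opt}$. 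This gives the inclusion $\widetilde{\VB}_d^{\opt}\subseteq\VB_d^{\opt}$, which is exactly weak-OPR. I would stress that the inclusion is generally \emph{proper}: the trace-inequality optimum additionally forces $W$ to sort the remaining spectrum, a condition absent from the noiseless optimal set $\{W: W\ket{\vec 0}\propto\ket{\vec 0}\}$, which is precisely why only weak-OPR, and not strong-OPR, is available here.

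The main obstacle I anticipate is controlling the degeneracy at the top of $\rho_{\tau_1}$. The clean conclusion $W\ket{\vec 0}\propto\ket{\vec 0}$ requires the $\ket{\vec 0}$ entry $r_{\vec 0} = p_a q_{\vec 0} + (1-p_a)/d$ to be \emph{strictly} larger than every other $r_{\vec z}$; this can fail in degenerate regimes (e.g.\ a fully depolarizing Pauli channel, where $\rho_{\tau_1}\propto\id$ and every $W$ is optimal). The work of this step is therefore to show that the combined depolarizing-plus-Pauli noise does not erase the distinguished weight at $\ket{\vec 0}$, i.e.\ that $q_{\vec 0} = d^{-1}\sum_{\vec k}c_{\vec 0\vec k}$ strictly dominates $q_{\vec z}$ for $\vec z\neq\vec 0$, in the physically relevant (non-maximal) noise regime. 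Once this strictness is secured, the argument for $C_{\LLET}$ is identical to that for $C_{\LET}$, and the established bound $C_{\LLET}\leq C_{\LET}\leq nC_{\LLET}$ guarantees that the two cost functions share the same optimal sets, closing the proof.
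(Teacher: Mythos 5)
Your proposal is correct and follows essentially the same route as the paper's proof: after stripping off the global depolarizing noise (the paper does this via a standalone lemma showing unital evolution interleaved with depolarizing channels just rescales the cost by $p$ plus a $V$-independent constant), both arguments reduce the noisy cost to a pairing $\sum_{\vec i,\vec l}p_{\vec i}q_{\vec l}\,|\matl{\vec i}{W}{\vec l}|^2$ of two diagonal non-negative vectors through the doubly stochastic matrix $|\matl{\vec i}{W}{\vec l}|^2$, and then apply the rearrangement inequality (your von Neumann trace inequality is the same tool in eigenvalue language), with the strict measurement-noise assumption pinning the top eigenvector of the effective POVM at $\ket{\vec 0}$. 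Two remarks. First, the degeneracy you flag is real but is equally present in the paper: Definition~\ref{def-UPC} only yields $q_{\vec 0}\geq q_{\vec l}$, and the paper's proof implicitly relies on the saturating permutations sending $\ket{\vec 0}$ to $\ket{\vec 0}$, which requires the top of $\vec q$ to be non-degenerate (the paper only remarks that $\mathbb{S}$ ``is degenerate'' in that case); so your caveat is a fair observation rather than a defect relative to the paper. Second, for $C_{\LLET}$ the paper does not reuse the single-operator framework with $A=\tfrac1n\sum_j\Pt_0^{A_j}\otimes\id^{\overline{A}_j}$ as you do (which is arguably cleaner); it instead first bounds the $q_{\vec 0}W\dya{\vec 0}W\ad$ term using $p_{00}\jj>p_{01}\jj$ and then applies the rearrangement inequality to the remainder. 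Your closing appeal to $C_{\LLET}\leq C_{\LET}\leq nC_{\LLET}$ is only valid (and only needed) for the noiseless costs, to identify $\VB_{d,\LET}^{\opt}=\VB_{d,\LLET}^{\opt}$ as the common zero set; it cannot be used to transfer optimal sets between the noisy costs, so the direct argument with $A=Q$ is the one you should lean on there.
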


\noindent This theorem implies that FISC is resilient to the measurement noise model in Definition~\ref{def-noisy-POVM}. Furthermore, it is resilient to Pauli noise acting at $\tau_1$ and global depolarizing noise acting continuously throughout the circuit.

We remark that while FUMC exhibits strong-OPR for the noise models considered (see the previous section), here FISC exhibits weak-OPR instead. The latter arises from the fact that the optimal set of unitaries $\VB_d^{\opt}$ for FISC can be highly degenerate (i.e., can contain many unitaries) and the presence of noise could in general break such degeneracy. The ``weak'' term in weak-OPR is simply the fact that the number of global optima is possibly reduced by noise, not that the noise resilience itself is weak.
 Hence, weak-OPR should still be viewed as noise resilience, since the global optima in the presence of noise correspond to global optima in the noiseless case. This implies that training in the presence of noise will lead one to find the correct optimal parameters for $V(\vec{\alpha})$. 

Under certain conditions, Theorem~\ref{thm3} implies that $C'(q)$ defined in \eqref{eqnCprimeQ} will also exhibit weak-OPR to Noise Model 3. Let $\VB_{d,\LET}^{\opt}$ and  $\VB_{d,\LLET}^{\opt}$ denote the sets of unitaries that optimize $C_{\LET}$ and  $C_{\LLET}$, respectively. In the absence of noise we have $\VB_{d,\LET}^{\opt}=\VB_{d,\LLET}^{\opt}$, while in the presence of noise, Theorem \ref{thm3} implies $\widetilde{\VB}_{d,\LET}^{\opt}\subseteq\VB_{d,\LET}^{\opt}$ and $\widetilde{\VB}_{d,\LLET}^{\opt}\subseteq\VB_{d,\LLET}^{\opt}$. Hence, if $\widetilde{\VB}_{d,\LET}^{\opt}\cap\widetilde{\VB}_{d,\LLET}^{\opt}\neq \emptyset$, then for any value of $q$, $C'(q) = q C_{\LET} + (1-q) C_{\LLET}$ will also exhibit weak-OPR to Noise Model 3, where the unitaries that optimize $C'(q)$ in the noisy case belong to $\widetilde{\VB}_{d,\LET}^{\opt}\cap\widetilde{\VB}_{d,\LLET}^{\opt}$.

Theorem \ref{thm3} implies the following corollaries, which establish resilience to noise models that go beyond Noise Model~3 at the expense of specializing the form of $W$. Note that these corollaries are analogous to Corollaries~\ref{cor:Pauli-conjugation}--\ref{cor:TensorProduct}, and Corollary~\ref{cor:Pauli-conjugation-LET} implies Corollaries~\ref{cor:Clifford-FISC} and \ref{cor:TensorProduct-FISC}. See Appendix~\ref{sec:proof-corol} for the proofs.

\begin{corollary}
\label{cor:Pauli-conjugation-LET}
The cost functions $C_{\LET}$ and $C_{\LLET}$ exhibit weak-OPR to a noise model that includes the following: (1) all noise processes in  Noise Model 3, as well as
(2) a noise process during the implementation of $\WC = \mathcal{W}_k \circ \cdots \circ \mathcal{W}_1 = \mathcal{V}^{\dagger}\circ\mathcal{U}$ in which global Pauli channels $\{\mathcal{P}_1, \dots, \mathcal{P}_k\}$ act, such that the overall channel is $\mathcal{P}_k\circ \mathcal{W}_k  \cdots \circ\mathcal{P}_1\circ \mathcal{W}_1$, provided that the following condition is satisfied:
\begin{align}\label{eq:pauli-channel-conjugation-LET}
(\mathcal{P}_k\circ \mathcal{W}_k  \cdots \circ\mathcal{P}_1\circ \mathcal{W}_1)(\cdot) = (\mathcal{W}_k \circ \mathcal{W}_{k-1}\cdots \circ \mathcal{W}_1\circ \widehat{\mathcal{P}}) (\cdot)~,
\end{align}
where $\widehat{\mathcal{P}}$ is also a Pauli channel.
\end{corollary}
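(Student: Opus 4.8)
The plan is to show that, under hypothesis \eqref{eq:pauli-channel-conjugation-LET}, the augmented noise model collapses to an instance of Noise Model~3, after which Theorem~\ref{thm3} immediately delivers weak-OPR for both $C_{\LET}$ and $C_{\LLET}$. The only feature of this model beyond Noise Model~3 is the family of global Pauli channels $\{\PC_1,\dots,\PC_k\}$ interleaved with the gates $\WC_1,\dots,\WC_k$ of $\WC=\VC\ad\circ\UC$. Equation~\eqref{eq:pauli-channel-conjugation-LET} asserts precisely that this interleaved noise is mathematically equal to a single Pauli channel $\widehat{\PC}$ applied before the full $\WC$, namely $\PC_k\circ\WC_k\cdots\PC_1\circ\WC_1=\WC_k\cdots\WC_1\circ\widehat{\PC}=\WC\circ\widehat{\PC}$. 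The first step, therefore, is to invoke this identity to relocate all of the interspersed Pauli noise to the single time $\tau_1$ immediately preceding $\WC$.

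Next I would note that $\widehat{\PC}$ now acts at $\tau_1$ together with the global Pauli channel that Noise Model~3 already prescribes there, and that the composition of two Pauli channels is again a Pauli channel in the sense of Definition~\ref{def-UPC}: in the Pauli basis the two superoperators are diagonal, so composing them multiplies the coefficients entrywise, and the resulting coefficients stay non-negative with the $(\vec{0},\vec{0})$ entry equal to $1\cdot 1=1$. Hence the combined noise at $\tau_1$ is still a single global Pauli channel. The remaining process of Noise Model~3, the continuously acting global depolarizing channel, poses no obstruction to this rearrangement: the global depolarizing channel is itself a Pauli channel, so it commutes with every $\PC_i$ and with $\widehat{\PC}$ (all being simultaneously diagonal in the Pauli basis), and it commutes with each unitary conjugation $\WC_i$ because $U\id U\ad=\id$. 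Consequently the depolarizing noise can be slid freely past the gates and Pauli channels without affecting \eqref{eq:pauli-channel-conjugation-LET}, and it survives unchanged as the continuous depolarizing noise that Theorem~\ref{thm3} already tolerates. With the augmented model thereby identified as an instance of Noise Model~3, Theorem~\ref{thm3} gives $\widetilde{\VB}_d^{\opt}\subseteq\VB_d^{\opt}$, that is, weak-OPR, for both cost functions.

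The main obstacle is conceptual rather than computational, and it is entirely concentrated in hypothesis \eqref{eq:pauli-channel-conjugation-LET}: the reduction works only for those (noise, circuit) pairs for which the interspersed Pauli channels can genuinely be conjugated into a single front-loaded Pauli channel. The cleanest route is then to exhibit natural structured settings in which the hypothesis holds automatically (a Clifford or tensor-product initial segment of $W$, as in the companion Corollaries~\ref{cor:Clifford-FISC} and~\ref{cor:TensorProduct-FISC}), since the remaining ingredients, namely closure of Pauli channels under composition and the commutation of depolarizing noise, are routine once phrased in the Pauli transfer-matrix picture. Finally, I emphasize that the conclusion is weak-OPR rather than strong-OPR, reflecting the potentially large degeneracy of the optimal set for fixed-input-state compiling that is already present in Theorem~\ref{thm3}.
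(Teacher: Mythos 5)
Your proposal is correct and follows essentially the same route as the paper: the paper's proof likewise uses hypothesis \eqref{eq:pauli-channel-conjugation-LET} to replace the interleaved Pauli channels by a single Pauli channel $\widehat{\mathcal{P}}$ acting at time $\tau_1$, absorbs it into the Pauli noise already permitted there by Noise Model~3, and then invokes Theorem~\ref{thm3}. Your added remarks on the closure of Pauli channels under composition and the commutation of global depolarizing noise are correct elaborations of details the paper leaves implicit.
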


\begin{corollary}
\label{cor:Clifford-FISC}
Let the $W=V\ad U$ gate sequence have the form $W = W^A_2 W^A_1$ with $W_1^A$ composed only of Clifford gates. Then the cost functions $C_{\LET}$ and $C_{\LLET}$ exhibit weak-OPR to a noise model that includes the following: (1) all noise processes in  Noise Model 3, as well as  
(2) a noise process during the implementation of $\mathcal{W}^A_1 = \mathcal{W}_{1,k} \circ \cdots \circ \mathcal{W}_{1,1}$, in which global Pauli channels $\{\mathcal{P}^A_1, \dots, \mathcal{P}^A_k\}$ act on system $A$, such that the overall channel on $A$ is $\mathcal{P}^A_k\circ \mathcal{W}_{1,k}  \cdots \circ\mathcal{P}^A_1\circ \mathcal{W}_{1,1}$. 
\end{corollary}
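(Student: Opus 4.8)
The plan is to reduce Corollary~\ref{cor:Clifford-FISC} directly to Corollary~\ref{cor:Pauli-conjugation-LET}: I would show that when $W_1^A$ is composed of Clifford gates, the interspersed Pauli channels $\{\mathcal{P}^A_1,\dots,\mathcal{P}^A_k\}$ can always be commuted past the Clifford layers and collapsed into a single effective Pauli channel $\widehat{\mathcal{P}}$ acting before $W$, thereby verifying the hypothesis \eqref{eq:pauli-channel-conjugation-LET}. Once that condition holds, the conclusion is immediate from Corollary~\ref{cor:Pauli-conjugation-LET}, and because that corollary covers both cost functions, the argument applies simultaneously to $C_{\LET}$ and $C_{\LLET}$.

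The one nontrivial ingredient is the commutation lemma: if $\mathcal{C}$ denotes conjugation by a Clifford unitary $C$ and $\mathcal{P}$ is a Pauli channel, then $\mathcal{P}\circ\mathcal{C} = \mathcal{C}\circ\mathcal{P}'$ for some Pauli channel $\mathcal{P}'$. I would establish this by writing $\mathcal{P}(\cdot)=\sum_{\vec{l}\vec{k}} q_{\vec{l}\vec{k}}\,(X^{\vec{l}}Z^{\vec{k}})(\cdot)(X^{\vec{l}}Z^{\vec{k}})^\dagger$ and using the defining property of the Clifford group, namely that $C(X^{\vec{l}}Z^{\vec{k}})C^\dagger$ is, up to a sign, another Pauli operator. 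This shows $\mathcal{P}' \coloneqq \mathcal{C}^\dagger\circ\mathcal{P}\circ\mathcal{C}$ has signed-Pauli Kraus operators and hence is itself a Pauli channel. This is exactly the property already proven for the FUMC setting in Appendix~\ref{sec:prem}, so I would simply invoke it here rather than re-derive it.

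With the lemma in hand, I would push the Pauli channels rightward by induction. Starting from the noise acting during $W_1^A$, namely $\mathcal{P}^A_k\circ\mathcal{W}_{1,k}\circ\cdots\circ\mathcal{P}^A_1\circ\mathcal{W}_{1,1}$, the innermost pair gives $\mathcal{P}^A_1\circ\mathcal{W}_{1,1}=\mathcal{W}_{1,1}\circ\widehat{\mathcal{P}}_1$ with $\widehat{\mathcal{P}}_1$ Pauli. Absorbing $\mathcal{W}_{1,2}$ and then commuting $\mathcal{P}^A_2$ through the composite Clifford $\mathcal{W}_{1,2}\circ\mathcal{W}_{1,1}$ (which is again Clifford) yields $(\mathcal{W}_{1,2}\circ\mathcal{W}_{1,1})\circ\widehat{\mathcal{P}}_2\circ\widehat{\mathcal{P}}_1$. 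Since a composition of Pauli channels is again a Pauli channel---their superoperators are simultaneously diagonal in the Pauli basis---the trailing product stays a single Pauli channel at each stage. Iterating to the last layer gives $\mathcal{W}_1^A\circ\widehat{\mathcal{P}}$ with $\widehat{\mathcal{P}}=\widehat{\mathcal{P}}_k\circ\cdots\circ\widehat{\mathcal{P}}_1$ Pauli. Because $W_2^A$ carries no interspersed noise, the full noisy implementation of $W$ equals $\mathcal{W}_2^A\circ\mathcal{W}_1^A\circ\widehat{\mathcal{P}}=\mathcal{W}\circ\widehat{\mathcal{P}}$, which is exactly the right-hand side of \eqref{eq:pauli-channel-conjugation-LET}. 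The hypothesis of Corollary~\ref{cor:Pauli-conjugation-LET} is therefore met, completing the reduction.

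I do not expect a genuine obstacle, as the proof is bookkeeping layered on the Clifford--Pauli commutation. The only point needing care is that at the $i$-th step one must commute $\mathcal{P}^A_i$ past the \emph{accumulated} Clifford product $\mathcal{W}_{1,i}\circ\cdots\circ\mathcal{W}_{1,1}$ rather than a single gate; but since the product of Clifford unitaries is again Clifford, the commutation lemma applies verbatim and no new ingredient is required.
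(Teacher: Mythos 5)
Your proposal is correct and follows essentially the same route as the paper: the paper's proof is a one-line reduction to Corollary~\ref{cor:Pauli-conjugation-LET} via Lemma~\ref{lem:clifford-conjugation}, and you simply spell out the induction (commuting each $\mathcal{P}^A_i$ past the accumulated Clifford product and collapsing the resulting Pauli channels into a single $\widehat{\mathcal{P}}$) that the paper leaves implicit.
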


\begin{corollary}
\label{cor:TensorProduct-FISC}
Let the $W=V\ad U$ gate sequence have the form $W = W_2^A W_1^A$ with $W_1^A =  W^{A'}_1 \otimes W^{A''}_1$ being a tensor product, i.e., $W$ is a tensor product up to a particular time. Then the cost functions $C_{\LET}$ and $C_{\LLET}$ exhibit weak-OPR to a noise model that includes the following: (1) all noise processes in Noise Model 3, as well as (2) a noise process during the implementations of $\WC_1^{A'} = \WC_{1,k}^{A'} \circ \cdots \circ \WC_{1,1}^{A'}$ and $\WC_1^{A''} = \WC_{1,l}^{A''} \circ \cdots \circ \WC_{1,1}^{A''}$ in which local depolarizing channels $\{\DC^{A'}_{1,1}, \dots ,\DC^{A'}_{1,k}\}$ and  $\{\DC^{A''}_{1,1}, \dots ,\DC^{A''}_{1,l}\}$ act on subsystems 
$A'$ and $A''$, respectively, such that the overall channel on $A=A'A''$ is $(\DC^{A'}_{1,k}\circ \WC^{A'}_{1,k}\dots \DC^{A'}_{1,1}\circ \WC^{A'}_{1,1})\otimes(\DC^{A''}_{1,l}\circ \WC^{A''}_{1,l}\dots \DC^{A''}_{1,1}\circ \WC^{A''}_{1,1})$. 
\end{corollary}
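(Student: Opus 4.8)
The plan is to derive this corollary from Corollary~\ref{cor:Pauli-conjugation-LET} by checking that the particular tensor-product noise model stated here is a special case satisfying the conjugation condition~\eqref{eq:pauli-channel-conjugation-LET}. This is the exact FISC analog of how Corollary~\ref{cor:TensorProduct} follows from Corollary~\ref{cor:Pauli-conjugation} in the FUMC setting, and the only new ingredients are two elementary facts about depolarizing channels, after which the conclusion (weak-OPR) is inherited directly from Corollary~\ref{cor:Pauli-conjugation-LET}.

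First I would record the key commutation property: a local depolarizing channel $\DC$ on a subsystem commutes with conjugation $\WC$ by \emph{any} unitary $W$ supported on that same subsystem, i.e.\ $\DC\circ\WC=\WC\circ\DC$. Indeed, for a state $\rho$ one has $\WC(\DC(\rho))=W\!\left(p\rho+(1-p)\tfrac{\id}{d}\right)\!W\ad=pW\rho W\ad+(1-p)\tfrac{\id}{d}=\DC(\WC(\rho))$, using $W\id W\ad=\id$. Second, I would note that a composition of depolarizing channels on a fixed subsystem is again a depolarizing channel, with parameter equal to the product of the individual parameters.

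With these in hand, I would work on subsystem $A'$ and apply the commutation property repeatedly to slide every local depolarizing channel $\DC^{A'}_{1,i}$ past all the unitary conjugations to its right, collecting them into a single channel:
\[
\DC^{A'}_{1,k}\circ\WC^{A'}_{1,k}\circ\cdots\circ\DC^{A'}_{1,1}\circ\WC^{A'}_{1,1}
=\WC^{A'}_{1}\circ\widehat{\DC}^{A'},
\]
where $\WC^{A'}_{1}=\WC^{A'}_{1,k}\circ\cdots\circ\WC^{A'}_{1,1}$ and $\widehat{\DC}^{A'}$ is a single depolarizing channel on $A'$. The identical argument on $A''$ yields $\WC^{A''}_{1}\circ\widehat{\DC}^{A''}$. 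Taking the tensor product and invoking the hypothesis $W_1^A=W_1^{A'}\otimes W_1^{A''}$, the overall channel on $A=A'A''$ during the implementation of $W_1^A$ becomes $\WC^{A}_{1}\circ(\widehat{\DC}^{A'}\otimes\widehat{\DC}^{A''})$. Since a depolarizing channel is a Pauli channel and a tensor product of Pauli channels is again a Pauli channel (diagonal in the Pauli basis with $c_{\vec{0}\vec{0}}=1$ and $c_{\vec{l}\vec{k}}\geq0$), the map $\widehat{\PC}^{A}\coloneqq\widehat{\DC}^{A'}\otimes\widehat{\DC}^{A''}$ is a global Pauli channel on $A$. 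Composing with the noiseless second part $\WC^{A}_{2}$ gives $\WC\circ\widehat{\PC}^{A}$, which is precisely the right-hand side of~\eqref{eq:pauli-channel-conjugation-LET}; hence Corollary~\ref{cor:Pauli-conjugation-LET} applies and establishes weak-OPR for $C_{\LET}$ and $C_{\LLET}$.

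The step I expect to be the crux — and the reason the tensor-product hypothesis is indispensable — is the commutation collection: a depolarizing channel supported only on $A'$ commutes with unitaries acting on $A'$, but it would \emph{not} commute with a generic entangling unitary acting across all of $A$. The factorization $W_1^A=W_1^{A'}\otimes W_1^{A''}$ is exactly what guarantees that each local depolarizing channel need only be pushed through unitaries acting on its own subsystem, which is what makes the collection argument valid and reduces the noise to the single conjugating Pauli channel $\widehat{\PC}^{A}$ required by Corollary~\ref{cor:Pauli-conjugation-LET}.
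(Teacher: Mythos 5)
Your proposal is correct and follows essentially the same route as the paper: the paper proves the FUMC analog (Corollary~\ref{cor:TensorProduct}) by the same commutation identity $(\DC^{A'}_p\otimes \DC^{A''}_q)\circ(\WC^{A'}\otimes \WC^{A''})=(\WC^{A'}\otimes \WC^{A''})\circ(\DC^{A'}_p\otimes \DC^{A''}_q)$, applied repeatedly to collect the depolarizing channels into a single initial Pauli channel, and then derives the FISC version by the identical argument together with Corollary~\ref{cor:Pauli-conjugation-LET}. Your additional remarks (that compositions of depolarizing channels are depolarizing, and that the resulting tensor product is a Pauli channel in the sense of Definition~\ref{def-UPC}) are details the paper leaves implicit but are correct.
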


\section{Implementations}\label{secImplementations}

In this section, we present the results of implementing VQC on the following three-qubit unitaries:
the Toffoli gate, the three-qubit Quantum Fourier Transform (QFT), and a W-state preparation circuit. Each of these unitaries is of interest, e.g., the Toffoli gate when combined with the Hadamard gate provides a universal gate set for quantum computing~\cite{shi2003both}, the QFT is a subroutine in Shor's algorithm~\cite{shor1997factoring}, and W-state preparation is useful for the Quantum Approximate Optimization Algorithm~\cite{wang2019xy, farhi2014QAOA}. Figure \ref{fig:Circuits} shows gate sequences corresponding to these unitaries obtained from the literature. The Toffoli gate in Fig.~\ref{fig:Circuits}(a) is decomposed into a gate sequence that contains nine one-qubit gates and six CNOTs \cite{shende2009cnot}. For the QFT we employ its textbook circuit~\cite{nielsen2010} in Fig.~\ref{fig:Circuits}(b), while the circuit for W-state preparation in Fig.~\ref{fig:Circuits}(c) was derived from Refs.~\cite{bartschi2019deterministic,cruz2019efficient}.

Our VQC implementations were performed using IBM's noisy quantum simulator \cite{cross2017open} with  a noise model built from the reported noise parameters and connectivity of IBM's 14-qubit  Melbourne quantum computer \cite{gadi_aleksandrowicz_2019_2562111}. We remark that for VQC, we must have a target unitary $U$ that is written as a gate sequence in the native gate language and the native connectivity of the hardware. IBM's simulator for the Melbourne device has a square lattice connectivity and native gate alphabet of CNOTs, arbitrary rotation around $Z$ and $\pi/2$ rotation around X. Hence, transforming the gate sequences in Fig.~\ref{fig:Circuits} for the native device will typically add an overhead of additional gates. Therefore, the target gate sequences in our implementations actually correspond to IBM's compilation (with this overhead included) of the circuits in Fig.~\ref{fig:Circuits}. 

In IBM's noise model~\cite{cross2017open,qiskit}, one-qubit gate errors are modeled as a single-qubit depolarizing error followed by a thermal relaxation error, where thermal relaxation refers to both $T_1$ and $T_2$ channels. Similarly, two-qubit gate errors consist of a two-qubit depolarizing error followed by single-qubit thermal relaxation errors on each qubit. Finally, the noise model includes single-qubit readout errors.

\begin{figure}[t]
    \centering
    \includegraphics[width=\columnwidth]{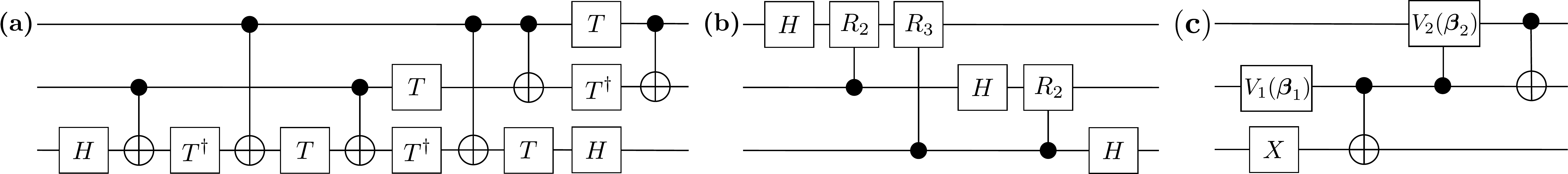}
    \caption{Quantum circuits for: (a) Toffoli Gate, (b) Three-qubit Quantum Fourier Transform, and (c) Three-qubit W-state preparation. Here,  $R_m$ stands for the  controlled phase gate with a phase shift of $\phi=e^{2\pi i/2^m}$, and $V_k(\vec{\beta}_k)$ is given by \eqref{eq-U3}. For the three-qubit W-state preparation circuit we have $\vec{\beta}_1=(2\arccos(\sqrt{1/3}),0,0)$ and $\vec{\beta}_2=(\pi/2,0,0)$.}
    \label{fig:Circuits}
\end{figure}

We employ two different ansatzes, shown in Fig.~\ref{fig:DCNOT}, and (as described below) we employ gradient-based optimization algorithms to train the gate sequence $V(\vec{\alpha})$. In Figs.~\ref{fig:Toffoli}--\ref{fig:Wstate}, we plot the results of implementing VQC with IBM's noisy simulator for the three-qubit gates in Fig.~\ref{fig:Circuits}. In each plot, we show the value of the noisy cost functions versus the number of iterations of the optimization algorithm. 
Additionally, we plot the corresponding value of the noiseless cost functions evaluated for the variational parameters $\vec{\alpha}$ obtained from the noisy optimization. These results allow us to verify if the parameters obtained from the noisy optimization are indeed minimizing the noiseless cost functions. Before discussing the results, we first give details for our ansatzes and optimization methods.

\subsection{Ansatzes and optimization methods}

As previously mentioned, to implement VQC we consider two ansatzes for the trainable unitary $V(\vec{\alpha})$. The building block of our ansatzes is a dressed CNOT gate, which is a two-qubit gate composed of a CNOT preceded and followed by single-qubit gates $V_{k}(\vec{\alpha}_k)$ acting on each qubit, as shown in Fig.~\ref{fig:DCNOT}(a). Each single-qubit gate  $V_{k}(\vec{\alpha}_k)$ is decomposed (up to a global phase) into three elementary rotations parameterized by three angles in the vector $\vec{\alpha}_k=(\alpha_{k,1},\alpha_{k,2},\alpha_{k,3})$~as 
\begin{equation}
    V_k(\vec{\alpha}_k)=e^{- i \alpha_{k,3} \sigma_y/2}e^{- i \alpha_{k,2} \sigma_z/2}e^{- i \alpha_{k,1} \sigma_{y}/2}\,.\label{eq-U3}
\end{equation}

Let us now introduce our ansatzes. We note that our two ansatzes are fairly similar to the ones introduced in Ref.~\cite{Khatri2019quantumassisted}. In our first ansatz, each layer is composed of $n$ dressed CNOTs, where $n$ is the number of qubits (in the special case of $n=2$ each layer consists of one dressed CNOT), with the precise structure defined as follows.

\begin{definition}\label{def:apa}
We define the alternating-pair ansatz as a layered ansatz in which each layer consists of (parameterized) dressed CNOT gates acting on alternating pairs of neighboring qubits as illustrated in Fig.~\ref{fig:DCNOT}(b). 
\end{definition}

We remark that it is useful to distinguish between a \textit{complete ansatz}, in which an exact compilation for $U$ is contained inside the ansatz, versus an \textit{incomplete ansatz}, where exact compilation is not possible. In general, a small number of layers can lead to an incomplete ansatz, where one can only reach approximate compilation. Hence, increasing the number of layers $l$ could allow one to obtain better compilations of $U$. Note however that while a large number of layers can achieve a complete ansatz, it can also be harder to train and can lead to a longer-depth circuit.

The alternating-pair ansatz may not lead to the optimal depth compilation for $U$, particularly in the complete ansatz case. Our second ansatz attempts to fix the issue of introducing unnecessary depth by having a structure that depends on $U$. 
\begin{definition}\label{def:uia}
We construct the target-inspired ansatz by taking the gate sequence for the target unitary $U$, expanding this gate sequence into single-qubit gates and CNOTs, removing all single-qubit gates that precede or follow a CNOT, and replacing each remaining CNOT in the gate sequence with a (parameterized) dressed CNOT. Finally, each remaining single-qubit gate is replaced by a parametrized single-qubit gate.
\end{definition}
\noindent As schematically depicted in Fig.~\ref{fig:DCNOT}(c), each layer is now composed of one dressed CNOT. This ansatz will always be complete since its structure is inspired by $U$. While this ansatz is not useful to compress the number of CNOTs in $V(\vec{\alpha})$, it is useful as a proof-of-concept to demonstrate OPR for complete ansatzes. We remark that a simple modification of this ansatz, where the placements of the dressed CNOTs are optimized over instead of fixed, would actually be useful for circuit-depth compression. Furthermore, we have implemented this dressed CNOT placement optimization, and we find 
that we obtain similar noise resilience results as those for the target-inspired ansatz.

\begin{figure}
    \centering
    \includegraphics[width=.8\columnwidth]{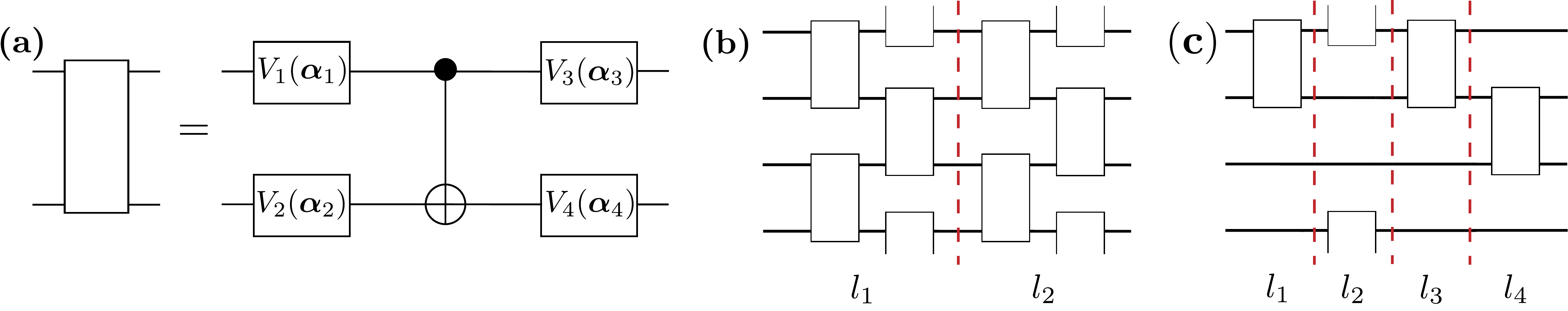}
    \caption{(a) The dressed CNOT  is composed of a CNOT preceded and followed by single-qubit gates $V_k(\vec{\alpha}_k)$, where $V_k(\vec{\alpha}_k)$ is given by~\eqref{eq-U3}. (b) Two layers of the alternating-pair ansatz in the case of four qubits. Each layer is composed of dressed CNOTs  acting on alternating pairs of neighboring qubits. (c) Schematic representation of the target-inspired ansatz. In this approach, the gate sequence of dressed CNOTs is obtained from  the gate sequence of the target unitary $U$.  }
    \label{fig:DCNOT}
\end{figure}

Let us now discuss the optimization methods.  As previously mentioned, the trainable gate sequence $V(\vec{\alpha})$ is a function of a set of parameters $\vec{\alpha}$ corresponding to the collection of the internal gate angles in each dressed CNOT. To optimize these parameters, we employ a gradient-descent approach. This approach exploits the fact that the gradient with respect to $\vec{\alpha}$ of  $C_\HST$, $C_\LHST$, $C_\LET$, and $C_\LLET$ can be computed by using the circuits for HST, LHST, LET, and LLET, respectively~\cite{mitarai2018quantum,Khatri2019quantumassisted}. We remark that we used different gradient-based approaches for the shallow and deep ansatz cases, since the latter requires a more sophisticated and efficient optimizer. 

Specifically, for the shallow ansatz cases where there are few parameters, we employ the simple gradient-based approach outlined in Ref.~\cite[Appendix 4]{Khatri2019quantumassisted}. In this approach, the number of shots $N$ per iteration is fixed. (We choose $N=50000$.) On the other hand, for deep ansatzes with larger numbers of parameters, we employ a more sophisticated gradient-based approach that improves efficiency by reducing the number of shots required~\cite{kubler2019adaptive}. This approach is the individual Coupled Adaptive Number of Shots (iCANS) algorithm of Ref.~\cite{kubler2019adaptive}, which is a
measurement-frugal method that often outperforms other optimizers in the presence of noise. The iCANS optimizer frugally adjusts the number of shots both for a given iteration and for a given partial derivative in a stochastic gradient descent. When employing iCANS, one sets as input: (1) the total number of shots employed during the optimization, and (2) the minimum number of shots (denoted $N_{\text{min}}$) employed to estimate the gradient for a given iteration. We set the latter to initially be $N_{\text{min}}=2$ and then later increase this to $N_{\text{min}}=250$, which empirically leads to good convergence.

\begin{figure}[t]
    \centering
    \includegraphics[width=\columnwidth]{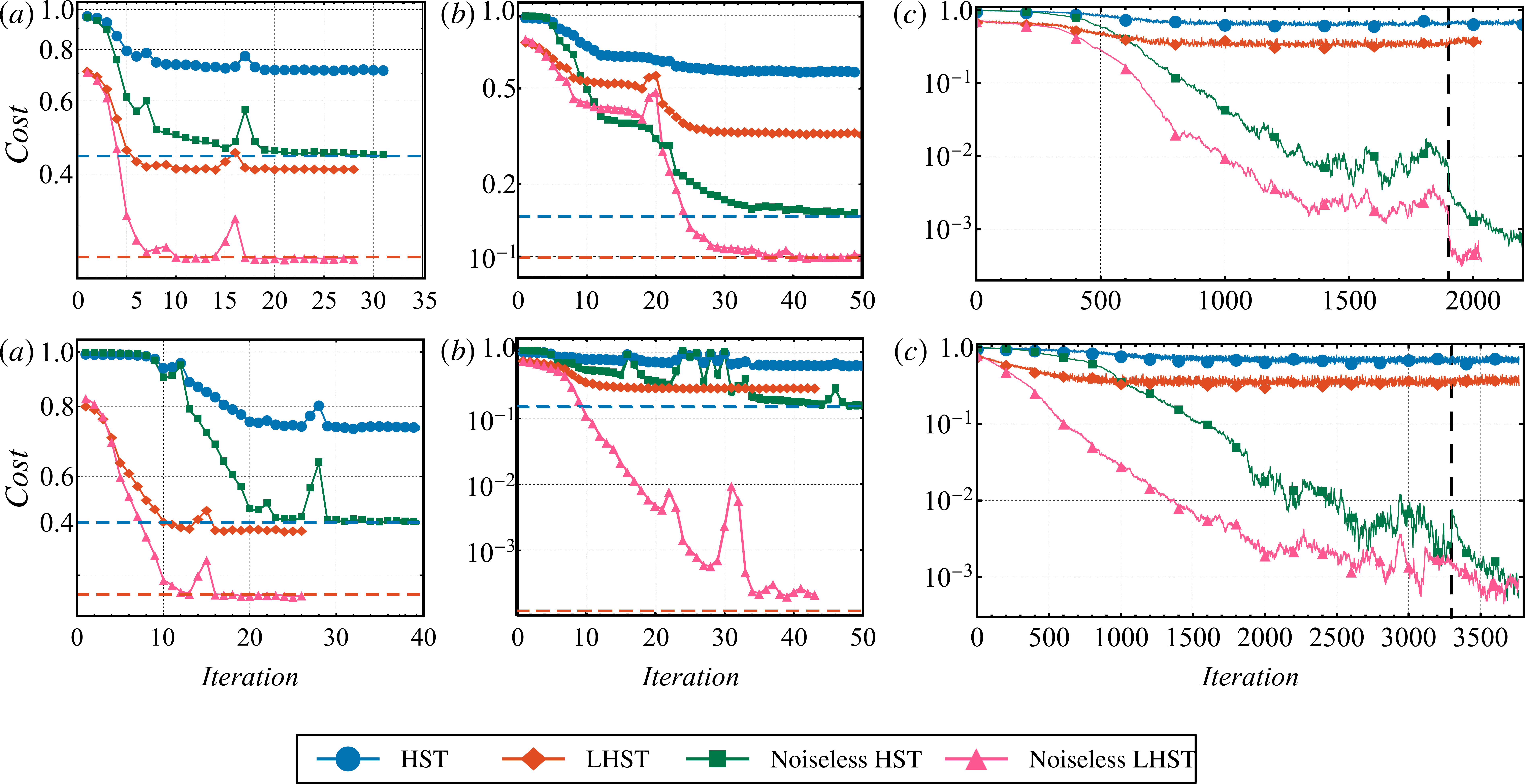}
    \caption{VQC implementations for the Toffoli gate (top) and three-qubit QFT (bottom). The ansatz for $V(\vec{\alpha})$ is: (a) one layer of the alternating-pair ansatz, (b) two layers of the alternating-pair ansatz, (c) the target-inspired ansatz.  The blue and green curves respectively plot the values of $\Ct_{\HST}$ and $\Ct_{\LHST}$ obtained by training $V(\vec{\alpha})$ in the presence of noise. The green and pink curves respectively plot the values of $C_{\HST}$ and $C_{\LHST}$ evaluated at the variational parameters $\vec{\alpha}$ obtained from the noisy optimization of $V(\vec{\alpha})$. 
    Curves are plotted as a function of the number of iterations in the gradient-descent algorithm, and the $y$-axis is in log-scale. The blue and red dashed lines in (a) and (b) correspond to the minimum value of $C_{\HST}$ and $C_{\LHST}$, respectively, determined by optimizing $V(\vec{\alpha})$ in a noise-free environment. Top: in both (a) and (b), the green and pink curves converge to the dashed blue and red lines, respectively. Bottom: While in (a) the green and pink curves converge to the dashed lines, in (b) the termination condition for the optimization algorithm was reached before the pink curve could achieve convergence.  The number of shots per iteration was $N=50000$ for (a) and (b). For (c)  we employed the iCANS optimizer~\cite{kubler2019adaptive}, where the total number of shots was $1.4\times 10^{7}$ and the minimum number of shots per iteration was initially $N_{\text{min}}=2$. The thick dashed vertical line in (c) indicates the point where we set $N_{\text{min}}=250$, which helped to further reduce the cost function.}
    \label{fig:Toffoli}
\end{figure}

\subsection{Toffoli gate}\label{sec:Toffoli-numerics}

The top panels in Figure~\ref{fig:Toffoli} show results of implementing VQC for the Toffoli gate.  Figure~\ref{fig:Toffoli}(top, a) corresponds to $V(\vec{\alpha})$ being given by a single layer of the alternating-pair ansatz of Definition \ref{def:apa}. Here, the noisy cost functions $\widetilde{C}_{\HST}$ and $\widetilde{C}_{\LHST}$ (blue and red curve, respectively) tend to decrease as the number of iterations increases and converge to non-zero values. We remark that the number of iterations can be different for $\widetilde{C}_{\HST}$ and $\widetilde{C}_{\LHST}$ since the termination condition of the optimization algorithm can be reached for a different number of iterations.

Figure~\ref{fig:Toffoli}(top, a) also depicts the cost functions $C_{\HST}$ and $C_{\LHST}$ evaluated for the variational parameters $\vec{\alpha}$ obtained from the noisy optimization (green and pink curve, respectively). These curves show that as the number of iterations increases, both $C_{\HST}$ and $C_{\LHST}$ tend to decrease too, indicating that the noisy training is {\it indirectly training the noiseless cost functions}, i.e., the adjustments to the parameters $\vec{\alpha}$ made by noisy training are reducing the noiseless cost functions. Note that $C_{\HST}$ and $C_{\LHST}$ do not converge to zero since a single layer of three dressed CNOTs forms an incomplete ansatz for the Toffoli gate. 

In order to determine if the algorithm is reaching the minimum value achievable with just one layer, we have also implemented VQC to compile the Toffoli gate in a noise-free simulation. The minimum values achieved for $C_{\HST}$ and $C_{\LHST}$ are shown as a blue and red dashed curve, respectively. Surprisingly, the cost functions evaluated with the parameters from the noisy training (green and pink curves) converge to the dashed lines. This suggests that the optimal parameters are noise resilient since noisy training reaches the minimum value obtained by noise-free training. As a caveat, however, we note that it is not clear whether the minima reached are global or local optima.  

Figure \ref{fig:Toffoli}(top, b) plots the VQC results for Toffoli with $V(\vec{\alpha})$ given by two layers of the alternating-pair ansatz. In this case,  $C_{\HST}$ and $C_{\LHST}$ converge to values which are smaller than the ones obtained in the one-layer case. The latter indicates that two layers allow for a more complete compilation of the Toffoli gate, albeit it appears that the ansatz is not yet complete. Note that both the decomposition of the Toffoli gate in Figure~\ref{fig:Circuits}, as well as two layers of the alternating-pair ansatz, consist of six CNOTs. However, the placement of the dressed CNOTs does not seem to be optimal. Finally, let us remark that the green and pink curves converge to the dashed blue and red lines, respectively. Hence, this once again shows that the optimal parameters are noise resilient. Similar to the previous case, it is not clear whether the minima reached are global or local minima.

Figure \ref{fig:Toffoli}(top, c) shows results for the target-inspired ansatz of Definition \ref{def:uia}. As the number of iterations increases, all curves tend to decrease,  with the green and pink curves converging to values   of the order of $10^{-4}$.  We remark that we have verified  that  $W=V\ad U\approx \id$ for the parameters obtained. In this case, we do not plot dashed blue and red curves since the ansatz is complete and the minimum of the noiseless cost functions is zero.

These results indicate that optimizing $V(\vec{\alpha})$ in the presence of noise yields the correct variational parameters $\vec{\alpha}$, which minimize the noiseless cost function.  Hence, both $C_{\HST}$ and $C_{\LHST}$ appear to exhibit OPR for the realistic noise model considered.

\begin{figure}
    \centering
    \includegraphics[width=.9\columnwidth]{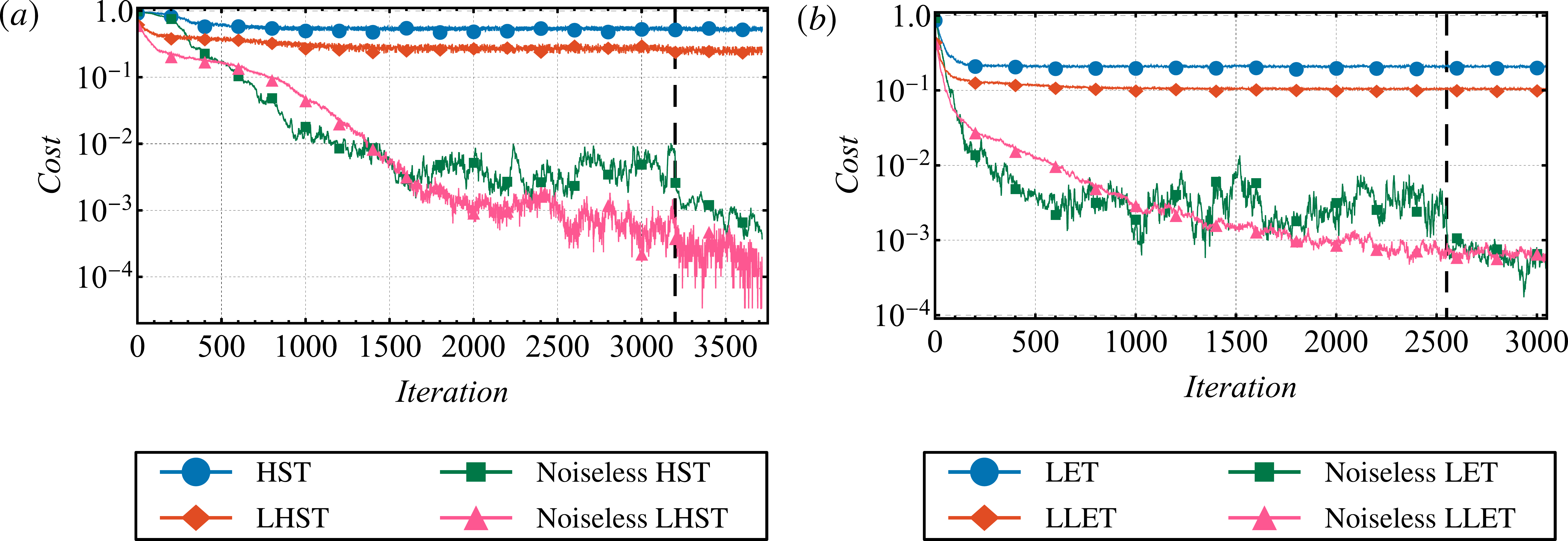}
    \caption{VQC implementations for the three-qubit W-state preparation circuit  for (a) the FUMC approach, and (b) the FISC approach. The trainable gate sequence $V(\vec{\alpha})$ is given by the target-inspired ansatz. In the left (right) panel the blue and green curves plot respectively the values of $\Ct_{\HST}$ ($\Ct_{\LET}$) and $\Ct_{\LHST}$ ($\Ct_{\LLET}$) obtained by noisy training of $V(\vec{\alpha})$. Similarly, in the left (right) panel the green and pink curves give respectively the values of $C_{\HST}$ ($\Ct_{\LET}$) and $C_{\LHST}$ ($\Ct_{\LLET}$) evaluated at the variational parameters $\vec{\alpha}$ obtained from the noisy optimization of $V(\vec{\alpha})$.  Curves are plotted as a function of the number of gradient-descent iterations, with the $y$-axis in log-scale. Via noisy training, the noiseless cost functions go down to $\sim 10^{-4}$. Initially we set $N_{\text{min}}=2$, and the thick dashed vertical lines shows the point where we increased this value to $N_{\text{min}}=250$. Increasing the minimum number of shots iCANS employs to compute each partial derivative leads to smaller cost function values in both cases.} 
    \label{fig:Wstate}
\end{figure}

\subsection{Quantum Fourier Transform}

We now discuss the VQC results for the three-qubit QFT. Figure~\ref{fig:Toffoli} shows the results for $V(\vec{\alpha})$ consisting of:  a single layer of the alternating-pair ansatz of Definition \ref{def:apa} (bottom, a), two layers of the alternating-pair ansatz (bottom, b), and the target-inspired ansatz of Definition \ref{def:uia} (bottom, c). As shown in these plots, most of the results for QFT are similar to the results for the Toffoli gate. In all cases the noiseless cost functions tended to decrease with iterations, indicating that noisy training indirectly trains the noiseless costs. 

For the one-layer case of Figure~\ref{fig:Toffoli} (bottom, a) the green and pink curves (noiseless cost functions evaluated at the parameters obtained from noisy training) converge to the value obtained by training in a noise-free environment (dashed curve). Here, the non-zero value of the dashed curve indicates that a one-layer ansatz is incomplete. 
This is in contrast to Figure~\ref{fig:Toffoli} (bottom, b), where the dashed red line of $C_{\LHST}$ is of the order of $10^{-4}$, implying that the ansatz is complete. Once again, in Figure~\ref{fig:Toffoli} (bottom, b), the green and pink curves approximately converge to the dashed lines (noiseless training), indicating noise resilience.  
Finally, Figure \ref{fig:Toffoli} (bottom, c), shows that that both $C_{\HST}$ and $C_{\LHST}$ appear to  exhibit OPR, as we can indirectly train the parameters in $V(\vec{\alpha})$ in the presence of noise.

\subsection{W-state preparation}

Finally, we discuss the results of implementing of VQC for both FUMC and FISC of a W-state preparation circuit. We remark here that we did not perform  FISC for the Toffoli gate and the QFT since those unitaries act trivially on the $\ket{\vec{0}}$ state. Moreover, we are only interested in comparing the FUMC and the FISC approach with a complete ansatz, meaning that we only considered the target-inspired ansatz of Definition \ref{def:uia}.

As shown in Fig.~\ref{fig:Wstate}, all cost functions $C_{\HST}$, $C_{\LHST}$, $C_{\LET}$, and $C_{\LLET}$ can be optimized indirectly via noisy training of $V(\vec{\alpha})$.  Both for FUMC and FISC the cost functions go down to $\sim 10^{-4}$, while for FUMC one can even reach values of $\sim 10^{-5}$ when employing the LHST.  Hence, our numerics indicate that $C_{\HST}$, $C_{\LHST}$, $C_{\LET}$, and $C_{\LLET}$ appear to exhibit OPR to IBM's realistic noise model.

\section{Discussion}

\subsection{VQC in the NISQ era}\label{secNISQera}

Our analytical and numerical results suggest that Variational Quantum Compiling (VQC) could be a useful tool for near-term noisy quantum computing. While there are several intended uses for VQC~\cite{Khatri2019quantumassisted}, the main purpose is for circuit-depth compression of quantum algorithms. This depth compression arises because VQC could achieve optimal compiling, whereas classical methods for quantum compiling either scale exponentially (if they are aiming at optimal compiling) or are sub-optimal when they are restricted to local (instead of global) compiling of the circuit.

Suppose one is able to achieve depth compression with VQC. This implies that the target unitary $U$ has a longer depth than the trained gate sequence $V(\vec{\alpha})$. Prior to our work, one may have been concerned that this depth compression might not reduce noise, because perhaps the noise occurring during $U$ is somehow compiled into the gate sequence $V(\vec{\alpha})$. However, our work shows that this is not the case. Despite various sources of incoherent noise (e.g., see the noise model in Fig.~\ref{fig:noisy-HST-circuit}), we find that one learns the correct optimal parameters $\vec{\alpha}$ for $V(\vec{\alpha})$. This means that, after performing VQC, if one was to implement the gate sequence $V(\vec{\alpha})$ instead of $U$, then one should see that $V(\vec{\alpha})$ really does achieve less noise than $U$, since the depth of $V(\vec{\alpha})$ is shorter.

\subsection{Summary of results}

In this work, we treated two different forms of VQC: Full Unitary Matrix Compiling (FUMC) and Fixed Input State Compiling (FISC). Our main analytical results were stated in Theorems~\ref{thm1}--\ref{thm3}. We found that both FUMC and FISC are resilient to measurement noise. In addition, they are both resilient to global depolarizing noise acting continuously throughout the circuit and global Pauli noise occurring just prior to the implementation of $W = V\ad U$. 

For FUMC, we were able to prove resilience to additional sources of noise, such as Pauli gate noise during the entangling and disentangling gates as well as non-unital Pauli noise occurring at particular times in the circuit. The fact that our noise resilience results are more extensive for FUMC than for FISC may simply be due to the fact that the cost-evaluation circuit for FUMC is more complicated than that for FISC. Hence it is possible that this additional resilience is needed to make the two approaches have similar levels of noise resilience. Alternatively, it could be possible that either FUMC or FISC is more noise resilient than the other, although this remains to be established. (Note that our numerics did not see a significant difference in the noise resilience of FUMC versus FISC.)

In addition, Corollaries~\ref{cor:Pauli-conjugation}--\ref{cor:TensorProduct-FISC} stated resilience results for noise models that go beyond the noise models considered in Theorems~\ref{thm1}--\ref{thm3}, at the expense of possibly specializing the form of the unitary $W = V\ad U$ (for example, to Clifford unitaries or tensor-product unitaries). In particular, these corollaries considered noise that occurs during the implementation of $W$, which is certainly practically relevant.

Our numerical results were presented in Figs.~\ref{fig:Toffoli}--\ref{fig:Wstate}. Generally speaking, these numerics agreed with our theoretical expectations and hinted at resilience beyond what is stated in our theorems, which we discuss in the next subsection. We emphasize that our implementations employed the noise model of IBM's 14-qubit Melbourne device, and hence this shows that VQC exhibits resilience for currently available hardware.

\subsection{Noise resilience beyond our theorems}

There are two senses in which VQC might exhibit resilience beyond the results stated in our theorems. The first sense is that VQC may be resilient to more general noise models than the ones we considered. The second sense is that VQC may be resilient even for the incomplete ansatz case, on which we elaborate below. Both of these possibilities appear to be supported by our numerical implementations. 

For evidence supporting the idea that VQC may be resilient to more general noise models, consider the following. The noise model associated with IBM's 14-qubit Melbourne device is more general than the noise models depicted in Figs.~\ref{fig:noisy-HST-circuit} and \ref{fig:noisy-SE-LLE-circuit}, and the unitaries we considered in Fig.~\ref{fig:Circuits} do not fall into the special cases (e.g., Clifford or tensor product) treated by Corollaries~\ref{cor:Pauli-conjugation}--\ref{cor:TensorProduct-FISC}. For example, IBM's noise model has non-unital Pauli noise associated with each gate and hence occurring throughout the implementation of $W=V\ad U$. Thus, our theorems and corollaries do not cover all of noise processes occurring in IBM's noise model. Despite this, we were able to reduce the noiseless cost (via noisy training) to  $\sim 10^{-4}$ for the Toffoli gate (Fig.~\ref{fig:Toffoli}(top, c)) and QFT (Fig.~\ref{fig:Toffoli}(bottom, c)), and to  $\sim 10^{-5}$ for W state preparation (Fig.~\ref{fig:Wstate}).

Naturally, our theorems and corollaries have a bias towards noise models that are mathematically easy to work with, such as Pauli noise or depolarizing noise, since this makes it easier to formulate proofs. It is therefore important for future work to attempt to show resilience beyond these noise models. 

As noted above, VQC may also have resilience beyond the complete ansatz case. Recall that we say an ansatz for $V(\vec{\alpha})$ is complete (incomplete) if it contains (does not contain) an exact compilation of $U$. Our theorems and corollaries are restricted to the complete ansatz case, whereas our numerics in Fig.~\ref{fig:Toffoli}  also consider the incomplete ansatz case. Interestingly, Fig.~\ref{fig:Toffoli}  showed that typically one can obtain the same value for the noiseless cost with either noisy or noiseless training. This surprising result suggests that perhaps the optimal values for $\vec{\alpha}$ may be resilient to noise even for the incomplete ansatz case, and future work should investigate this possibility. 

In addition, it will be important to investigate the effect of noise on the parameter landscape and parameter trainability~(e.g.,~\cite{gentini2019noise}). Our work indicates that the global optimum of VQC may not change with noise, but does not address the difficulty of finding this optimum.

\subsection{Coherent versus incoherent noise}

In the Introduction, we emphasized the distinction between OPR and Cost Value Resilience~\cite{mcclean2016theory}. The latter is relevant to coherent noise, whereas OPR is relevant to incoherent noise. Intuitively, we anticipate that coherent noise (e.g., systematic gate biases) in VQC will often shift the location of the global minimum in parameter space, and hence we expect coherent noise to have a non-trivial effect on the optimal parameters in VQC. Because of this intuition, we have focused our paper and our definition of OPR solely on incoherent noise. We remark that our definition of OPR, which is stated in terms of unitaries (rather than parameters), would need to be modified if one is interested in studying parameter resilience for coherent noise. However, as noted, we do not anticipate resilience to coherent noise to hold. We also remark that other strategies exist to correct coherent noise~\cite{DCG_khodjasteh2009dynamically}. Nevertheless, an interesting question for future work will be see whether OPR holds partially whenever both coherent and incoherent noise are present. In addition, it will be interesting to combine the ideas of OPR and Cost Value Resilience into a single framework.

\subsection{Noise resilience of VQE}

Finally, let us consider VHQCAs more generally. In particular, let us revisit the Variational Quantum Eigensolver (VQE) that we discussed in Sec.~\ref{sec:VQE}. As we now show, VQC is a special case of VQE. This idea was noted for FISC in Ref.~\cite{jones2018quantum}. However, the argument is more subtle for the FUMC case.

The key observation is that the various cost functions can be rewritten as the expectation values for some effective Hamiltonians:
\begin{align}
    C_{\LET}& =  \mte{\psi(\vec{\alpha})}{H_{\LET}} \,,\quad     C_{\LLET} =  \mte{\psi(\vec{\alpha})}{H_{\LLET}}\,,\notag \\
    C_{\HST} &= \mte{\chi(\vec{\alpha})}{H_{\HST}}\,,\quad     C_{\LHST} =  \mte{\chi(\vec{\alpha})}{H_{\LHST}}\,.
    \label{eqn:HamiltonianCosts}
\end{align}
Here $\ket{\psi(\vec{\alpha})} \in \HC^{A}$ and $\ket{\chi(\vec{\alpha})} \in \HC^{AB}$ are $n$-qubit and $2n$-qubit states, respectively, given by
\begin{align}
\ket{\psi(\vec{\alpha})} = V(\vec{\alpha})\ket{\vec{0}},\qquad \ket{\chi(\vec{\alpha})} = ( V(\vec{\alpha}) \otimes \id^B)\ket{\Phi}\,,
\end{align}
where $\HC^{X}$ denotes the Hilbert space of system $X$, and $\ket{\Phi} = E \ket{\vec{0}}$ is the standard maximally entangled state on $AB$. We remark that $\ket{\chi(\vec{\alpha})}$ is simply the Choi state associated with $V(\vec{\alpha})$. 

For the cost functions associated with FISC, the effective Hamiltonians are given by
\begin{align}
    H_{\LET}& =  \id^A - U\dya{\vec{0}}U\ad \,,\qquad    H_{\LLET} =  \id^A - \frac{1}{n}\sum_{j=1}^n U  (P_{0}^{A_j}\otimes\id^{\overline{A}_j}) U\ad\,,
        \label{eqn:HamiltonianFISC}
\end{align}
where $P_{0}^{A_j}$ is the projector onto the zero state of $A_j$. For the cost functions associated with FUMC, the effective Hamiltonians are given by
\begin{align}
    H_{\HST} &= \id^{AB} - (U\otimes \id^B)\dya{\Phi}(U\ad \otimes \id^B)\,,\quad     H_{\LHST} =  \id^{AB} - \frac{1}{n}\sum_{j=1}^n (U\otimes \id^B)(\dya{\Phi\jj}\otimes \id^{\overline{A}_j\overline{B}_j})(U\ad \otimes \id^B)\,, 
        \label{eqn:HamiltonianFUMC}
\end{align}
where $\ket{\Phi\jj}$ is the standard maximally entangled state on $A_jB_j$. With these Hamiltonians, one can verify that the expressions in \eqref{eqn:HamiltonianCosts} are equal to the original cost function definitions in Sec.~\ref{sec:background}. Hence, we have just shown that VQC is a special case of VQE, where the goal is to prepare the ground state of one of the Hamiltonians in \eqref{eqn:HamiltonianFISC} or \eqref{eqn:HamiltonianFUMC}.

The fact that VQC is a special case of VQE implies that, for specific Hamiltonians, VQE is noise resilient. Namely, we have shown that VQE exhibits OPR when the Hamiltonian has the form in either \eqref{eqn:HamiltonianFISC} or \eqref{eqn:HamiltonianFUMC}. This naturally points to the question of whether VQE is resilient more generally. It is therefore a very interesting direction for future research to extend our noise resilience to Hamiltonians other than the ones we considered.

\section{Conclusions}

In this work, we discovered a novel kind of noise resilience for Variational Hybrid Quantum-Classical Algorithms (VHQCAs). We introduced the idea of Optimal Parameter Resilience (OPR), where the variational parameters corresponding to the global optimum are unaffected by various types of incoherent noise. We showed that Variational Quantum Compiling (VQC) exhibits OPR. This paves the way for VQC to be used in the era of noisy intermediate-scale quantum computing as a tool for circuit-depth compression. Important future research directions include: (1) Extending our theorems to show resilience to more general noise models than the ones we considered (which our numerics suggest may be possible), (2) Exploring noise resilience for the incomplete ansatz case (which our numerics indicate may also be resilient), (3) Analyzing approximate noise resilience, (4) Studying the effect of noise on the parameter training process, and (5) Generalizing our resilience results to other Hamiltonians for the Variational Quantum Eigensolver and exploring resilience for other VHQCAs (for example, some evidence of noise resilience was recently reported in Ref.~\cite{bravo2019variational}).

\section{Acknowledgements}

We thank Lukasz Cincio and Mark M. Wilde for helpful discussions. KS acknowledges support from the U.S. Department of Energy (DOE) through a quantum computing program sponsored by the LANL Information Science \& Technology Institute. SK acknowledges support from the National Science Foundation and the National Science and Engineering Research Council of Canada Postgraduate Scholarship. MC was supported by the Center for Nonlinear Studies at Los Alamos National Laboratory (LANL). PJC acknowledges support from the LANL ASC Beyond Moore's Law project. MC and PJC also acknowledge support from the LDRD program at LANL. This work was also supported by the U.S. DOE, Office of Science, Office of Advanced Scientific Computing Research.

\bibliography{ref.bib}

\appendix

\section{Preliminaries}\label{sec:prem}

The main goal of the Appendix is to provide the proofs of Theorems~\ref{thm1}--\ref{thm3} and Corollaries~\ref{cor:Pauli-conjugation}--\ref{cor:TensorProduct-FISC}. For these proofs, we will need to first review some definitions and properties. We point readers to \cite{nielsen2010, wilde2017} for additional background.

\bigskip

\noindent \textbf{Pauli Basis.} In our proofs, we will work in the Pauli product basis, involving a tensor product of one-qubit Pauli operators. This is a natural basis to choose, given the qubit structure of quantum computers. Let
\begin{equation}
X^{\vec{l}}\coloneqq\sigma_x^{l_1}\otimes\sigma_x^{l_2}\otimes\dotsb\otimes\sigma_x^{l_n}\,,\qquad Z^{\vec{k}}\coloneqq\sigma_z^{k_1}\otimes\sigma_z^{k_2}\otimes\dotsb\otimes\sigma_z^{k_n}\,,
\end{equation}
where $l_1,l_2,\dotsc,l_n\in\{0,1\}$, $k_1,k_2,\dotsc,k_n\in\{0,1\}$, $\vec{l}= (l_1, \dots, l_n)$, and $\vec{k} = (k_1, \dots k_n)$. The following properties are satisfied by the Pauli operators: 
\begin{equation}\label{eq-pauli_prop}
X^{\vec{l}_1}X^{\vec{l}_2}=X^{\vec{l}_1\oplus\vec{l}_2},\quad Z^{\vec{k}_1}Z^{\vec{k}_2}=Z^{\vec{k}_1\oplus\vec{k}_2},\quad X^{\vec{l}}Z^{\vec{k}}=(-1)^{\vec{l}\cdot\vec{k}}Z^{\vec{k}}X^{\vec{l}},\quad \Tr[X^{\vec{l}}Z^{\vec{k}}]=2^n\delta_{\vec{l},\vec{0}},\delta_{\vec{k},\vec{0}},
\end{equation}
which follow from the properties of the single-qubit Pauli operators.

\noindent \textbf{Pauli group}. The Pauli group of $n$ qubits is 
$\mathbb{G}_n \coloneqq \{\pm 1, \pm i\} \times \{I, \sigma_x, \sigma_y, \sigma_z \}^{\otimes n}$.

\noindent \textbf{Clifford group}. The Clifford group on $n$ qubits is the set of unitaries that normalize the Pauli group, i.e., 
\begin{align}\label{def:clifford-group}
\mathbb{C}_n \coloneqq \{ U : U \mathbb{G}_n U^{\dagger} \in \mathbb{G}_n \}~.
\end{align}

\noindent \textbf{Maximally entangled states.}
In what follows, we consider the following maximally entangled states
$\ket{\Phi^+}\bra{\Phi^+} = \ket{\phi^+}\bra{\phi^+}^{\otimes n}$,
where $\ket{\phi^+} = (\ket{0,0}+\ket{1,1})/\sqrt{2}$. The aforementioned tensor product of maximally entangled states can be written in the Pauli basis as follows:
\begin{equation}\label{eq-Phi+}
\dya{\Phi^+}_{AB} = \frac{1}{2^{2n}}\sum_{\vec{l},\vec{k}} X^{\vec{l}}_AZ^{\vec{k}}_A\otimes X^{\vec{l}}_B Z^{\vec{k}}_B=\frac{1}{2^{2n}}\sum_{\vec{l},\vec{k}}Z^{\vec{k}}_AX^{\vec{l}}_A\otimes Z^{\vec{k}}_BX^{\vec{l}}_B.
\end{equation}

\noindent \textbf{All-zero state.} Noting that $\dya{0}=(\id+\sigma_z)/2$, then in the Pauli basis the all-zero state $\dya{\vec{0}}=\dya{0}^{\otimes n}$ is
\begin{equation}\label{eq-allzero-Pauli}
    \dya{\vec{0}} = \frac{1}{2^{n}}(\id+\sigma_z)^{\otimes n}=\frac{1}{2^{n}}\sum_{\vec{l}}Z^{\vec{l}}.
\end{equation}

\noindent \textbf{Pauli channels}. 
A Pauli noise channel corresponds to the action of random Pauli operators on a quantum state $\rho$ according to a probability distribution. Let $\mathcal{P}^A$ denote an $n$-qubit Pauli channel acting on system $A = A_1, \dots A_n$. Then the action of $\mathcal{P}^A$ on the state $\rho$ is given by 
\begin{equation}\label{eq:Pauli-channel}
\mathcal{P}^A(\rho)=\sum_{\vec{l},\vec{k}}p^A_{\vec{l},\vec{k}}X_A^{\vec{l}}Z_A^{\vec{k}}\rho(X_A^{\vec{l}}Z_A^{\vec{k}})^\dagger,
\end{equation}
where $0\leq p^A_{\vec{l},\vec{k}}\leq 1$, and $\sum_{\vec{l},\vec{k}}p^A_{\vec{l},\vec{k}}=1$. Using the properties in \eqref{eq-pauli_prop}, we find that 
\begin{align}
\mathcal{P}^A(X_A^{\vec{a}}Z_A^{\vec{b}})=\sum_{\vec{l},\vec{k}}p^A_{\vec{l},\vec{k}}X_A^{\vec{l}}Z_A^{\vec{k}}X_A^{\vec{a}}Z_A^{\vec{b}}Z_A^{\vec{k}}X_A^{\vec{l}}
=\sum_{\vec{l},\vec{k}}(-1)^{\vec{a}\cdot\vec{k}}(-1)^{\vec{b}\cdot\vec{l}} p^A_{\vec{l},\vec{k}}X_A^{\vec{a}}Z_A^{\vec{b}}
=p^A_{\vec{a},\vec{b}}X_A^{\vec{a}}Z_A^{\vec{b}},
\end{align}
where $p^A_{\vec{a},\vec{b}}\coloneqq \sum_{\vec{l},\vec{k}}(-1)^{\vec{a}\cdot\vec{k}}(-1)^{\vec{b}\cdot\vec{l}} p^A_{\vec{l},\vec{k}}$ and $-1\leq p^A_{\vec{a},\vec{b}}\leq 1$ for all $\vec{a},\vec{b}\in\{0,1\}^n$. Similarly, the action of a global Pauli channel $\mathcal{P}^{AB}$ acting on systems  $A=A_1\dotsb A_n$ and $B=B_1\dotsb B_n$, respectively, is defined as
\begin{equation}\label{eq:2n-qubit-Pauli-channel}
\mathcal{P}^{AB}(X_A^{\vec{a}_1}Z_A^{\vec{b}_1}\otimes X_B^{\vec{a}_2}Z_B^{\vec{b}_2})=p_{\vec{a}_1,\vec{a}_2,\vec{b}_1,\vec{b}_2}^{AB}X_A^{\vec{a}_1}Z_A^{\vec{b}_1}\otimes X_B^{\vec{a}_2}Z_B^{\vec{b}_2}.
\end{equation}

\noindent \textbf{Non-unital Pauli noise channels}.
The action of a non-unital Pauli channel $\mathcal{P}_{\text{NU}}$ on an $n$-qubit Pauli operators is
\begin{align}
\mathcal{P}_{\text{NU}}(X^{\vec{a}}Z^{\vec{b}})&=c_{\vec{a},\vec{b}}X^{\vec{a}}Z^{\vec{b}}\quad\forall~\vec{a}\neq\vec{0},\vec{b}\neq\vec{0},\label{def:non-un-Puali1}\\
   \mathcal{P}_{\text{NU}}(X^{\vec{0}}Z^{\vec{0}})&=\mathcal{P}_{\text{NU}}(\id)=\id+ \sum_{(\vec{a},\vec{b})\neq (\vec{0}, \vec{0})}d_{\vec{a},\vec{b}}X^{\vec{a}}Z^{\vec{b}}.\label{def:non-un-Puali2}
\end{align}

We now prove the following lemma based on Clifford unitaries and Pauli channels. 

\begin{lemma}\label{lem:clifford-conjugation}
Let $W$ be a Clifford unitary and let $\PC$ be a Pauli channel. Then for any state $\rho$, the following holds:
\begin{align}
    (\mathcal{W}\circ \PC)(\rho) = (\mathcal{Q}\circ \mathcal{W})(\rho)~,
\end{align}
where $\mathcal{Q}$ is another Pauli channel.
\end{lemma}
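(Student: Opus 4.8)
The plan is to expand the left-hand side using the definition of a Pauli channel in \eqref{eq:Pauli-channel}, commute each Pauli factor through $W$ by inserting resolutions of the identity, and then recognize the outcome as a Pauli channel acting \emph{after} $\mathcal{W}$. First I would write
\begin{equation}
(\mathcal{W}\circ\PC)(\rho) = \sum_{\vec{l},\vec{k}} p_{\vec{l},\vec{k}}\, W X^{\vec{l}}Z^{\vec{k}} \rho (X^{\vec{l}}Z^{\vec{k}})\ad W\ad\,,
\end{equation}
and then insert $\id = W\ad W$ on either side of $\rho$ to regroup the factors as
\begin{equation}
(\mathcal{W}\circ\PC)(\rho) = \sum_{\vec{l},\vec{k}} p_{\vec{l},\vec{k}}\, (W X^{\vec{l}}Z^{\vec{k}} W\ad)\, (W\rho W\ad)\, (W X^{\vec{l}}Z^{\vec{k}} W\ad)\ad\,.
\end{equation}

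Next I would invoke the defining property of the Clifford group in \eqref{def:clifford-group}: since $W$ normalizes $\mathbb{G}_n$, each conjugated operator satisfies $W X^{\vec{l}}Z^{\vec{k}} W\ad = \phi_{\vec{l},\vec{k}}\, P_{\vec{l},\vec{k}}$, where $P_{\vec{l},\vec{k}}$ is a phase-free Pauli string in $\{\id,\sigma_x,\sigma_y,\sigma_z\}^{\otimes n}$ and $\phi_{\vec{l},\vec{k}}\in\{\pm 1,\pm i\}$. The crucial simplification is that the phase drops out, because the third factor is the adjoint of the first and hence carries the conjugate phase: the term reduces to $|\phi_{\vec{l},\vec{k}}|^2\, P_{\vec{l},\vec{k}} (W\rho W\ad) P_{\vec{l},\vec{k}}\ad = P_{\vec{l},\vec{k}} (W\rho W\ad) P_{\vec{l},\vec{k}}\ad$. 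Defining
\begin{equation}
\mathcal{Q}(\sigma) = \sum_{\vec{l},\vec{k}} p_{\vec{l},\vec{k}}\, P_{\vec{l},\vec{k}}\, \sigma\, P_{\vec{l},\vec{k}}\ad\,,
\end{equation}
this yields $(\mathcal{W}\circ\PC)(\rho) = \mathcal{Q}(W\rho W\ad) = (\mathcal{Q}\circ\mathcal{W})(\rho)$, which is the claimed identity.

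Finally I would verify that $\mathcal{Q}$ is a genuine Pauli channel. The coefficients $p_{\vec{l},\vec{k}}$ are unchanged, so they remain non-negative and sum to one; what needs checking is that the relabeling $(\vec{l},\vec{k})\mapsto P_{\vec{l},\vec{k}}$ distributes these weights over well-defined, distinct Pauli operators. Since conjugation by $W$ is a bijection of $\mathbb{G}_n$, distinct Pauli strings map to distinct Pauli strings, so $\mathcal{Q}$ has exactly the standard form of \eqref{eq:Pauli-channel} with a valid probability distribution. I expect the only delicate point to be the phase bookkeeping: the products $X^{\vec{l}}Z^{\vec{k}}$ are not themselves Hermitian and pick up signs under Clifford conjugation, so one must confirm that every such phase cancels in the sandwiched expression, which it does precisely because each Kraus-like operator appears paired with its adjoint. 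Everything else is a direct substitution.
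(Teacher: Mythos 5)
Your proposal is correct and follows essentially the same route as the paper's proof: conjugate each Pauli operator through $W$ using the Clifford normalizer property and repackage the result as a Pauli channel acting after $\mathcal{W}$. Your treatment is in fact slightly more careful than the paper's, which writes $W X^{\vec{l}}Z^{\vec{k}}W\ad = X^{\vec{m}}Z^{\vec{n}}$ without addressing the possible phase in $\{\pm 1,\pm i\}$; your observation that the phase cancels because each conjugated operator appears paired with its adjoint fills that small gap.
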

\begin{proof}
From \eqref{eq:Pauli-channel} it follows that
\begin{align}
\mathcal{W}\circ \PC(\rho) & = W\left(\sum_{\vec{l}, \vec{k}} p_{\vec{l}, \vec{k}} X^{\vec{l}}Z^{\vec{k}}\rho Z^{\vec{k}}X^{\vec{l}}\right)W^{\dagger} = \sum_{\vec{l}, \vec{k}} p_{\vec{l}, \vec{k}} (W X^{\vec{l}} Z^{\vec{k}}W^{\dagger}) (W\rho W^{\dagger}) (WZ^{\vec{k}}X^{\vec{l}}W^{\dagger})\\
& = \sum_{\vec{l}, \vec{k}} p_{\vec{l}, \vec{k}} X^{\vec{m}(\vec{l}, \vec{k})}Z^{\vec{n}(\vec{l}, \vec{k})} W\rho W^{\dagger} Z^{\vec{n}(\vec{l}, \vec{k} )} X^{\vec{m}(\vec{l}, \vec{k} )}\\
& = (\mathcal{Q}\circ \mathcal{W})(\rho)~.
\end{align}
The third equality follows from the definition of a Clifford unitary \eqref{def:clifford-group}, while the last equality follows from $\eqref{eq:Pauli-channel}$.
\end{proof}

\section{Noisy entangling and disentangling gates in FUMC}\label{sec:noisy-encoding}
For the proofs given in Appendices~\ref{sec:proof-thm-1}--\ref{sec:proof-corol}, we will make use of some properties of the noisy versions of entangling $E$ and disentangling $E\ad$ gates that appear in FUMC. Hence, it is helpful to first state these properties in this appendix. Recall that, for Pauli gate noise acting during $E$ or $E\ad$, we assume that global Pauli channels act before and after each Hadamard, as well as before and after each CNOT. This noise model incorporates the case when there could be correlated Pauli noise acting on different qubits during $E$ and $E\ad$. We note that the noisy entangling gate is the same for both the HST and the LHST. 
   
 Let $E = E^{AB}$ denote the ideal entangling gate, which can be split into a tensor product of two qubit entangling gates $E^{A_j B_j}$ as
  \begin{align}\label{eq:EAB-tensor-product}
      E^{AB}=E^{A_1B_1}\otimes E^{A_2B_2}\otimes\dotsb\otimes E^{A_nB_n}=\bigotimes_{j=1}^n E^{A_jB_j}. 
  \end{align}
Moreover, each $E^{A_j B_j}$ consists of a Hadamard gate acting on $A_j$ followed by a CNOT gate acting on both $A_j$ and $B_j$. In the quantum channel notation we write this as
$
    \mathcal{E}^{A_jB_j}=\mathcal{C}_X^{A_jB_j}\circ(\mathcal{H}^{A_j}\otimes\mathcal{I}^{B_j}),
$
where $\mathcal{H}^{A_j}$ are the quantum channels that implement the Hadamard gates and $\mathcal{C}_X^{A_jB_j}$ are the quantum channels that implement the CNOTs. The noisy version of $\mathcal{E}^{AB}$, which we denote by $\widetilde{\mathcal{E}}^{AB}$, is 
\begin{equation}\label{eq:noisy-encoding-channel}
    \widetilde{\mathcal{E}}^{AB}\coloneqq \bigotimes_{j=1}^{n}\mathcal{R}^{AB}_j\circ\mathcal{C}_X^{A_jB_j}\circ\bigotimes_{j=1}^{n}\mathcal{Q}_j^{AB}\circ(\mathcal{H}^{A_j}\otimes\mathcal{I}^{B_j})\circ \mathcal{P}^{AB}_j,
\end{equation}
where $\mathcal{P}^{AB}_j$, $\mathcal{Q}^{AB}_j$, and $\mathcal{R}^{AB}_j$ are $2n$-qubit global Pauli channels for all $i\in \{1, \dots, n\}$, as defined in \eqref{eq:2n-qubit-Pauli-channel}. Since both Hadamard and CNOT gates are Clifford unitaries, by using Lemma \ref{lem:clifford-conjugation} we find that   
\begin{equation}
    \widetilde{\mathcal{E}}^{AB}\coloneqq \mathcal{M}^{AB}\circ\bigotimes_{j=1}^{n}\mathcal{C}_X^{A_jB_j}\circ\bigotimes_{j=1}^{n}(\mathcal{H}^{A_j}\otimes\mathcal{I}^{B_j}),
\end{equation}
where $\mathcal{M}^{AB}$ is another Pauli channel. 

We now apply $\widetilde{\mathcal{E}}^{AB}$ on the all-zeros state $\dya{\vec{0, 0}}^{AB}$. Consider the following chain of equalities:
\begin{align}
\widetilde{\mathcal{E}}^{AB} (\dya{\vec{0,0}}^{AB})  = \widetilde{\mathcal{E}}^{AB} \bigg(\frac{1}{2^{2n}}  \sum_{\vec{a}, \vec{b}} Z_A^{\vec{a}} \otimes Z_B^{\vec{b}}\bigg)
 = \frac{1}{2^{2n}} \sum_{\vec{a}, \vec{b}} m^{AB}_{\vec{a}, \vec{a}, \vec{b}, \vec{b}}  X_A^{\vec{a}}Z^{\vec{b}}_A\otimes X^{\vec{a}}_BZ^{\vec{b}}_B~,\label{eq:global-noisy-E-on-0state}
\end{align}
where we used \eqref{eq-allzero-Pauli}, \eqref{eq:2n-qubit-Pauli-channel}, and the following identities for all  $j \in \{1, \dots n\}$: 
\begin{align}
  (\mathcal{H}^{A_j} \otimes \mathcal{I}^{B_j})(Z^{a_j}_{A_j} \otimes Z_{B_j}^{b_j}) =  X_{A_j}^{a_j}\otimes Z_{B_j}^{b_j} ~,\,\,
    ( \mathcal{C}_X^{A_jB_j})(X_{A_j}^{a_j}\otimes\id_B)=X_{A_j}^{a_j}\otimes X_{B_j}^{a_j}~,\,\,
     ( \mathcal{C}_X^{A_jB_j})(\id_{A_j}\otimes Z_{B_j}^{b_j})&=Z_{A_j}^{b_j}\otimes Z_{B_j}^{b_j}~.
\end{align}

The noisy disentangling channel for the HST is given by the adjoint of the noisy entangling channel, as defined in \eqref{eq:noisy-encoding-channel}. On the other hand, since in the LHST only two qubits $A_jB_j$ are measured for a given run of the experiment, the disentangling channel is applied only on the $A_jB_j$ pair. However, we assume that global Pauli channels act on $2n$ qubits before and after the  Hadamard and CNOT gate. For each $j\in \{1, \dots, n\}$, the disentangling channel is given by the adjoint of the following channel:
\begin{align}
\widetilde{\mathcal{E}}^{'AB}_j &\coloneqq \mathcal{R}^{AB}_j \circ (\mathcal{C}_X^{A_jB_j}\otimes \mathcal{I}^{\overline{A}_j \overline{B}_j})\circ Q^{AB}_j \circ (\mathcal{H}^{A_j}\otimes \mathcal{I}^{B_j}\otimes \mathcal{I}^{\overline{A}_j \overline{B}_j})\circ \mathcal{P}^{AB}_j~,\\
& = \mathcal{M}^{AB}_j\circ(\mathcal{C}_X^{A_jB_j}\otimes \mathcal{I}^{\overline{A}_j \overline{B}_j}) \circ (\mathcal{H}^{A_j}\otimes \mathcal{I}^{B_j}\otimes \mathcal{I}^{\overline{A}_j \overline{B}_j}),\label{def:adj-decoding-channel-lhst}
\end{align}
where $\mathcal{P}^{AB}_j$, $\mathcal{Q}^{AB}_j$,  $\mathcal{R}^{AB}_j$, and $\mathcal{M}^{AB}_j$ are $2n$-qubit global Pauli channels, as defined in \eqref{eq:2n-qubit-Pauli-channel}, and we used Lemma \ref{lem:clifford-conjugation}. We remark that the Pauli channels are defined with a  $j$  subscript in \eqref{def:adj-decoding-channel-lhst} to emphasize that for different runs of the experiment the Pauli channels that act could be different. 

From arguments similar to those used to derive \eqref{eq:global-noisy-E-on-0state}, we find that 
\begin{align}
\widetilde{\mathcal{E}}^{'AB}_j(\dya{0, 0}^{A_jB_j}\otimes \id_{\overline{A}_j \overline{B}_j}) = \frac{1}{2^{2}} \sum_{a_j, b_j = 0}^1 m^{AB}_{a_j,a_j, b_j, b_j} (X^{a_j}_{A_j}Z^{b_j}_{A_j} \otimes X^{a_j}_{B_j} Z^{b_j}_{B_j}\otimes \id_{\overline{A}_j \overline{B}_j})~.
\end{align}

\section{Measurement noise in FUMC}\label{sec:meas-noise-FUMC}

For the proofs given in Appendices~\ref{sec:proof-thm-1}--\ref{sec:proof-corol}, we will make use of some properties of measurement noise in FUMC. Hence, it is helpful to first state these properties in this appendix.

Let $P_{\vec{0}}$ denote the POVM element associated with getting the all-zeros outcome in the noiseless HST, which can be expressed as 
$    P_{\vec{0}}  \coloneqq  \dya{\vec{0}} = \bigotimes_{j=1}^{2n} \dya{0}.$ We consider the measurement noise as follows. For each qubit $j$, where $j\in \{1, \dots, 2n\}$, the ideal projector $\dya{0}$ 
gets replaced by $p^{(j)}_{00} \dya{0} +p^{(j)}_{01}\dya{1}$. Moreover, we assume that for all $j$ the following strict inequality holds:
$
p^{(j)}_{00} > p^{(j)}_{01}~.
$

Let $\widetilde{P}_{\vec{0}}$ denote the noisy POVM element. Then the following equalities hold: 
\begin{align}
    \widetilde{P}_{\vec{0}}
    &=\bigotimes_{j=1}^n\left(p^{A_j}_{00}\ket{0}\bra{0}^{A_j}+p^{A_j}_{01}\ket{1}\bra{1}^{A_j}\right)\otimes\bigotimes_{j=1}^n\left(p^{B_j}_{00}\ket{0}\bra{0}^{B_j}+p^{B_j}_{01}\ket{1}\bra{1}^{B_j}\right)\\
    &=\sum_{\vec{a},\vec{b}}p^A(\vec{a})p^B(\vec{b})\dya{\vec{a},\vec{b}}^{AB}\,,\label{def:meas-noise-HST}
\end{align}
  with $p^A(\vec{a})=(p^{A_1}_{01})^{a_1}\dotsb (p^{A_n}_{01})^{a_n}(p^{A_1}_{00})^{1-a_1}\dotsb(p^{A_n}_{00})^{1-a_n}$ and $p^B(\vec{b})=(p^{B_1}_{01})^{b_1}\dotsb(p^{B_n}_{01})^{b_n}(p^{B_1}_{00})^{1-b_1}\dotsb(p^{B_n}_{00})^{1-b_n}$.
 
 \subsection{Effective noisy measurement operator for the HST} \label{sec:eff-noisy-meas-op-HST}
 In the noiseless HST, the measurement is preceded by the  disentangling unitary $(E^{AB})^{\dagger}$, where $E^{AB}$ is defined in \eqref{eq:EAB-tensor-product}. In the Heisenberg picture, this corresponds to the evolution of the measurement operator with respect to the unitary $E^{AB}$.  We now derive the effective noisy POVM element as the evolution of $\widetilde{P}_{\vec{0}}$ under the noisy entangling channel $\widetilde{\mathcal{E}}^{AB}$ (defined in Section \ref{sec:noisy-encoding}). 
 
Using \eqref{eq-allzero-Pauli}, $\dya{\vec{a}, \vec{b}}^{AB}$ can be expressed as follows: 
\begin{align}
    \dya{\vec{a},\vec{b}}^{AB}
    &=(X_A^{\vec{a}}\otimes X_B^{\vec{b}})\left(\frac{1}{2^{2n}}\sum_{\vec{l},\vec{k}} Z_A^{\vec{l}}\otimes Z_B^{\vec{k}}\right)(X_A^{\vec{a}}\otimes X_B^{\vec{b}})=\frac{1}{2^{2n}}\sum_{\vec{l},\vec{k}}(-1)^{\vec{a}\cdot\vec{l}}(-1)^{\vec{b}\cdot\vec{k}} Z_A^{\vec{l}}\otimes Z_B^{\vec{k}}~,\label{eq:vec-ab-Puali-op}
\end{align}
 where we used the properties of the Pauli operators as defined in \eqref{eq-pauli_prop}. Then, from \eqref{eq:global-noisy-E-on-0state} and  the linearity of quantum channels,  it follows that 
 \begin{align}
 \widetilde{\mathcal{E}}^{AB}(\dya{\vec{a},\vec{b}}^{AB}) &= \frac{1}{2^{2n}}  \sum_{\vec{l}, \vec{k}} m^{AB}_{\vec{l}, \vec{l}, \vec{k}, \vec{k}}  (-1)^{\vec{a}\cdot\vec{l}}(-1)^{\vec{b}\cdot\vec{k}}  X_A^{\vec{l}}Z^{\vec{k}}_A\otimes X^{\vec{l}}_BZ^{\vec{k}}_B~. \label{eq:EN-on-vec-ab}
 \end{align}
Therefore, from \eqref{def:meas-noise-HST} and \eqref{eq:EN-on-vec-ab} it follows that 
\begin{align}
\widetilde{\mathcal{E}}^{AB}(\widetilde{P}_{\vec{0}})
= \frac{1}{2^{2n}} \sum_{\vec{a}, \vec{b}} m^{AB}_{\vec{a}, \vec{a}, \vec{b}, \vec{b}}  \widehat{p}^{A}_{\vec{a}, \vec{b}}Z^{\vec{b}}_AX_A^{\vec{a}}\otimes Z^{\vec{b}}_BX^{\vec{a}}_B\,,\label{eq:EN-on-P0N1}
\end{align}
where  $\widehat{p}^{A}_{\vec{a}, \vec{b}} =\sum_{\vec{l}, \vec{k}} (-1)^{\vec{a}\cdot\vec{l}} (-1)^{\vec{b}\cdot \vec{k}} p^A(\vec{l}) p^B(\vec{k})$, and $p^A(\vec{l})$ and $p^B(\vec{k})$ are probability distributions as in \eqref{def:meas-noise-HST}. 
  
 \subsection{Effective noisy measurement operator for the LHST}
 In the LHST, a noisy measurement on two qubits $A_jB_j$ is preceded by the disentangling unitary $(E^{A_j B_j})^{\dagger}$ acting on the same two qubits. Similar to Section \ref{sec:eff-noisy-meas-op-HST}, we now derive the effective POVM element as the evolution of  the operator $Q^{(j)}_{00}$ (defined below) under the adjoint of the noisy disentangling channel, as defined in \eqref{def:adj-decoding-channel-lhst}. The noisy POVM for the qubits $A_jB_j$ is given by 
\begin{align}
\widetilde{Q}^{(j)}_{00} = \sum_{a', b'=0}^1 p^{A_j}(a')p^{B_j}(b') \ket{a', b'}\bra{a',b'
}^{A_j B_j} ,
\end{align}
which follows from \eqref{def:meas-noise-HST}. Moreover, the overall noisy POVM for the LHST is defined as 
\begin{align}
\widetilde{Q}_{00} = \frac{1}{n}\sum_{j=1}^n \widetilde{Q}^{(j)}_{00} \otimes \id_{\overline{A}_j \overline{B}_j}~.
\end{align}
 By using arguments similar to those used in \eqref{eq:vec-ab-Puali-op}, \eqref{eq:EN-on-vec-ab}, and \eqref{eq:EN-on-P0N1}, we find that 
 \begin{align}
 \widetilde{\mathcal{E}}^{'A B}_j(\widetilde{Q}^{(j)}_{00} \otimes \id_{\overline{A}_j \overline{B}_j}) 
 & = \frac{1}{2^2}\sum_{a_j, b_j} m^{AB}_{a_j,a_j,b_j, b_j}\widehat{p}^{A_j}_{a_j, b_j} Z^{b_j}_A X^{a_j}_{A} \otimes Z^{b_j}_{B_j}X^{a_j}_{B_j}\otimes \id_{\overline{A}_j \overline{B}_j} ~,
 \end{align}
 where $ \widetilde{\mathcal{E}}^{'A B}_j$ is given by \eqref{def:adj-decoding-channel-lhst} and  $\widehat{p}^{A_j}_{a_j, b_j} = \sum_{a',b'=0}^1 (-1)^{a_j\cdot a' } (-1)^{b_j \cdot b'} p^{A_j}(a') p^{B_j}(b')$. 

Therefore, the overall effective noisy POVM for the LHST is defined as
\begin{align}\label{def:eff-meas-op-LHST}
\widetilde{\mathcal{E}}^{'AB}(\widetilde{Q}_{00}) 
& =  \frac{1}{2^{2}} \frac{1}{n} \sum_{j=1}^n \sum_{a_j, b_j = 0}^1 m^{AB}_{a_j,a_j,b_j, b_j}  \widehat{p}^{A_j}_{a_j, b_j}Z^{b_j}_A X^{a_j}_{A} \otimes Z^{b_j}_{B_j}X^{a_j}_{B_j} \otimes \id_{\overline{A}_j \overline{B}_j}\,.
\end{align}

\section{Proof of Theorem \ref{thm1}}\label{sec:proof-thm-1}

Before providing a proof of Theorem \ref{thm1}, we prove the following lemma.
\begin{lemma}
\label{lemmaGDN}
Let $C_{\QC}(V)$ be a cost function of $V$ with $V\in \VB_d$, and $\VB_d$ the set of $d\times d$ unitary matrices. Additionally suppose that $C_{\QC}(V)$ can be evaluated using a quantum circuit denoted $\QC$ as follows:
\begin{align}
C_{\QC}(V) \coloneqq \Tr[\Lambda \EC_{V}(\rho)],
\end{align}
where $\rho$ is a quantum state, $\Lambda$ denotes a POVM element and $\EC_{V}$ denotes the noisy unital quantum channel describing the evolution of the state throughout the computation, which depends on the unitary $V$. Then $\Ct_{\QC}(V)$ exhibits strong-OPR to a noise model composed of $\EC_{V}$ and a global depolarizing channels acting continuously throughout the computation.

\end{lemma}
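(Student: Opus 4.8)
The plan is to show that layering global depolarizing noise on top of the already-unital evolution $\EC_V$ only rescales and shifts the cost by constants that do not depend on $V$, so that the set of minimizers is untouched. Two elementary facts drive everything. First, a global depolarizing channel $\DC_p(\cdot) = p\,(\cdot) + (1-p)\Tr[\cdot]\,\id/d$ commutes with every unital trace-preserving channel $\EC$: indeed $\EC(\DC_p(\rho)) = p\,\EC(\rho) + (1-p)\Tr[\rho]\,\EC(\id)/d = p\,\EC(\rho) + (1-p)\Tr[\rho]\,\id/d = \DC_p(\EC(\rho))$, using $\EC(\id)=\id$ and trace preservation. Second, depolarizing channels compose to a depolarizing channel, $\DC_{p_1}\circ\DC_{p_2} = \DC_{p_1 p_2}$.

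First I would write the full noisy evolution as the channel $\EC_V$ with depolarizing factors $\DC_{p_0},\DC_{p_1},\dots$ interleaved between its constituent pieces, each piece being either a unitary conjugation (trivially unital, since $U\id U\ad=\id$) or one of the unital noise channels comprising $\EC_V$. Applying the commutation fact repeatedly, I push every depolarizing factor toward the measurement end and merge them via the composition fact into a single depolarizing channel $\DC_{p_{\mathrm{eff}}}$ with $p_{\mathrm{eff}}=\prod_i p_i$. The essential point is that $p_{\mathrm{eff}}$ is a fixed noise-dependent constant that is \emph{independent of $V$}, because neither the depolarizing strengths nor the circuit structure depend on $V$.

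Next I would evaluate the noisy cost directly. Writing $\Ct_{\QC}(V) = \Tr[\Lambda\,\DC_{p_{\mathrm{eff}}}(\EC_V(\rho))]$ and expanding the depolarizing channel yields
\begin{align}
\Ct_{\QC}(V) = p_{\mathrm{eff}}\,\Tr[\Lambda\,\EC_V(\rho)] + (1-p_{\mathrm{eff}})\frac{\Tr[\Lambda]}{d} = p_{\mathrm{eff}}\, C_{\QC}(V) + (1-p_{\mathrm{eff}})\frac{\Tr[\Lambda]}{d}\,,
\end{align}
where I used $\Tr[\EC_V(\rho)] = \Tr[\rho] = 1$ and the fact that the POVM element $\Lambda$ does not depend on $V$. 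Hence $\Ct_{\QC}$ is an affine function of $C_{\QC}$ with slope $p_{\mathrm{eff}}$ and a $V$-independent offset.

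Finally, since $p_{\mathrm{eff}}>0$ for any nontrivial depolarizing noise, an affine map with strictly positive slope preserves the argmin set, so $V$ minimizes $\Ct_{\QC}$ if and only if it minimizes $C_{\QC}$; therefore $\widetilde{\VB}_d^{\opt}=\VB_d^{\opt}$, which is precisely strong-OPR. The point requiring the most care is the commutation step: one must check that every elementary segment through which a depolarizing channel is commuted is genuinely unital (guaranteed here because the constituents are unitary conjugations and unital noise channels), and one must note that the argument degenerates only in the fully-depolarizing limit $p_{\mathrm{eff}}=0$, which is excluded.
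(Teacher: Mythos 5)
Your proof is correct and follows essentially the same route as the paper's: interleave the global depolarizing channels with the unital constituents of $\EC_V$, use unitality to collapse them into a single $V$-independent multiplicative factor, and conclude that $\Ct_{\QC}(V)= p\, C_{\QC}(V) + \text{const}$ is an affine function of the noiseless cost with positive slope, so the optimizer sets coincide. Your version is marginally more careful in keeping the $\Tr[\Lambda]/d$ offset explicit and in flagging the degenerate case $p_{\mathrm{eff}}=0$, but these are presentational refinements of the same argument.
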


\begin{proof}
Without loss of generality let us  decompose $\EC_{V}$ as $k$ noisy unital quantum channels: $\mathcal{E}_{V}= \mathcal{E}^k_{V}\circ\ldots\circ\mathcal{E}^1_{V}$. 
In the presence of global depolarizing noise acting throughout the computation, the cost function can now be expressed as
\begin{equation}
\widetilde{C}_{\QC}(V)=\Tr\left[\Lambda (\mathcal{D}^{k+1}\circ\mathcal{E}^k_{V}\circ\ldots\circ\mathcal{D}^2\circ\mathcal{E}^1_{V}\circ\mathcal{D}^1)(\rho)\right]\,,
\end{equation} 
where we have interleaved the channels $\mathcal{E}^i_{V}$ with global depolarizing channels $\mathcal{D}^i$. From Definition \ref{def-GDN} and from the fact that $\EC^i_{V}(\id) = \id$, it follows that
\begin{align}
\widetilde{C}_{\QC}(V)&=\Tr\left[\Lambda(\mathcal{D}^{k+1}\circ\mathcal{E}^k_{V}\circ\ldots\circ\mathcal{D}^2\circ\mathcal{E}^1_{V}\circ\mathcal{D}^1)(\rho)\right]= p\Tr\left[\Lambda(\mathcal{E}^k_{V}\circ\ldots\mathcal{E}^2_{V}\circ\mathcal{E}^1_{V})(\rho)\right] + (1-p)\Tr\left[\Lambda \id\right]/2^n\\
&=p C_{\QC}(V) + (1-p)/2^n
\label{eq-GDn-branch2}
\end{align}
where $p=p_{k+1}\ldots p_1$. Let $\VB_d^{\opt}$  denote the sets of unitaries that optimize $C_{\QC}(V)$ i.e.,
\begin{align}
    \mathbb{V}_d^{\opt} &= \{V' \in \VB_d : C_{\QC}(V') = \min_{V \in \VB_d} C_{\QC}(V)\}\,.
\end{align}
Then, from \eqref{eq-GDn-branch2} we have that 
any unitary in $\mathbb{V}_d^{\opt}$ will also optimize $\widetilde{C}_{\QC}(V)$. Hence $\Ct_{\QC}(V)$ exhibits strong-OPR to a noise model composed of $\EC_{V}$ and a global depolarizing channels acting  throughout the computation.

\end{proof}
By means of Lemma \ref{lemmaGDN} we know that if we show that a quantity exhibits OPR to a noise model $\mathcal{N}$ which does not include global depolarizing noise acting continuously throughout the computation, then said quantity will also exhibit OPR if we include global depolarizing noise to $\mathcal{N}$.  

We now provide a proof for Theorem \ref{thm1}.
\begin{theoremApp}
The cost functions $C_{\HST}$ and $C_{\LHST}$ exhibit strong-OPR to  Noise Model 1 in Definition \ref{def:HST-noise-model-1}.
\end{theoremApp}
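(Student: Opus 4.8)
The plan is to reduce the noisy fidelity to a non-negatively weighted sum of the same ``Pauli-conjugation traces'' that build the noiseless fidelity, and then show that both sums are extremized at exactly the same unitaries. First I would invoke Lemma~\ref{lemmaGDN} to discard the global depolarizing noise of item~(1): since it only rescales the cost as $\widetilde{C}=pC+(1-p)/2^n$, it leaves the optimal set invariant, so it suffices to treat items (2)--(6). I would then pass to the Heisenberg picture, folding the disentangling gate, the Pauli channel at $\tau_2$, and the measurement noise into a single effective measurement operator. The Pauli gate noise during $E$ and $E^\dagger$ is already handled in Appendix~\ref{sec:noisy-encoding} (pushing the Pauli channels through the Clifford gates via Lemma~\ref{lem:clifford-conjugation}), and the measurement noise in Appendix~\ref{sec:meas-noise-FUMC}. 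Combining these, the state entering $W=V^\dagger U$ and the effective measurement operator both inherit the ``correlated'' maximally-entangled form
\begin{equation}
\sigma=\frac{1}{2^{2n}}\sum_{\vec a,\vec b}\mu_{\vec a,\vec b}\,X_A^{\vec a}Z_A^{\vec b}\otimes X_B^{\vec a}Z_B^{\vec b}\,,\qquad M=\frac{1}{2^{2n}}\sum_{\vec a,\vec b}\nu_{\vec a,\vec b}\,Z_A^{\vec b}X_A^{\vec a}\otimes Z_B^{\vec b}X_B^{\vec a}\,,
\end{equation}
in which the $A$- and $B$-labels are locked together, with $\mu_{\vec a,\vec b}$ (resp.\ $\nu_{\vec a,\vec b}$) a product of the entangling (resp.\ disentangling) coefficient $m_{\vec a,\vec a,\vec b,\vec b}$, the $\tau_1$ (resp.\ $\tau_2$) Pauli coefficient, and, for $\nu$, the measurement factor $\widehat p_{\vec a,\vec b}$ from \eqref{eq:EN-on-P0N1}.

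Next I would insert the noise acting during $W$. Because the depolarizing channel on $A$ commutes with conjugation by any unitary on $A$ and is diagonal in the Pauli basis, and because the non-unital Pauli channel on $B$ acts on a system disjoint from $W$, all continuous-in-time noise between $\tau_1$ and $\tau_2$ consolidates into one depolarizing channel $\mathcal{D}^A$ and one non-unital Pauli channel $\mathcal{P}^B_{\text{NU}}$. Evaluating $\widetilde{F}=\Tr[M\,(\mathcal{P}^B_{\text{NU}}\circ\mathcal{D}^A\circ\mathcal{W}^A)(\sigma)]$ in the Pauli basis, the trace over $B$ forces the (conjugated) $A$-label of $\sigma$ to match that of $M$, collapsing the double sum. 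Two checks are needed: (i) the extra terms produced by $\mathcal{P}^B_{\text{NU}}(\id)$ arise only from the $\vec a=\vec b=\vec 0$ term, whose $A$-part is left as $\id$ by $\mathcal{D}^A$ and $\mathcal{W}^A$, so tracing against $M$ pins its $A$-label to $\vec 0$ and, by the locked structure, its $B$-label to $\vec 0$ as well, annihilating every added non-unital term; and (ii) each surviving term reduces to $\Tr[\Sigma_{\vec a,\vec b}^\dagger\,W\,\Sigma_{\vec a,\vec b}\,W^\dagger]$ with $\Sigma_{\vec a,\vec b}=X^{\vec a}Z^{\vec b}$. This yields
\begin{equation}
\widetilde{F}=\sum_{\vec a,\vec b}w_{\vec a,\vec b}\,\Tr\!\left[\Sigma_{\vec a,\vec b}^\dagger\,W\,\Sigma_{\vec a,\vec b}\,W^\dagger\right]\,,
\end{equation}
with weights $w_{\vec a,\vec b}\geq 0$ equal to products of the above coefficients together with the depolarizing and non-unital scalings.

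Finally I would compare this with the noiseless fidelity, which by the Pauli twirl identity $\sum_{\Sigma}\Sigma^\dagger W\Sigma=d\,\Tr[W]\,\id$ is the \emph{uniformly} weighted version of the same sum, $F_{\HST}=d^{-3}\sum_{\vec a,\vec b}\Tr[\Sigma_{\vec a,\vec b}^\dagger W\Sigma_{\vec a,\vec b}W^\dagger]=|\Tr W|^2/d^2$. For each $\Sigma$, Cauchy--Schwarz gives $|\Tr[\Sigma^\dagger W\Sigma W^\dagger]|\leq d$, with equality iff $W$ commutes with $\Sigma$ up to a phase; hence $W\propto\id$ simultaneously saturates every term, placing $\{V\propto U\}$ in both optimal sets and giving $\VB_d^{\opt}\subseteq\widetilde{\VB}_d^{\opt}$. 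The reverse inclusion---the genuinely strong-OPR part---is where the difficulty lies: a noisy optimum must saturate each term of positive weight, forcing $W$ to commute with all $\Sigma_{\vec a,\vec b}$ having $w_{\vec a,\vec b}>0$, which yields $W\propto\id$ only if these Paulis generate the full algebra. This positivity is supplied by the measurement factors, which are strictly positive: the assumption $p_{kk}>p_{kl}$ makes each one-qubit factor $\sum_{l}(-1)^{a l}p(l)>0$. Establishing strict positivity of the weights across all labels, and thereby excluding spurious noisy optima (the failure of which is exactly what degrades strong-OPR to weak-OPR in the FISC setting), is the main obstacle; the remainder is bookkeeping of signs and coefficients. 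I would then repeat the argument verbatim for $C_{\LHST}$ using \eqref{def:eff-meas-op-LHST} term by term in $j$, and conclude via \eqref{eqn7}, which guarantees that $C_{\HST}$ and $C_{\LHST}$ share the same noiseless optimal set, so that both exhibit strong-OPR.
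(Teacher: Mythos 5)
Your proposal follows essentially the same route as the paper's proof: Lemma~\ref{lemmaGDN} to strip off the global depolarizing noise, the effective noisy entangling/measurement operators of Appendices~\ref{sec:noisy-encoding} and~\ref{sec:meas-noise-FUMC}, a Pauli-basis expansion in which the $A$- and $B$-labels are locked together so that the trace over $B$ collapses the double sum (and annihilates the non-unital contributions from $\PC_{\text{NU}}^B(\id)$), and finally a Cauchy--Schwarz bound on the resulting non-negatively weighted sum of $\Tr[\Sigma_{\vec a,\vec b}^\dagger W \Sigma_{\vec a,\vec b} W^\dagger]$ terms, repeated for the LHST. The reverse inclusion $\widetilde{\VB}_d^{\opt}\subseteq\VB_d^{\opt}$ that you flag as the main obstacle is indeed the delicate point, but the paper's own proof treats it no more carefully (it only assumes $\kappa^{AB}_{\vec a,\vec a,\vec b,\vec b}\geq 0$ and asserts equality of the optimal sets), so your sketch is faithful to the published argument.
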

\begin{proof}
We begin by breaking up the HST circuit into three time intervals. In the first time interval, the noisy entangling channel $\widetilde{\mathcal{E}}^{AB}$ is applied. In the second time interval, the quantum channel  $\mathcal{V}^{\dagger} \circ \mathcal{U}$ implements the unitaries $U$ and $V\ad$. Finally, in the third time interval $(\widetilde{\mathcal{E}}^{AB})^\dagger$ is applied. We assume that the global depolarizing noise occurs on systems $AB$ during all three time intervals and the global depolarizing noise occurs on system $A$ during the implementation of $\VC\ad\circ \UC$.  Moreover, suppose that two different global Pauli channels $\mathcal{Q}^{AB}$ and $\widehat{\mathcal{Q}}^{AB}$ act at times $\tau_1$ and $\tau_2$, respectively, and global non-unital Pauli channels act continuously on system $B$ in between $\tau_1$ and $\tau_2$. 

Let $\rho^{(0)}$ denotes the initial state of the HST circuit and is given by $\rho^{(0)} = \dya{\vec{0}, \vec{0}}^{AB}$. At $\tau_1$ the state is
\begin{align}\label{eq:state-tau1-hst-thm1}
\rho^{(1)} = \mathcal{Q}^{AB}(\mathcal{D}_{p^{(1,k)}}^{AB} \circ \widetilde{\mathcal{E}}^{AB}_k \dots \mathcal{D}_{p^{(1,1)}}^{AB} \circ \widetilde{\mathcal{E}}^{AB}_1 (\rho^{(0)}))~,
\end{align}
where we have broken up the $\tau_1$  into $k$ time increments and $\widetilde{\mathcal{E}}^{AB}_k \circ \dots \widetilde{\mathcal{E}}^{AB}_1$ is the channel that implements the noisy entangling channel $\widetilde{\mathcal{E}}^{AB}$, as defined in \eqref{eq:noisy-encoding-channel}. Moreover, each $\widetilde{\mathcal{E}}^{AB}_i$ is followed by a global depolarizing channel $\mathcal{D}_{p^{(1,i)}}^{AB}$, where $p^{(r,s)}$ denotes the depolarizing probability for the $s$-th time increment of the $r$-th time interval. Then $\rho^{(1)}$ reduces to 
\begin{align}
\rho^{(1)} &=\mathcal{Q}^{AB} \left( \DC_{p^{(1,k)}}^{AB}\circ \widetilde{\EC}^{AB}_{k}... \widetilde{\EC}^{AB}_{2}(p^{(1,1)}\widetilde{\EC}^{AB}_1(\rho^{(0)}) + (1-p^{(1,1)}) \id / 2^{2n}  )\right)\\
    &= p^{(1)}\mathcal{Q}^{AB}\circ\widetilde{\EC}^{AB}(\rho^{(0)}) + (1-p^{(1)}) \id / d \label{eq:state-tau1-hst-thm1-1}  = p^{(1)} [ \frac{1}{2^{2n}} \sum_{\vec{a}, \vec{b}} \beta^{AB}_{\vec{a}, \vec{b}}X_A^{\vec{a}}Z^{\vec{b}}_A\otimes X^{\vec{a}}_BZ^{\vec{b}}_B ] + (1-p^{(1)}) \id / 2^{2n} ~,
\end{align}
where $p^{(1)} = p^{(1,1)} ... p^{(1,k)}$. The second equality follows from Lemma~\ref{lemmaGDN} as $\widetilde{\mathcal{E}}^{AB}$ consists of only unitary  and Pauli channels, and thus each $\widetilde{\mathcal{E}}^{AB}_i$ is a unital channel, where $i\in \{1, \dots, k \}$.
The last equality follows from \eqref{eq:global-noisy-E-on-0state} and \eqref{eq:2n-qubit-Pauli-channel}, where $\beta^{AB}_{\vec{a}, \vec{b}} = m^{AB}_{\vec{a}, \vec{a}, \vec{b}, \vec{b}}  q^{AB}_{\vec{a}, \vec{a}, \vec{b}, \vec{b}}$.

Similarly, the state at $\tau_2$ is given by
\begin{align}\label{eq:rho2-hst-thm1}
\rho^{(2)} = \widehat{\mathcal{Q}}^{AB}( \mathcal{D}^{AB}_{p^{(2,l)}}\circ\DC_{s^{(2,l)}}^A \circ (\mathcal{W}_l \otimes \mathcal{P}^B_{\text{NU},l}) \dots \mathcal{D}^{AB}_{p^{(2, 1)}}\circ \DC_{s^{(2,1)}}^A \circ (\mathcal{W}_1 \otimes \mathcal{P}_{\text{NU},1}^B)(\rho^{(1)}) ). 
\end{align}
We first find the action of the channel $\mathcal{W}_1 \otimes \mathcal{P}_{\text{NU},1}^B$ on $\rho^{(1)}$. Consider that 
\begin{align}
(\mathcal{W}_1 \otimes  \mathcal{P}_{\text{NU},1}^B)(\rho^{(1)}) 
&= \frac{1}{2^{2n}} (\mathcal{W}_1 \otimes \mathcal{P}_{\text{NU},1}^B) \Big[p^{(1)} \Big(\sum_{(\vec{a}, \vec{b}) \neq (\vec{0}, \vec{0}  )} \beta^{AB}_{\vec{a}, \vec{b}} X_A^{\vec{a}}Z^{\vec{b}}_A\otimes X^{\vec{a}}_BZ^{\vec{b}}_B \Big) + \id_{AB}   \Big]\\
& = \frac{1}{2^{2n}} \Big[p^{(1)} \Big(\sum_{(\vec{a}, \vec{b}) \neq (\vec{0}, \vec{0}  )} \beta^{AB}_{\vec{a}, \vec{b}} c^{(1)}_{\vec{a}, \vec{b}}W_1X_A^{\vec{a}}Z^{\vec{b}}_AW_1^{\dagger}\otimes X^{\vec{a}}_BZ^{\vec{b}}_B \Big) + \id_{AB} + \sum_{(\vec{g}, \vec{h}) \neq (\vec{0}, \vec{0})} d^{(1)}_{\vec{g}, \vec{h}} \id_A \otimes X_B^{\vec{g}} Z_B^{\vec{h}} \Big]~,
\end{align}
where we used the definition of a non-unital Pauli channel from \eqref{def:non-un-Puali1} and \eqref{def:non-un-Puali2}. 
We note that the terms that are independent of $W_i$ do not affect the global optima. Therefore, the only relevant term in \eqref{eq:rho2-hst-thm1} is
\begin{align}
\widetilde{\rho}^{(2)} = \frac{p^{(2)}s^{(2)}p^{(1)}}{2^{2n}} \widehat{\mathcal{Q}}^{AB}\left(\sum_{(\vec{a}, \vec{b}) \neq (\vec{0}, \vec{0}) } \beta^{AB}_{\vec{a}, \vec{b}}  \big(\prod_{i =1}^m c^{(i)}_{\vec{a}, \vec{b}}\big)  WX_A^{\vec{a}}Z_A^{\vec{b}}W^{\dagger}\otimes X^{\vec{a}}_BZ^{\vec{b}}_B \right)~,
\end{align}
where $p^{(2)} = p^{(2,1)}\dots p^{(2, l)}$ and $s^{(2)} = s^{(2,1)}\dots s^{(2, l)}$, and where we have used  \eqref{def:non-un-Puali1} and Lemma~\ref{lemmaGDN}. 

Finally, the relevant term after the action of the noisy disentangling channel is 
\begin{align}
\widetilde{\rho}^{(3)} &= \mathcal{D}^{AB}_{p^{(3,m)}} \circ (\widetilde{\mathcal{E}}^{AB}_{m})^{\dagger} \dots \mathcal{D}^{AB}_{p^{(3,1)}} \circ (\widetilde{\mathcal{E}}^{AB}_1)^\dagger (\widetilde{\rho}^{(2)}) = p^{(3)} (\widetilde{\mathcal{E}}^{AB} )^{\dagger}(\widetilde{\rho}^{(2)}) + (1-p^{(3)}) \id/2^{2n}~,
\end{align}
where $p^{(3)} = p^{(3,m)}\dots p^{(3, 1)}$. The last equality follows from the fact that the channel $(\widetilde{\mathcal{E}}^{AB} )^{\dagger}$ consists of unitary channels and Pauli channels, and thus each $(\widetilde{\mathcal{E}}^{AB}_i)^{\dagger}$ is a unital channel. Therefore, the term that decides the global optima in the HST is given by 
\begin{align}
\sigma^{(3)} = (\widetilde{\mathcal{E}}^{AB} )^{\dagger}\circ \widehat{\mathcal{Q}}^{AB}\left(\sum_{(\vec{a}, \vec{b}) \neq (\vec{0}, \vec{0})} \beta^{AB}_{\vec{a}, \vec{b}}  \big(\prod_{i =1}^m c^{(i)}_{\vec{a}, \vec{b}}\big)  WX_A^{\vec{a}}Z_A^{\vec{b}}W^{\dagger}\otimes X^{\vec{a}}_BZ^{\vec{b}}_B\right)~,
\end{align}
where we have omitted the scaling factors. 
Let 
$
\Ft_{\HST}(V) \propto f(V) \coloneqq \Tr\left[ \widetilde{P}_{\vec{0}} \sigma^{(3)}\right]$.
Then
\begin{align}
f(V) &= \Tr\left[(\widehat{\mathcal{Q}}^{AB}\circ\widetilde{\mathcal{E}}^{AB})  (\widetilde{P}_{\vec{0}})  \left(\sum_{(\vec{a}, \vec{b}) \neq (\vec{0}, \vec{0})} \beta^{AB}_{\vec{a}, \vec{b}}  \big(\prod_{i =1}^m c^{(i)}_{\vec{a}, \vec{b}}\big)  WX_A^{\vec{a}}Z_A^{\vec{b}}W^{\dagger}\otimes X^{\vec{a}}_BZ^{\vec{b}}_B\right)\right] \label{def:falpha-hst-thm1} \\
& = \Tr\left[\sum_{\substack{(\vec{a}, \vec{b}) \neq (\vec{0}, \vec{0})\\\widetilde{\vec{a}},\widetilde{\vec{b}}}} \kappa^{AB}_{\vec{a}, \widetilde{\vec{a}}, \vec{b}, \widetilde{\vec{b}}}  Z^{\widetilde{\vec{b}}}_A X_A^{\widetilde{\vec{a}}} W X^{\vec{a}}_A Z^{\vec{b}}_A W^{\dagger} \otimes Z^{\widetilde{\vec{b}}}_B X^{\widetilde{\vec{a}}}_B X^{\vec{a}}_B Z^{\vec{b}}_B \right]\\
& = \Tr_A\left[ \sum_{(\vec{a}, \vec{b}) \neq (\vec{0}, \vec{0})} \kappa^{AB}_{\vec{a}, \vec{a}, \vec{b}, \vec{b}} Z^{\vec{b}}_A X^{\vec{a}}_A W X^{\vec{a}}_A Z^{\vec{b}}_A W^{\dagger  } \right]~.
\end{align}
The second equality follows from \eqref{eq:EN-on-P0N1}, where we set  $
\kappa^{AB}_{\vec{a}, \widetilde{\vec{a}}, \vec{b}, \widetilde{\vec{b}}} \coloneqq (1/2^{2n})\widetilde{m}^{AB}_{\widetilde{\vec{a}}, \widetilde{\vec{a}}, \widetilde{\vec{b}}, \widetilde{\vec{b}}}  \widehat{p}^A_{\widetilde{\vec{a}}, \widetilde{\vec{b}}} 
\widehat{q}^{AB}_{\widetilde{\vec{a}}, \widetilde{\vec{a}}, \widetilde{\vec{b}}, \widetilde{\vec{b}}} \beta^{AB}_{\vec{a}, \vec{b}}  \big(\prod_{i =1}^m c^{(i)}_{\vec{a}, \vec{b}}\big)$.
The last equality follows from \eqref{eq-pauli_prop}. Let $\VB_d^{\opt}$ denote the sets of unitaries that optimize $F_{\HST}(V)$ (and hence $C_{\HST}(V)$) such that
\begin{align}
\mathbb{V}_d^{\opt} &= \{V' \in \VB_d : W = (V')\ad U= e^{i \phi} \id \,, \quad \text{for some} \quad \phi\in [0,2\pi]\}.\label{eq-setHST}
\end{align}
We remark that this set of unitaries also optimizes $F_{\LHST}(V)$ (and hence $C_{\LHST}(V)$). 
Then, for $V' \in \VB_d$ we find
$
f(V')=\sum_{(\vec{a}, \vec{b}) \neq (\vec{0}, \vec{0})} \kappa^{AB}_{\vec{a}, \vec{a}, \vec{b}, \vec{b}}$.
Let 
\begin{equation}
    T(V) \coloneqq \sum_{(\vec{a}, \vec{b}) \neq (\vec{0}, \vec{0})}\sqrt{\kappa_{\vec{a},\vec{a},\vec{b},\vec{b}}^{AB}} X_A^{\vec{a}}Z_A^{\vec{b}}W^\dagger \otimes\ket{\vec{a},\vec{b}},\quad S(V)\coloneqq \sum_{(\vec{a}',\vec{b}') \neq (\vec{0}, \vec{0})}\sqrt{\kappa_{\vec{a}',\vec{a}',\vec{b}',\vec{b}'}^{AB}} W^\dagger X_A^{\vec{a}'}Z_A^{\vec{b}'}\otimes\ket{\vec{a}',\vec{b}'}.
\end{equation}
Consider the following inequality: 
\begin{align}
f(V) = \vert \left< S(V) , T(V) \right> \vert
 \leq  \sqrt{\Tr(S(V)^{\dagger} S(V))} \sqrt{\Tr(T(V)^{\dagger} T(V))}
 = \sum_{(\vec{a}, \vec{b}) \neq (\vec{0}, \vec{0})} \kappa^{AB}_{\vec{a}, \vec{a}, \vec{b}, \vec{b}} \label{eq:arg-cs-ineq2},
\end{align}
where we used the Cauchy-Schwarz inequality. Moreover, note that the inequality in \eqref{eq:arg-cs-ineq2} is saturated for any matrix $V' \in \VB_d$ if we assume that the coefficients $\kappa_{\vec{a},\vec{a},\vec{b},\vec{b}}^{AB}$ characterizing the noise satisfy $\kappa_{\vec{a},\vec{a},\vec{b},\vec{b}}^{AB}\geq 0$. Therefore, the set of unitaries that optimize $F_{\HST}(V)$ (and hence $C_{\HST}(V)$) is $\widetilde{\VB}_d^{\opt}=\mathbb{V}_d^{\opt}$. According to Definition \ref{def:OPR}, the latter means that $C_{\HST}$ exhibits strong-OPR to Noise Model 1 in Definition \ref{def:HST-noise-model-1}.

We now show that the cost function $C_{\LHST}$ exhibits strong-OPR to Noise Model 1. The LHST corresponds to the optimization of the following function:
\begin{align}
\Ft_{\LHST}(V) \propto g(V) = \Tr\left[(\widehat{\mathcal{Q}}^{AB}\circ\widetilde{\mathcal{E}}^{'AB})(\widetilde{Q}_{00})  \left(\sum_{(\vec{a}, \vec{b}) \neq (\vec{0}, \vec{0})} \beta^{AB}_{\vec{a}, \vec{b}}  \big(\prod_{i =1}^m c^{(i)}_{\vec{a}, \vec{b}}\big) WX_A^{\vec{a}}Z_A^{\vec{b}}W^{\dagger}\otimes X^{\vec{a}}_BZ^{\vec{b}}_B \right)\right]~,
\end{align}
where we replaced the disentangling and measurement channels in \eqref{def:falpha-hst-thm1} with \eqref{def:eff-meas-op-LHST}. Consider the following:
\begin{align}
g(V) &= \Tr\Bigg[\left(\frac{1}{2^{2}} \frac{1}{n} \sum_{j=1}^n \sum_{a'_j, b'_j = 0}^1 \widetilde{m}^{AB}_{a'_j,a'_j,b'_j ,b'_j}  \widehat{p}^{A_j}_{a'_j, b'_j}\widehat{q}_{a'_j, a'_j, b'_j, b'_j}  Z^{b'_j}_{A_j}X^{a'_j}_{A_j} \otimes Z^{b'_j}_{B_j} X^{a'_j}_{B_j} \otimes \id_{\overline{A}_j \overline{B}_j} \right)\nonumber\\
&\qquad \qquad \qquad \qquad \qquad \qquad \times  \left(\sum_{(\vec{a}, \vec{b}) \neq (\vec{0}, \vec{0})} \beta^{AB}_{\vec{a}, \vec{b}}  \big(\prod_{i =1}^m c^{(i)}_{\vec{a}, \vec{b}}\big)  WX_A^{\vec{a}}Z_A^{\vec{b}}W^{\dagger}\otimes X^{\vec{a}}_BZ^{\vec{b}}_B\right)\Bigg]\\
& =\Tr\left[\sum_{j=1}^n \sum_{(\vec{a}, \vec{b}) \neq (\vec{0}, \vec{0})} \sum_{a'_j, b'_j = 0}^1  \xi^{(j)}_{\vec{a}, a'_j, \vec{b}, b'_j} (Z^{b'_j}_{A_j} X^{a'_j}_{A_j} \otimes \id_{\overline{A}_j})WX^{\vec{a}}_AZ^{\vec{b}}_AW^{\dagger} \otimes Z^{b'_j}_{B_j}X^{a'_j}_{B_j} X^{a_j}_{B_j} Z^{b_j}_{B_j} X^{\overline{a}_j}_{\overline{B}_j}Z^{\overline{b}_j}_{\overline{B}_j} \right] \label{eq:f1alpha-thm2-1}\\
& = \Tr_A\left[\sum_{j=1}^n  \sum_{(\vec{a}, \vec{b}) \neq (\vec{0}, \vec{0})} \sum_{a'_j, b'_j = 0}^1  \xi^{(j)}_{\vec{a}, a'_j, \vec{b}, b'_j} (Z^{b'_j}_{A_j} X^{a'_j}_{A_j} \otimes \id_{\overline{A}_j})WX^{\vec{a}}_AZ^{\vec{b}}_AW^{\dagger} \otimes \Tr_{B_j}\left(Z^{b'_j}_{B_j}X^{a'_j}_{B_j} X^{a_j}_{B_j} Z^{b_j}_{B_j}\right) \Tr_{\overline{B}_j}\left( X^{\overline{a}_j}_{\overline{B}_j}Z^{\overline{b}_j}_{\overline{B}_j}\right) \right]\label{eq-midstep}\\
& = \Tr_A\left[ \sum_{j=1}^n \sum_{ (a_j, b_j)\neq (0,0) }  \xi^{(j)}_{a_j, a_j, b_j, b_j} (Z^{b_j}_{A_j}X^{a_j}_{A_j} \otimes \id_{\overline{A}_j}) (W (X^{a_j}_{A_j} Z^{b_j}_{A_j} \otimes \id_{\overline{A}_j} )W^{\dagger} )  \right]\\
& \leq \sum_{j=1}^n \sum_{(a_j, b_j) \neq (0,0)} \xi^{(j)} _{a_j, a_j, b_j, b_j}\,, \label{eq-proofCHST}
\end{align}
where in \eqref{eq-midstep} we have split $\Tr_B$ into a contribution from qubit $B_j$ and a contribution on all qubits except $B_j$, and where $
\xi^{(j)}_{\vec{a}, a'_j, \vec{b}, b'_j} = (1/4n) \widetilde{m}^{A,B}_{a'_j,a'_j,b'_j b'_j}  \widehat{p}^{A_j}_{a'_j, b'_j}\widehat{q}_{a'_j, a'_j, b'_j, b'_j} \beta^{AB}_{\vec{a}, \vec{b}}  \left(\prod_{i =1}^m c^{(i)}_{\vec{a}, \vec{b}}\right)$. 
The first equality is derived from \eqref{def:eff-meas-op-LHST}, while the inequality follows from the arguments similar to \eqref{eq:arg-cs-ineq2}.

Here we remark that the inequality \eqref{eq-proofCHST} is saturated for any unitary matrix in the set of unitaries that optimize $F_{\HST}(V)$ (and hence $C_{L\HST}(V)$) given by \eqref{eq-setHST}. Hence,  $C_{\LHST}$ exhibits strong-OPR to Noise Model 1 in Definition \ref{def:HST-noise-model-1} if we assume that the coefficients $ \xi^{(j)}_{a_j, a_j, b_j, b_j}$ characterizing the noise satisfy $ \xi^{(j)}_{a_j, a_j, b_j, b_j}\geq 0$.
\end{proof}

\section{Proof of Theorem \ref{thm2}}\label{sec:proof-thm-2}

\begin{theoremApp}
The cost functions $C_{\HST}$ and $C_{\LHST}$ exhibit strong-OPR to Noise Model 2 in Definition \ref{def:HST-noise-model-2}.
\end{theoremApp}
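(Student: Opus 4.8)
The plan is to mirror the three-interval decomposition of the HST circuit used in the proof of Theorem~\ref{thm1}, since Noise Model~2 differs from Noise Model~1 only in \emph{where} the unital and non-unital Pauli noise act during the window between $\tau_1$ and $\tau_2$. First I would invoke Lemma~\ref{lemmaGDN} to strip away the continuous global depolarizing noise on $AB$ and on $A$: because every intervening channel is unital, each depolarizing step contributes only a $V$-independent additive constant together with an overall positive prefactor, neither of which moves the optimal set. Next I would combine the noisy-entangling results of Appendix~\ref{sec:noisy-encoding} with the global Pauli channel at $\tau_1$ to write the relevant part of the state just before $W$ in the locked-Pauli form $\tfrac{1}{2^{2n}}\sum_{\vec{a},\vec{b}}\beta^{AB}_{\vec{a},\vec{b}}\,X_A^{\vec{a}}Z_A^{\vec{b}}\otimes X_B^{\vec{a}}Z_B^{\vec{b}}$, exactly as in \eqref{eq:state-tau1-hst-thm1-1}.

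The genuinely new ingredient is the global non-unital Pauli channel $\mathcal{P}^A_{\text{NU}}$ on system $A$ at $\tau_1$. Using \eqref{def:non-un-Puali1}--\eqref{def:non-un-Puali2}, it rescales each $(\vec{a},\vec{b})\neq(\vec{0},\vec{0})$ term by $c^A_{\vec{a},\vec{b}}\geq 0$ and, acting on the $\id_A$ component, generates the extra terms $\sum_{(\vec{g},\vec{h})\neq(\vec{0},\vec{0})}d^A_{\vec{g},\vec{h}}\,X_A^{\vec{g}}Z_A^{\vec{h}}\otimes\id_B$. The crux of the argument will be to show these generated terms contribute nothing to the cost. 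I would propagate them through $W$ (yielding $W X_A^{\vec{g}}Z_A^{\vec{h}}W^\dagger\otimes\id_B$), observe that the unital Pauli noise on $B$ between $\tau_1$ and $\tau_2$ and the Pauli channel at $\tau_2$ both fix the $\id_B$ factor, and then take the trace against the effective noisy measurement operator \eqref{eq:EN-on-P0N1}. Because that operator carries matched Pauli labels on $A$ and $B$, pairing with $\id_B$ forces the surviving label to satisfy $\vec{a}'=\vec{b}'=\vec{0}$, i.e. $\id_A$ on system $A$; the residual $A$-trace is then $\Tr_A[W X_A^{\vec{g}}Z_A^{\vec{h}}W^\dagger]=\Tr_A[X_A^{\vec{g}}Z_A^{\vec{h}}]=0$ for $(\vec{g},\vec{h})\neq(\vec{0},\vec{0})$. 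This vanishing is exactly what lets Theorem~\ref{thm2} tolerate non-unital noise on $A$ at the single time $\tau_1$, even though Theorem~\ref{thm1} restricted non-unital noise to $B$.

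With the generated terms eliminated, the only $V$-dependent part of the state before disentangling will be $\sum_{(\vec{a},\vec{b})\neq(\vec{0},\vec{0})}(\text{coeffs})\,W X_A^{\vec{a}}Z_A^{\vec{b}}W^\dagger\otimes X_B^{\vec{a}}Z_B^{\vec{b}}$, whose coefficients now absorb the positive factors $c^A_{\vec{a},\vec{b}}$ and the product of the per-increment unital Pauli factors on $B$. This is structurally identical to the surviving term in Theorem~\ref{thm1}, so the remaining steps carry over verbatim: tracing against \eqref{eq:EN-on-P0N1} collapses the $B$ system to give $f(V)=\Tr_A[\sum_{(\vec{a},\vec{b})\neq(\vec{0},\vec{0})}\kappa^{AB}_{\vec{a},\vec{a},\vec{b},\vec{b}}\,Z_A^{\vec{b}}X_A^{\vec{a}}\,W X_A^{\vec{a}}Z_A^{\vec{b}}\,W^\dagger]$, and the Cauchy--Schwarz bound of \eqref{eq:arg-cs-ineq2} is saturated precisely on the set $\VB_d^{\opt}=\{V':W=(V')^\dagger U=e^{i\phi}\id\}$ from \eqref{eq-setHST}, assuming all $\kappa^{AB}_{\vec{a},\vec{a},\vec{b},\vec{b}}\geq 0$. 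The $C_{\LHST}$ statement would then follow by substituting the local effective measurement operator \eqref{def:eff-meas-op-LHST} in place of \eqref{eq:EN-on-P0N1}, exactly as in the proof of Theorem~\ref{thm1}.

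I expect the main obstacle to be the bookkeeping in the second step: one must verify that \emph{every} channel acting on $B$ after $\tau_1$ (the continuous unital Pauli noise, the $\tau_2$ Pauli channel, and the $B$ portion of the disentangling channel) leaves the $\id_B$ factor of the non-unital-generated terms invariant, so that the vanishing-$A$-trace argument truly applies; and one must confirm that non-negativity of the accumulated $\kappa^{AB}_{\vec{a},\vec{a},\vec{b},\vec{b}}$ is preserved, which holds because only the unital parts ($c_{\vec{a},\vec{b}}\geq 0$) of the channels feed into these coefficients while the non-unital $d_{\vec{g},\vec{h}}$ parts have been shown to decouple.
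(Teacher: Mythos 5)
Your proposal follows the paper's proof essentially step for step: the same three-interval decomposition, the same use of Lemma~\ref{lemmaGDN} to remove the depolarizing noise, the same identification of the non-unital channel's $d_{\vec{g},\vec{h}}\,X_A^{\vec{g}}Z_A^{\vec{h}}\otimes\id_B$ terms as the new obstacle, the same vanishing-trace argument (tracing $\id_B$ against the matched-label measurement operator forces $\vec{a}'=\vec{b}'=\vec{0}$, after which $\Tr_A[X_A^{\vec{g}}Z_A^{\vec{h}}]=0$), and the same Cauchy--Schwarz saturation on $\VB_d^{\opt}$ for the surviving term, with the LHST case handled by swapping in the local measurement operator. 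This is the paper's argument; no gaps.
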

\begin{proof}
We break up the HST circuit into three time intervals similar to  Section \ref{sec:proof-thm-1}. We again assume that the global depolarizing noise occurs on system $AB$ during all three time intervals and the global depolarizing noise occurs on system $A$ during the implementation of $\VC\ad\circ \UC$. Moreover, suppose that a global Pauli channel $\mathcal{Q}^{AB}$ followed by a global non-unital Pauli channel $\mathcal{P}_{\text{NU}}^{A}$ acts at time $\tau_1$. Furthermore, a global pauli channel $\widehat{\mathcal{Q}}^{AB}$ acts at time $\tau_2$, while a 
global Pauli channel acts continuously on the system $B$ in between $\tau_1$ and $\tau_2$. 

The state at $\tau_1$ is given by
\begin{align}
\rho^{(1)} &= p^{(1)}\mathcal{P}^{A}_{\text{NU}}\circ \mathcal{Q}^{AB}\circ\widetilde{\EC}^{AB}(\rho^{(0)}) + (1-p^{(1)})\mathcal{P}^{A}_{\text{NU}}( \id / 2^{2n}) \\
& = p^{(1)} \left[\frac{1}{2^{2n}} \sum_{(\vec{a}, \vec{b})\neq (\vec{0}, \vec{0})} \widetilde{\beta}^{AB}_{\vec{a}, \vec{b}}X_A^{\vec{a}}Z^{\vec{b}}_A\otimes X^{\vec{a}}_BZ^{\vec{b}}_B \right] +  \frac{1}{ 2^{2n}}\id + \frac{1}{2^{2n}}\sum_{(\vec{g}, \vec{h})\neq (\vec{0}, \vec{0})}d_{\vec{g}, \vec{h}} X^{\vec{g}}_A Z^{\vec{h}}_A\otimes \id_B  ~.
\end{align}
The first equality follows from arguments similar to those used  to derive \eqref{eq:state-tau1-hst-thm1}--\eqref{eq:state-tau1-hst-thm1-1}.
The last equality follows from \eqref{eq:global-noisy-E-on-0state}, \eqref{def:non-un-Puali1}, and \eqref{def:non-un-Puali2}, where 
$\widetilde{\beta}^{AB}_{\vec{a}, \vec{b}} = m^{AB}_{\vec{a}, \vec{a}, \vec{b}, \vec{b}}  q^{AB}_{\vec{a}, \vec{a}, \vec{b}, \vec{b}}c_{\vec{a}, \vec{b}}$.  

At $\tau_2$ the state is 
\begin{align}\label{eq:rho2-hst-thm3}
\rho^{(2)} =  \widehat{\mathcal{Q}}^{AB}( \mathcal{D}^{AB}_{p^{(2,l)}}\circ\DC_{s^{(2,l)}}^A \circ (\mathcal{W}_l \otimes \widehat{\mathcal{P}}^{B}_l) \dots \mathcal{D}^{AB}_{p^{(2, 1)}}\circ \DC_{s^{(2,1)}}^A \circ (\mathcal{W}_1 \otimes \widehat{\mathcal{P}}_1^{B})(\rho^{(1)}) )~.
\end{align}
The term that depends on $W$ in \eqref{eq:rho2-hst-thm3} is given by
\begin{align}
\widetilde{\rho}^{(2)} =\frac{1}{2^{2n}} \widehat{\mathcal{Q}}^{AB}\left[ p^{(2)} s^{(2)} p^{(1)}  \sum_{(\vec{a}, \vec{b})\neq (\vec{0}, \vec{0})} \widetilde{\beta}^{AB}_{\vec{a}, \vec{b}}\big( \prod_{i}^{l} \widehat{p}^{(i)}_{\vec{a}, \vec{b}} \big)  WX_A^{\vec{a}}Z^{\vec{b}}_AW^{\dagger}\otimes X^{\vec{a}}_BZ^{\vec{b}}_B + \sum_{(\vec{g}, \vec{h})\neq (\vec{0}, \vec{0})}d_{\vec{g}, \vec{h}}  WX^{\vec{g}}_A Z^{\vec{h}}_AW^{\dagger}\otimes \id_B  \right],
\end{align}
where we used the definition of Pauli channels from \eqref{eq:Pauli-channel} and \eqref{eq:2n-qubit-Pauli-channel}.
By omitting the scaling factors, the relevant term after $\tau_3$ is given by 
\begin{multline}
\widetilde{\rho}^{(3)} = (\widetilde{\mathcal{E}}^{AB})^{\dagger}\circ\widehat{\mathcal{Q}}^{AB} \left( p^{(2)}  s^{(2)} p^{(1)}  \sum_{(\vec{a}, \vec{b})\neq (\vec{0}, \vec{0})} \widetilde{\beta}^{AB}_{\vec{a}, \vec{b}}\big( \prod_{i}^{l} \widehat{p}^{(i)}_{\vec{a}, \vec{b}} \big)  WX_A^{\vec{a}}Z^{\vec{b}}_AW^{\dagger}\otimes X^{\vec{a}}_BZ^{\vec{b}}_B \right)\\
+ (\widetilde{\mathcal{E}}^{AB})^{\dagger}\circ\widehat{\mathcal{Q}}^{AB}\left(\sum_{(\vec{g}, \vec{h})\neq (\vec{0}, \vec{0})}d_{\vec{g}, \vec{h}}  WX^{\vec{g}}_A Z^{\vec{h}}_AW^{\dagger}\otimes \id_B\right).
\end{multline}
Let 
$
\Ft_{HST}(V) \propto f(V) \coloneqq \Tr\left[ \widetilde{P}_{\vec{0}} \widetilde{\rho}^{(3)}\right]$. 
Then
\begin{multline}\label{def:falpha-hst-thm3}
f(V)  = \Tr\left[(\widehat{\mathcal{Q}}^{AB}\circ\widetilde{\mathcal{E}}^{AB}) (\widetilde{P}_{\vec{0}}) \left(p^{(2)}  s^{(2)} p^{(1)}  \sum_{(\vec{a}, \vec{b})\neq (\vec{0}, \vec{0})} \widetilde{\beta}^{AB}_{\vec{a}, \vec{b}}\big( \prod_{i}^{l} \widehat{p}^{(i)}_{\vec{a}, \vec{b}} \big)  WX_A^{\vec{a}}Z^{\vec{b}}_AW^{\dagger}\otimes X^{\vec{a}}_BZ^{\vec{b}}_B\right) \right]   \\ 
 +\Tr\left[ (\widehat{\mathcal{Q}}^{AB}\circ\widetilde{\mathcal{E}}^{AB}) (\widetilde{P}_{\vec{0}})  \left( \sum_{(\vec{g}, \vec{h})\neq (\vec{0}, \vec{0})}d_{\vec{g}, \vec{h}}  WX^{\vec{g}}_A Z^{\vec{h}}_AW^{\dagger}\otimes \id_B\right)\right]~.
\end{multline} 
Moreover, for simplicity we denote
\begin{align}
f_1(V) & \coloneqq \Tr\left[ (\widehat{\mathcal{Q}}^{AB}\circ\widetilde{\mathcal{E}}^{AB}) (\widetilde{P}_{\vec{0}}) \left(\sum_{(\vec{a}, \vec{b})\neq (\vec{0}, \vec{0})} \widetilde{\beta}^{AB}_{\vec{a}, \vec{b}}\big( \prod_{i}^{l} \widehat{p}^{(i)}_{\vec{a}, \vec{b}} \big)  WX_A^{\vec{a}}Z^{\vec{b}}_AW^{\dagger}\otimes X^{\vec{a}}_BZ^{\vec{b}}_B\right) \right] ~, \\
f_2(V) & \coloneqq\Tr\left[ (\widehat{\mathcal{Q}}^{AB}\circ\widetilde{\mathcal{E}}^{AB}) (\widetilde{P}_{\vec{0}})  \left( \sum_{(\vec{g}, \vec{h})\neq (\vec{0}, \vec{0})}d_{\vec{g}, \vec{h}}  WX^{\vec{g}}_A Z^{\vec{h}}_AW^{\dagger}\otimes \id_B\right)\right]~.
\end{align}
Let us focus on $f_1(V)$ and $f_2(V)$ individually. Consider the following:
\begin{align}
f_1(V)
& =  \Tr\left[\sum_{\substack{(\vec{a}, \vec{b})\neq (\vec{0}, \vec{0})\\\vec{a}',\vec{b}'}} \vartheta^{AB}_{\vec{a}, \vec{a}', \vec{b}, \vec{b}'}  Z^{\vec{b}'}_A X^{\vec{a}'}_A W X^{\vec{a}}_A Z^{\vec{b}}_AW^{\dagger} \otimes Z^{\vec{b}'}_BX^{\vec{a}'}_B X^{\vec{a}}_B Z^{\vec{b}}_B \right] = \Tr\left[\sum_{(\vec{a}, \vec{b}) \neq (\vec{0}, \vec{0})} \vartheta^{AB}_{\vec{a}, \vec{a}, \vec{b}, \vec{b}} Z^{\vec{b}}_A X^{\vec{a}}_A W X^{\vec{a}}_A Z^{\vec{b}}_A W^{\dagger} \right]\nonumber\\
& \leq \sum_{(\vec{a}, \vec{b}) \neq (\vec{0}, \vec{0})} \vartheta^{AB}_{\vec{a}, \vec{a}, \vec{b}, \vec{b}}\,. \label{eq:f1alpha-opt-proof-hst-thm2}
\end{align}
The first equality follows from \eqref{eq:EN-on-P0N1},
where 
$
\vartheta^{AB}_{\vec{a}, \vec{a}', \vec{b}, \vec{b}'} = (1/2^{2n})\widetilde{m}^{AB}_{\vec{a}', \vec{a}', \vec{b}', \vec{b}'}  \widehat{\widetilde{p}}^A_{\vec{a}', \vec{b}'} \widehat{q}^{AB}_{\vec{a}', \vec{a}', \vec{b}', \vec{b}'} \widetilde{\beta}^{AB}_{\vec{a}, \vec{b}}\big( \prod_{i}^{l} \widehat{p}^{(i)}_{\vec{a}, \vec{b}} \big)$.
The inequality follows from the arguments similar to \eqref{eq:arg-cs-ineq2}.
Here, the last inequality in \eqref{eq:f1alpha-opt-proof-hst-thm2} is saturated for any matrix $V$ in the set $\mathbb{V}_d^{\opt}$ of unitaries that optimize $F_{\HST}(V)$ (and hence $C_{L\HST}(V)$) given by \eqref{eq-setHST}.

On the other hand, 
\begin{align}
f_2(V)
& = \Tr\left[\sum_{\substack{(\vec{g}, \vec{h})\neq (\vec{0}, \vec{0})\\\vec{a}',\vec{b}'}} \varsigma^{AB}_{\vec{g}, \vec{a}', \vec{h}, \vec{b}'}   Z^{\vec{b}'}_A X^{\vec{a}'}_A W X^{\vec{g}}_A Z^{\vec{h}}_AW^{\dagger} \otimes Z^{\vec{a}'}_BX^{\vec{b}'}_B \right]\label{eq:f2alpha-1-thm3}\\
& = \Tr_A\left[ \sum_{\substack{(\vec{g}, \vec{h})\neq (\vec{0}, \vec{0})\\\vec{a}',\vec{b}'}} \varsigma^{AB}_{\vec{g}, \vec{a}', \vec{h}, \vec{b}'}   Z^{\vec{b}'}_A X^{\vec{a}'}_A W X^{\vec{g}}_A Z^{\vec{h}}_AW^{\dagger} \otimes \Tr_B\left(Z^{\vec{a}'}_BX^{\vec{b}'}_B\right) \right]\\
& = \sum_{(\vec{g}, \vec{h})\neq (\vec{0}, \vec{0})}\varsigma^{AB}_{\vec{g}, \vec{0}, \vec{h}, \vec{0}} \Tr_A\left(X^{\vec{g}}_A Z^{\vec{h}}_A\right)\\
&=0\label{eq:f2alpha-2-thm3}
\end{align}
where $\varsigma^{AB}_{\vec{g}, \vec{a}', \vec{h}, \vec{b}'} = (1/2^{2n})\widetilde{m}^{AB}_{\vec{a}', \vec{a}', \vec{b}', \vec{b}'}  \widehat{\widetilde{p}}^A_{\vec{a}', \vec{b}'}\widehat{q}^{AB}_{\vec{a}', \vec{a}', \vec{b}', \vec{b}'} d_{\vec{g}, \vec{h}}\widetilde{\beta}^{AB}_{\vec{a}, \vec{b}}\big( \prod_{i}^{l} \widehat{p}^{(i)}_{\vec{a}, \vec{b}} \big)$. From the last equality it follows that $f_2(V)$ is independent of $W$ (and hence of $V$) and thus does not affect the global optima. Therefore, from \eqref{eq:f1alpha-opt-proof-hst-thm2} it follows that the set of unitaries that optimize $\Ft_{\HST}(V)$ (and hence $\Ct_{\HST}(V)$) is $\widetilde{\VB}_d^{\opt}=\mathbb{V}_d^{\opt}$.
From Definition \ref{def:OPR} this  implies that $C_{\HST}$ exhibits strong-OPR to Noise Model 2 in Definition \ref{def:HST-noise-model-2} if we assume that the coefficients $\vartheta^{AB}_{\vec{a}, \vec{a}, \vec{b}, \vec{b}}$ characterizing the noise satisfy $\vartheta^{AB}_{\vec{a}, \vec{a}, \vec{b}, \vec{b}}\geq 0$.  

We now show that the cost function $C_{\LHST}$ exhibits strong-OPR to Noise Model 2. In particular, in the LHST we want to optimize the following function: 
\begin{align}
\Ft_{\LHST}(V) \propto g(V)  = \Tr\left[(\widehat{\mathcal{Q}}^{AB}\circ\widetilde{\mathcal{E}}^{'AB})(\widetilde{Q}_{00}) \left(p^{(2)}  s^{(2)} p^{(1)}  \sum_{(\vec{a}, \vec{b})\neq (\vec{0}, \vec{0})} \widetilde{\beta}^{AB}_{\vec{a}, \vec{b}}\big( \prod_{i}^{l} \widehat{p}^{(i)}_{\vec{a}, \vec{b}} \big)  WX_A^{\vec{a}}Z^{\vec{b}}_AW^{\dagger}\otimes X^{\vec{a}}_BZ^{\vec{b}}_B\right) \right]  \nonumber  \\ 
 +\Tr\left[ (\widehat{\mathcal{Q}}^{AB}\circ\widetilde{\mathcal{E}}^{'AB})(\widetilde{Q}_{00})  \left( \sum_{(\vec{g}, \vec{h})\neq (\vec{0}, \vec{0})}d_{\vec{g}, \vec{h}}  WX^{\vec{g}}_A Z^{\vec{h}}_AW^{\dagger}\otimes \id_B\right) \right]~,     
\end{align}
where we replaced the disentangling and measurement channels in \eqref{def:falpha-hst-thm3} with \eqref{def:eff-meas-op-LHST}. We now break up $g(V)$ into two different functions. 
\begin{align}
g_1(V) &\coloneqq \Tr\left[(\widehat{\mathcal{Q}}^{AB}\circ\widetilde{\mathcal{E}}^{'AB})(\widetilde{Q}_{00}) \left(\sum_{(\vec{a}, \vec{b})\neq (\vec{0}, \vec{0})} \widetilde{\beta}^{AB}_{\vec{a}, \vec{b}}\big( \prod_{i}^{l} \widehat{p}^{(i)}_{\vec{a}, \vec{b}} \big)  WX_A^{\vec{a}}Z^{\vec{b}}_AW^{\dagger}\otimes X^{\vec{a}}_BZ^{\vec{b}}_B \right)\right]  ~,\\
g_2(V) &\coloneqq \Tr\left[(\widehat{\mathcal{Q}}^{AB}\circ\widetilde{\mathcal{E}}^{'AB})(\widetilde{Q}_{00})\left(\sum_{(\vec{g}, \vec{h})\neq (\vec{0}, \vec{0})}d_{\vec{g}, \vec{h}}  WX^{\vec{g}}_A Z^{\vec{h}}_AW^{\dagger}\otimes \id_B\right)\right] ~.
\end{align}
By using  arguments similar to those used to derive Eqs.\ \eqref{eq:f2alpha-1-thm3}--\eqref{eq:f2alpha-2-thm3} and from \eqref{def:eff-meas-op-LHST}, it follows that $g_2(V)$ is independent of $W$ (and hence of $V$). Therefore, to prove the noise resilience of the LHST, we focus only on $g_1(V)$. We then get: 
\begin{align}\label{eq:f1alpha-thm4-1}
g_1(V) = \Tr\left[\sum_{j=1}^n \sum_{(\vec{a}, \vec{b})\neq (\vec{0}, \vec{0})} \sum_{a'_j, b'_j =0}^1 \tau^{(j)}_{\vec{a}, a'_j, \vec{b}, b'_j} (Z^{b'_j}_{A_j} X^{a'_j}_{A_j} \otimes \id_{\overline{A}_j})WX^{\vec{a}}_AZ^{\vec{b}}_AW^{\dagger} \otimes Z^{b'_j}_{B_j}X^{a'_j}_{B_j} X^{a_j}_{B_j} Z^{b_j}_{B_j} X^{\overline{a}_j}_{\overline{B}_j}Z^{\overline{b}_j}_{\overline{B}_j} \right]~,
\end{align}
where $\tau^{(j)}_{\vec{a}, a'_j ,\vec{b}, b'_j} =(1/4n) \widetilde{m}^{AB}_{a'_j,a'_j,b'_j, b'_j}  \widehat{\widetilde{p}}^{A_j}_{a', b'}
\widehat{q}^{AB}_{a'_j, a'_j, b'_j, b'_j}\widetilde{\beta}^{AB}_{\vec{a}, \vec{b}}\big( \prod_{i}^{l} \widehat{p}^{(i)}_{\vec{a}, \vec{b}} \big) $. We note that \eqref{eq:f1alpha-thm4-1} is similar to \eqref{eq:f1alpha-thm2-1}. Therefore, from the proof in Section \ref{sec:proof-thm-1} it follows that 
\begin{align}
g_1(V) &\leq \sum_{j=1}^n \sum_{(a_j, b_j)\neq (0,0)} \tau^{(j)}_{a_j, a_j, b_j, b_j}\,.
\end{align}
Where the inequality is saturated for unitaries $V'$ in the set $\mathbb{V}_d^{\opt}$ of unitaries that optimize $F_{\LHST}(V)$ (and hence $C_{\LHST}(V)$) given by \eqref{eq-setHST}. This further implies that 
\begin{align}
    g(V) \leq g(V'), \quad \text{for all} \quad V'\in \mathbb{V}_d^{\opt}=\widetilde{\VB}_d^{\opt}~.
\end{align}
Thus $C_{\LHST}$ exhibits strong-OPR to Noise Model 2 if we assume that the 
coefficients $\tau^{(j)}_{a_j, a_j, b_j, b_j}$ characterizing the noise satisfy $\tau^{(j)}_{a_j, a_j, b_j, b_j}\geq0$. 
\end{proof}

\section{Proof of Theorem~\ref{thm3}}\label{sec:proof-thm-3}

\begin{theoremApp}
The cost functions $C_{\LET}$ and $C_{\LLET}$ exhibit weak-OPR, as defined in Definition~\ref{def:OPR}, to Noise Model 3 in Definition \ref{defFISCnoise}.
\end{theoremApp}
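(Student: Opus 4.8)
The plan is to strip off the global depolarizing noise first, reduce the remaining cost to a trace overlap between two operators that are diagonal in the computational basis, and then use a rearrangement (von Neumann trace-inequality) argument to locate the optimizers. \emph{Step 1 (remove depolarizing noise).} In the LET/LLET the evolution between state preparation and measurement is the Pauli channel $\PC$ at $\tau_1$ followed by $\WC=\VC\ad\circ\UC$, and both are unital. Hence $\EC_V=\WC\circ\PC$ is unital and the cost has the form $\Tr[\Lambda\,\EC_V(\dya{\vec 0})]$ with $\Lambda$ the (noisy) POVM element. Lemma~\ref{lemmaGDN} then applies verbatim: inserting global depolarizing noise throughout only rescales the cost by an affine map $\Ct\mapsto p\,C'+\text{const}$, so the optimal sets are unchanged and it suffices to treat the noise model consisting of Pauli noise at $\tau_1$ together with measurement noise.

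\emph{Step 2 (reduce to a diagonal overlap).} Writing $\sigma\coloneqq\PC(\dya{\vec 0})$, the $V$-dependent part of the noisy LET fidelity is $\Tr[\Pt_{\vec 0}\,W\sigma W\ad]$ with $W=V\ad U$. Using \eqref{eq-allzero-Pauli} and the diagonal action $\PC(Z^{\vec b})=c_{\vec 0\vec b}Z^{\vec b}$, the operator $\sigma=\sum_{\vec a}\mu_{\vec a}\dya{\vec a}$ is diagonal with $\mu_{\vec a}=\tfrac{1}{2^n}\sum_{\vec b}(-1)^{\vec a\cdot\vec b}c_{\vec 0\vec b}$, and by Definition~\ref{def-noisy-POVM} so is $\Pt_{\vec 0}=\sum_{\vec z}w_{\vec z}\dya{\vec z}$ with $w_{\vec z}=\prod_j p^{(j)}_{0z_j}$. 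Thus the quantity to maximize is $\sum_{\vec z,\vec a}w_{\vec z}\mu_{\vec a}\,|\matl{\vec z}{W}{\vec a}|^2$, and the LLET yields the identical expression with $w_{\vec z}$ replaced by the average $\bar w_{\vec z}=\tfrac1n\sum_j p^{(j)}_{0z_j}$. \emph{Step 3 (order the weights).} The measurement-noise assumption $p^{(j)}_{00}>p^{(j)}_{01}$ makes $w_{\vec 0}$ (and $\bar w_{\vec 0}$) the \emph{strict} maximum of the $w_{\vec z}$. The non-negativity $c_{\vec l\vec k}\geq 0$ from Definition~\ref{def-UPC} makes $\mu_{\vec 0}$ maximal, since $\mu_{\vec 0}=\tfrac1{2^n}\sum_{\vec b}c_{\vec 0\vec b}\geq|\mu_{\vec l}|$ for every $\vec l$. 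Hence $\ket{\vec 0}$ is the unique top eigenvector of $\Pt_{\vec 0}$ and a top eigenvector of $\sigma$.

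\emph{Step 4 (rearrangement and conclusion).} Since $|\matl{\vec z}{W}{\vec a}|^2$ is unistochastic, $\Tr[\Pt_{\vec 0}W\sigma W\ad]$ is bounded by von Neumann's trace inequality $\sum_i w_{(i)}\mu_{(i)}$ (eigenvalues sorted in decreasing order), with equality only when $W$ aligns the two eigenbases in matching order. Because $w_{\vec 0}$ is the strict top eigenvalue of $\Pt_{\vec 0}$ and $\mu_{\vec 0}$ is the top eigenvalue of $\sigma$, every maximizer must send $\ket{\vec 0}\mapsto e^{i\phi}\ket{\vec 0}$, i.e.\ $W\ket{\vec 0}=e^{i\phi}\ket{\vec 0}$. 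This is precisely the condition defining $\mathbb{V}_d^{\opt}$ in \eqref{eqnOptimalSet} for the LET (and, via the equivalence $C_{\LLET}=0\Leftrightarrow C_{\LET}=0$, for the LLET), so $\widetilde{\VB}_d^{\opt}\subseteq\mathbb{V}_d^{\opt}$, which is weak-OPR in the sense of Definition~\ref{def:OPR}. The inclusion is generally strict because a maximizer may freely permute the remaining eigenvectors of $\sigma$, so noise merely thins the large noiseless optimal set—exactly the ``weak'' in weak-OPR.

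\emph{Main obstacle.} The delicate point is the uniqueness claim in Step 4. While $w_{\vec 0}$ is strictly maximal, the weight $\mu_{\vec 0}$ is only \emph{weakly} maximal, so if $\sigma$ has a degenerate top eigenvalue the rearrangement argument by itself does not force $\ket{\vec 0}\mapsto\ket{\vec 0}$: a maximizer could instead send a degenerate partner of $\ket{\vec 0}$ onto it. Carefully ruling this out (using the strict ordering of the $w_{\vec z}$ to pin the image of $\ket{\vec 0}$) and cleanly isolating the residual freedom on the complementary subspace—the source of weak rather than strong OPR—is the crux of the argument.
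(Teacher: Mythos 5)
Your proposal is correct and follows essentially the same route as the paper's proof: strip the global depolarizing noise via Lemma~\ref{lemmaGDN}, reduce $\widetilde{G}_{\LET}$ (and $\widetilde{G}_{\LLET}$, whose POVM element is likewise diagonal with weights $\tfrac{1}{n}\sum_j p^{(j)}_{0z_j}$) to a bilinear form $\sum_{\vec{i},\vec{l}} p_{\vec{i}}\, q_{\vec{l}}\, |\matl{\vec{i}}{W}{\vec{l}}|^2$ in a doubly stochastic matrix, and apply the rearrangement inequality together with the maximality of the $\vec{0}$-weights to conclude that every noisy optimizer sends $\ket{\vec{0}}$ to $e^{i\phi}\ket{\vec{0}}$. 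The degeneracy caveat you flag in Step~4 is genuine but is equally present in the paper's argument (which also needs the top eigenvalue of $\mathcal{P}(\dya{\vec{0}})$ to sit at $\ket{\vec{0}}$ non-degenerately for the permutation set $\mathbb{S}$ to fix $\ket{\vec{0}}$), and your unified diagonal-overlap treatment of the LLET is only a mild streamlining of the paper's two-step bound, not a different method.
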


\begin{proof}
Let us remark that in order to show weak-OPR to Noise Model 3 we just need to consider Pauli noise acting at $\tau_1$ and measurement noise, since noise resilience to global depolarizing noise follows from Lemma~\ref{lemmaGDN}.

We first consider the $C_{\LET}$ cost function. From Eqs.~\eqref{eq-allzero-Pauli} and \eqref{eq:Pauli-channel} we get that the action of the Pauli channel acting at time $\tau_1$ is given by
\begin{align}
    \mathcal{P}(\dya{\vec{0}})&=\sum_{\vec{l},\vec{k}}q_{\vec{l},\vec{k}}X^{\vec{l}}Z^{\vec{k}}\dya{\vec{0}} Z^{\vec{k}}X^{\vec{l}}=\sum_{\vec{l}} q_{\vec{l}} \dya{\vec{l}}\,, \label{eq-oxPC}
\end{align}
where $q_{\vec{l}}=\sum_{\vec{k}} q_{\vec{l},\vec{k}}$. Similarly, we can express the noisy measurement POVM from Definition \ref{def-noisy-POVM} as
\begin{equation}
\Pt_{\vec{0}} = \bigotimes_{j=1}^{n} \left( p_{0 0}^{(j)} \dya{0} + p_{0 1}^{(j)} \dya{1} \right)=\sum_{\vec{i}} p_{\vec{i}} \dya{\vec{i}}\,,    
\label{eq-oxMN}
\end{equation}
with $\vec{i}=i_1 i_2\ldots i_n$ a bit string and $p_{\vec{i}}=p_{0 i_1}^{(1)}p_{0 i_2}^{(2)}\ldots p_{0 i_n}^{(n)}$. For the present noise model we are interested in determining the optimum of the function
\begin{align}
     \widetilde{G}_{\LET}(V)&=\Tr \left[\widetilde{P}_{\vec{0}}(\mathcal{W}\circ\mathcal{P})(\dya{\vec{0}})\right],\label{eq-cost-global-ox}
\end{align}
with $\mathcal{W}=\mathcal{V}\ad \circ \mathcal{U}$ the channel that implements $U$ followed by $V\ad$. Then, by means of \eqref{eq-oxPC} and \eqref{eq-oxMN} we find
\begin{align}
     \widetilde{G}_{\LET}(V)&=\Tr \left[(\sum_{\vec{i}} p_{\vec{i}} \dya{\vec{i}})(\sum_{\vec{l}} q_{\vec{l}} W\dya{\vec{l}}W\ad)\right] =\sum_{\vec{i},\vec{l}} p_{\vec{i}} q_{\vec{l}} w_{\vec{i}\vec{l}}\,, \label{eq-PNt1MN}
\end{align}
where $w_{\vec{i}\vec{l}}=|\matl{\vec{i}}{W}{\vec{l}}|^2$ are the matrix elements of a  doubly stochastic matrix such that $     \sum_{\vec{i}} w_{\vec{i}\vec{l}}=\sum_{\vec{l}} w_{\vec{i}\vec{l}}=1$.

Let us now denote by $\vec{q}^{\downarrow}$ the vector with elements $q_{\vec{i}}$  ordered in decreasing order. 
Similarly, we denote  by $\vec{p}^{\downarrow}$ the  vector with elements  $p_{\vec{l}}$ ordered in decreasing order. Additionally, let  $\{\ket{q_r}\}$ and $\{\ket{p_s}\}$ be the basis in which $\vec{q}^{\downarrow}$ and $\vec{p}^{\downarrow}$ are ordered, respectively, i.e.,
\begin{equation}
  \mathcal{P}(\dya{\vec{0}})=\sum_{r} q^{\downarrow}_{r} \dya{q_{r}}\,,\qquad \text{and} \qquad \Pt_{\vec{0}} =\sum_{s} p^{\downarrow}_{s} \dya{p_{s}}\,.
\end{equation}
Then, from the permutation inequality (or the rearrangement inequality) \cite{hardy1952inequalities} we have
\begin{align}
     \widetilde{G}_{\LET}(V)=\sum_{\vec{i},\vec{l}} p_{\vec{i}} q_{\vec{l}} w_{\vec{i}\vec{l}}\leq\vec{p}^{\downarrow}\cdot\vec{q}^{\downarrow} \,. \label{eq-LETord}
\end{align}
The inequality in \eqref{eq-LETord} is saturated for matrices $W\in \mathbb{S}$, where $\mathbb{S}$ is  the subset of the Permutation Group which maps $\{\ket{p_s}\}$ to $\{\ket{q_r}\}$. We remark here that if the vector $\vec{q}^{\downarrow}$ (or $\vec{p}^{\downarrow}$) has components of equal magnitude, then the set $\mathbb{S}$ is degenerate.
Moreover, note  that
\begin{equation}
    p_{\vec{0}}\geq p_{\vec{i}}, \quad \text{and} \quad q_{\vec{0}}\geq q_{\vec{i}}, \quad \forall {\vec{i}} \neq {\vec{0}}\,, \label{eq-inequality-pq}
\end{equation} 
where the second inequality follows from Definition \ref{def-UPC}, while the first inequality always holds since  $p_{\vec{0}}=\prod_{j=1}^n p_{0 0}^{(j)}$, and since we have assumed that $p_{0 0}^{(j)} > p_{0 1}^{(j)} $ $\forall j$. 

We now recall that $\VB_d^{\opt}$  denotes the set of unitaries that optimize $C_{\LET}(V)$ and $C_{\LLET}(V)$, i.e., $\forall V' \in \VB_d^{\opt}$ we have $W'\ket{0}=(V')\ad U\ket{0}=\ket{0}$ (up to a global phase), which entails $w'_{\vec{i}\vec{0}}=w'_{\vec{0}\vec{i}}=\delta_{\vec{i},\vec{0}}$, and hence Eq.\ \eqref{eq-PNt1MN} becomes   
\begin{equation}\label{eqGvpr}
\widetilde{G}_{\LET}(V')=p_{\vec{0}} q_{\vec{0}} + \sum_{\vec{i},\vec{l}\neq {\vec{0}}} p_{\vec{i}} q_{\vec{l}} w'_{\vec{i}\vec{l}} \,.
\end{equation}
Since  $p_{\vec{0}}\geq p_{\vec{i}}$  and $q_{\vec{0}}\geq q_{\vec{i}}$ $\forall {\vec{i}}$ then the first term in \eqref{eqGvpr} corresponds to the first term in the summation $\vec{p}^{\downarrow}\cdot\vec{q}^{\downarrow}=\sum_r \vec{q}^{\downarrow}_r \vec{p}^{\downarrow}_r $. Hence, in order to saturate \eqref{eq-LETord} we now need that $W'\in \mathbb{S}$, i.e., the $(n-1)\times(n-1)$  principal submatrix of $W'$ with matrix elements  $\mted{\vec{z}}{W'}{\vec{z'}}$ (such that $\vec{z},\vec{z'}\neq\vec{0}$) must map $\{\ket{p_s}\}$ to $\{\ket{q_r}\}$ (where $s\neq 0$ and $r\neq 0$). Combining this result with \eqref{eq-LETord} we have that for any matrix $V$ in $\VB_d$ (the set of $d\times d$ unitary matrices)
\begin{align}
     \widetilde{G}_{\LET}(V)\leq\vec{p}^{\downarrow}\cdot\vec{q}^{\downarrow}=\widetilde{G}_{\LET}(V') \,, \label{eq-LETord2}
\end{align}
where $V'\in \widetilde{\VB}_d^{\opt}$ and where 
\begin{equation}
    \widetilde{\VB}_d^{\opt} = \{V' \in \VB_d : W=(V')\ad U \in \mathbb{S}\}.
\end{equation}
Evidently, not all matrices in $\VB_d^{\opt}$ are in $\mathbb{S}$, which then entails that $ \widetilde{\VB}_d^{\opt} \subseteq \VB_d^{\opt}$, and further means that $C_\LET$ exhibits weak-OPR to Noise Model 3 according to Definition \ref{def:OPR}.

Let us now consider the noise resilience of LLET to Noise Model 3  of Definition  \ref{defFISCnoise}. We are now interested in the optimum of
\begin{align}
     \widetilde{G}_{\LLET}(V)&=\frac{1}{n}\sum_{j=1}^n \Tr \left[\left((p_{0 0}^{(j)} \dya{0} + p_{0 1}^{(j)} \dya{1})\otimes\id^{\overline{A}_j}\right)(\mathcal{W}\circ\mathcal{P})(\dya{\vec{0}})\right]\\
     &=\frac{1}{n}\sum_{j=1}^n \Tr \left[\left((p_{0 0}^{(j)} \dya{0} + p_{0 1}^{(j)} \dya{1})\otimes\id^{\overline{A}_j}\right)(\sum_{\vec{l}} q_{\vec{l}} W\dya{\vec{l}}W\ad)\right]\,.
\end{align}
For any matrix  $ V' \in \VB_d^{\opt}$ we have $W'\ket{\vec{0}}=(V')\ad U\ket{\vec{0}}=\ket{\vec{0}}$ (up to global phase) and  $\sum_{\vec{l}} q_{\vec{l}} W'\dya{\vec{l}}(W')\ad=q_{\vec{0}}\dya{\vec{0}}+ \sum_{\vec{l}\neq\vec{0}} q_{\vec{l}} W'\dya{\vec{l}}(W')\ad$, which leads to
\begin{equation}\label{EqGLLETVpr}
\widetilde{G}_{\LLET}(V')=    \frac{1}{n}\sum_{j=1}^n p_{00}\jj q_{\vec{0}} + \frac{1}{n}\sum_{j=1}^n \Tr \left[\left((p_{0 0}^{(j)} \dya{0} + p_{0 1}^{(j)} \dya{1})\otimes\id^{\overline{A}_j}\right))(\sum_{\vec{l}\neq\vec{0}} q_{\vec{l}} W'\dya{\vec{l}}(W')\ad)\right]\,.
\end{equation}
On the other hand, for any unitary matrix $V\in \VB_d$  
\begin{align}
\widetilde{G}_{\LLET}(V)&= \frac{1}{n}\sum_{j=1}^n  \Tr \left[\left((p_{0 0}^{(j)} \dya{0} + p_{0 1}^{(j)} \dya{1})\otimes\id^{\overline{A}_j}\right)q_{\vec{0}}W\dya{\vec{0}}W\ad\right]\nonumber\\
&\qquad +\frac{1}{n}\sum_{j=1}^n \Tr \left[\left((p_{0 0}^{(j)} \dya{0} + p_{01}^{(j)} \dya{1})\otimes\id^{\overline{A}_j}\right)(\sum_{\vec{l}\neq\vec{0}} q_{\vec{l}} W\dya{\vec{l}}W\ad)\right] \nonumber\\
&\leq \frac{1}{n}\sum_{j=1}^n \left( \Tr \left[ p_{0 0}^{(j)} q_{\vec{0}} \id  W\dya{\vec{0}}W\ad\right] + \Tr \left[\left((p_{0 0}^{(j)} \dya{0} + p_{0 1}^{(j)} \dya{1})\otimes\id^{\overline{A}_j}\right)(\sum_{\vec{l}\neq\vec{0}} q_{\vec{l}} W\dya{\vec{l}}W\ad)\right] \right)\nonumber\\
&= \frac{1}{n}\sum_{j=1}^n  p_{0 0}^{(j)} q_{\vec{0}} + \frac{1}{n}\sum_{j=1}^n \Tr \left[\left((p_{0 0}^{(j)} \dya{0} + p_{0 1}^{(j)} \dya{1})\otimes\id^{\overline{A}_j}\right)(\sum_{\vec{l}\neq\vec{0}} q_{\vec{l}} W\dya{\vec{l}}W\ad)\right]\label{eqGLLETV}
\end{align}
where the inequality follows from  the fact that $p_{0 0}^{(j)}>p_{0 1}^{(j)}$, and hence
\begin{equation}
    (p_{0 0}^{(j)} \dya{0} + p_{0 1}^{(j)} \dya{1})\otimes\id^{\overline{A}_j}\leq (p_{0 0}^{(j)} \dya{0} + p_{0 0}^{(j)} \dya{1})\otimes\id^{\overline{A}_j}\leq p_{0 0}^{(j)} \id\,.
\end{equation}
We can then simplify Eq.\ \eqref{eqGLLETV} as 
\begin{align}
\widetilde{G}_{\LLET}(V)&\leq \frac{1}{n}\sum_{j=1}^n  p_{0 0}^{(j)} q_{\vec{0}} + \frac{1}{n}\sum_{j=1}^n \sum_{\vec{l}\neq\vec{0},\vec{k}\neq\vec{0}} q_{\vec{l}} p_{\vec{k}}\jj w_{\vec{k}\vec{l}}=  \frac{1}{n}\sum_{j=1}^n  p_{0 0}^{(j)} q_{\vec{0}} +  \sum_{\vec{l}\neq\vec{0},\vec{k}\neq\vec{0}} q_{\vec{l}} \tilde{p}_{\vec{k}} w_{\vec{k}\vec{l}}\label{eq-IneqW}\,,
\end{align}
where we have $p_{\vec{k}}\jj=p_{00}\jj$ if $k_j=0$, and $p_{\vec{k}}\jj=p_{01}\jj$ if $k_j=1$. On the the other hand, in the second equality of \eqref{eq-IneqW} we have defined $\tilde{p}_{\vec{k}}=\frac{1}{n}\sum_{j=1}^n p_{\vec{k}}\jj$. 
Finally,  the following inequality follows again from the rearrangement inequality
\begin{align}
\widetilde{G}_{\LLET}(V)
&\leq  \frac{1}{n}\sum_{j=1}^n  p_{0 0}^{(j)} q_{\vec{0}}+\sum_{\vec{l}\neq\vec{0},\vec{k}\neq\vec{0}} q_{\vec{l}}^\downarrow \tilde{p}_{\vec{k}}^\downarrow\label{eq-IneqW2}\,,
\end{align}
which is saturated for matrices $W\in \mathbb{S'}$, where $\mathbb{S'}$ is a subset of the Permutation Group such that  $\sum_{\vec{l}\neq\vec{0},\vec{k}\neq\vec{0}} q_{\vec{l}} \tilde{p}_{\vec{k}} w_{\vec{k}\vec{l}}=\sum_{\vec{l}\neq\vec{0},\vec{k}\neq\vec{0}} q_{\vec{l}}^\downarrow \tilde{p}_{\vec{k}}^\downarrow$. Here  $q^\downarrow$ and $\tilde{p}^\downarrow$ are vectors with components $q_{\vec{l}}$ and $\tilde{p}_{\vec{k}}$ in decreasing order, respectively. Hence, we can define the set of matrices which saturate \eqref{eq-IneqW2} as
\begin{equation}
    \widetilde{\VB}_d^{\opt} = \{V' \in \VB_d : W=(V')\ad U \in \mathbb{S'}\}.
\end{equation}

While any matrix in  $\VB_d^{\opt}$ saturates the inequality in \eqref{eqGLLETV}, only a subset will also  saturate \eqref{eq-IneqW2}. Hence, $ \widetilde{\VB}_d^{\opt} \subseteq \VB_d^{\opt}$, and $C_\LLET$ exhibits weak-OPR to Noise Model 3 according to Definition \ref{def:OPR}.
\end{proof}

\section{Proof of Corollaries \ref{cor:Pauli-conjugation}-\ref{cor:TensorProduct-FISC}}\label{sec:proof-corol}

\begin{corollaryApp}\label{cor1-app}
The cost functions $C_{\HST}$ and $C_{\LHST}$ exhibit strong-OPR to a noise model that includes the following: (1) all noise processes in  Noise Model 1, as well as
(2) a noise process during the implementation of $\mathcal{W} = \mathcal{W}_k \circ \cdots \circ \mathcal{W}_1 = \mathcal{V}^{\dagger}\circ\mathcal{U}$ (i.e., in the time interval between $\tau_1$ and $\tau_2$) in which global Pauli channels $\{\mathcal{P}^A_1, \dots, \mathcal{P}^A_k\}$ act on system $A$, such that the overall channel on $A$ is $\mathcal{P}^A_k\circ \mathcal{W}_k  \cdots \circ\mathcal{P}^A_1\circ \mathcal{W}_1$, provided that the following condition is satisfied:
\begin{align}\label{eq:pauli-channel-conjugation1}
(\mathcal{P}^A_k\circ \mathcal{W}_k  \cdots \circ\mathcal{P}^A_1\circ \mathcal{W}_1)(\cdot) = (\mathcal{W}_k \circ \mathcal{W}_{k-1}\cdots \circ \mathcal{W}_1\circ \widehat{\mathcal{P}}^{A}) (\cdot)~.
\end{align}
Here $\widehat{\mathcal{P}}^{A}$ is also a Pauli channel, and the channels $\UC$, $\VC\ad$, and $\WC$ correspond to conjugating the state by the unitaries $U$, $V\ad$, and $W$, respectively.
\end{corollaryApp}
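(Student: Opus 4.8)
The plan is to reduce the augmented noise model to Noise Model~1 and then invoke Theorem~\ref{thm1} directly. The entire content of the corollary lies in showing that the extra Pauli channels interleaved throughout the implementation of $\mathcal{W}$ contribute nothing genuinely new once the hypothesis \eqref{eq:pauli-channel-conjugation1} is imposed.

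First I would use the hypothesis verbatim: by \eqref{eq:pauli-channel-conjugation1}, the composite channel acting on system $A$ during the implementation of $\mathcal{W}$ satisfies
\begin{equation}
\mathcal{P}^A_k \circ \mathcal{W}_k \cdots \circ \mathcal{P}^A_1 \circ \mathcal{W}_1 = \mathcal{W} \circ \widehat{\mathcal{P}}^A,
\end{equation}
with $\widehat{\mathcal{P}}^A$ a single global Pauli channel on $A$. In other words, all of the interleaved Pauli noise on $A$ is mathematically equivalent to one Pauli channel applied \emph{before} $\mathcal{W}$, i.e., at time $\tau_1$. The depolarizing noise that Noise Model~1 places on $A$ between $\tau_1$ and $\tau_2$ is handled separately: being unital (as are the $\mathcal{P}^A_i$ and the unitary conjugations $\mathcal{W}_i$), it factors out through Lemma~\ref{lemmaGDN} exactly as in the proof of Theorem~\ref{thm1}, contributing only scaling factors and $W$-independent identity terms, and thereby exposing the pure chain $\mathcal{P}^A_k \circ \mathcal{W}_k \cdots \circ \mathcal{P}^A_1 \circ \mathcal{W}_1$ to which the hypothesis applies.

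Next I would absorb $\widehat{\mathcal{P}}^A$ into the global Pauli channel $\mathcal{Q}^{AB}$ that Noise Model~1 already places at $\tau_1$. Viewing $\widehat{\mathcal{P}}^A$ as the global channel $\widehat{\mathcal{P}}^A \otimes \mathcal{I}^B$ and using that the composition of two Pauli channels is again a Pauli channel (an immediate consequence of \eqref{eq:Pauli-channel} and \eqref{eq:2n-qubit-Pauli-channel}), the channel at $\tau_1$ becomes $\mathcal{Q}'^{AB} = (\widehat{\mathcal{P}}^A \otimes \mathcal{I}^B) \circ \mathcal{Q}^{AB}$, still a global Pauli channel. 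The resulting circuit is then \emph{identical in form} to Noise Model~1, merely with the $\tau_1$ Pauli channel $\mathcal{Q}^{AB}$ replaced by $\mathcal{Q}'^{AB}$.

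Finally, Theorem~\ref{thm1} establishes strong-OPR for Noise Model~1 for an \emph{arbitrary} global Pauli channel at $\tau_1$: its proof uses this channel only through its Pauli coefficients, which feed into the final coefficients $\kappa^{AB}_{\vec{a},\vec{a},\vec{b},\vec{b}}$. Hence the same Cauchy--Schwarz saturation argument yields $\widetilde{\VB}_d^{\opt} = \VB_d^{\opt}$, and the identical reasoning carries over to $C_{\LHST}$ since its analysis in Theorem~\ref{thm1} likewise depends on the $\tau_1$ noise only through its being a global Pauli channel. The only genuine point to check --- and the step I would treat most carefully --- is this closure/merging claim: that the hypothesis really produces a \emph{single} Pauli channel $\widehat{\mathcal{P}}^A$ and that, after composition with $\mathcal{Q}^{AB}$ and passage through the remainder of the circuit, the resulting coefficients $\kappa^{AB}_{\vec{a},\vec{a},\vec{b},\vec{b}}$ remain nonnegative, as required for saturation in Theorem~\ref{thm1}. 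Everything after that is a direct appeal to the already-proven theorem.
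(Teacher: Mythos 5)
Your proposal is correct and follows essentially the same route as the paper: the paper's proof is a one-liner observing that condition \eqref{eq:pauli-channel-conjugation1} makes the interleaved noise mathematically equivalent to a single Pauli channel acting at $\tau_1$, which Theorem~\ref{thm1} already accommodates. Your additional details (merging $\widehat{\mathcal{P}}^{A}\otimes\mathcal{I}^{B}$ with the existing $\tau_1$ channel $\mathcal{Q}^{AB}$, and flagging the nonnegativity of the resulting coefficients needed for the Cauchy--Schwarz saturation) simply make explicit what the paper leaves implicit in its appeal to Theorem~\ref{thm1}.
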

\begin{proof}
This follows from the fact that the overall noisy channel acting during the implementation of $\WC$ is mathematically equivalent to a Pauli channel followed by the unitary $\WC$, as described in the condition \eqref{eq:pauli-channel-conjugation1} and by invoking Theorem \ref{thm1}, which allows for Pauli channel noise at time $\tau_1$.   
\end{proof}

\begin{corollaryApp}\label{cor2-app}
Let the $W=V\ad U$ gate sequence have the form $W = W^A_2 W^A_1$ with $W_1^A$ be composed only of Clifford gates. Then the cost functions $C_{\HST}$ and $C_{\LHST}$ exhibit strong-OPR to a noise model that includes the following: (1) all noise processes in Noise Model~1, as well as  
(2) a noise process during the implementation of $\mathcal{W}^A_1 = \mathcal{W}_{1,k} \circ \cdots \circ \mathcal{W}_{1,1}$, in which global Pauli channels $\{\mathcal{P}^A_1, \dots, \mathcal{P}^A_k\}$ act on system $A$, such that the overall channel on $A$ is $\mathcal{P}^A_k\circ \mathcal{W}_{1,k}  \cdots \circ\mathcal{P}^A_1\circ \mathcal{W}_{1,1}$.   
\end{corollaryApp}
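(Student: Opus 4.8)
The plan is to reduce Corollary~\ref{cor2-app} to the already-established Corollary~\ref{cor1-app} by verifying that its hypothesis~\eqref{eq:pauli-channel-conjugation1} is automatically satisfied once $W_1^A$ is Clifford. Concretely, since the Pauli channels $\{\PC^A_1,\dots,\PC^A_k\}$ act only during $W_1^A$, the overall noisy channel on system $A$ during the implementation of $\WC=\WC_2^A\circ\WC_1^A$ is $\WC_2^A\circ(\PC^A_k\circ\WC_{1,k}\circ\cdots\circ\PC^A_1\circ\WC_{1,1})$, and it suffices to show that the bracketed interleaved channel equals $\WC_1^A\circ\widehat{\PC}^A$ for some Pauli channel $\widehat{\PC}^A$. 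Composing on the left with $\WC_2^A$ then yields $\WC\circ\widehat{\PC}^A$, which is exactly the form required by~\eqref{eq:pauli-channel-conjugation1}, so that Corollary~\ref{cor1-app} immediately delivers strong-OPR.

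The key tool is a ``commute-to-the-right'' companion of Lemma~\ref{lem:clifford-conjugation}: for any Clifford channel $\WC$ and Pauli channel $\PC$ one has $\PC\circ\WC=\WC\circ\mathcal{Q}$ for some Pauli channel $\mathcal{Q}$. I would obtain this immediately by applying Lemma~\ref{lem:clifford-conjugation} to the Clifford unitary $W\ad$ (which is again Clifford), giving $\WC\ad\circ\PC=\mathcal{R}\circ\WC\ad$, and then composing with $\WC$ on both sides to get $\PC\circ\WC=\WC\circ\mathcal{R}$. Alternatively one repeats the short Clifford-conjugation computation in the proof of Lemma~\ref{lem:clifford-conjugation}, using that $W\ad X^{\vec{l}}Z^{\vec{k}}W$ is a Pauli operator.

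With this in hand I would prove, by induction on $k$, that
\begin{align}
\PC^A_k\circ\WC_{1,k}\circ\cdots\circ\PC^A_1\circ\WC_{1,1}=\WC_{1,k}\circ\cdots\circ\WC_{1,1}\circ\widehat{\PC}^A
\end{align}
for some Pauli channel $\widehat{\PC}^A$. The base case $k=1$ is exactly the commute-to-the-right identity. For the inductive step I would apply the induction hypothesis to the composition of the innermost $k-1$ factors, then face the single Pauli channel $\PC^A_k$ standing in front of the Clifford channel $\WC_1^A=\WC_{1,k}\circ\cdots\circ\WC_{1,1}$, push $\PC^A_k$ through $\WC_1^A$ with the commute-to-the-right identity, and finally merge the two resulting Pauli channels into one, using that a composition of Pauli channels is again a Pauli channel.

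The argument is essentially bookkeeping, so there is no deep obstacle; the main point requiring care is the direction of commutation. Lemma~\ref{lem:clifford-conjugation} as stated moves a Pauli channel from the right of a Clifford to its left, whereas~\eqref{eq:pauli-channel-conjugation1} demands that the Pauli channel end up to the right of $\WC$ (i.e.\ acting first). One must therefore invoke the $W\ad$-version consistently, and keep track that a composition of Cliffords is itself Clifford, so that the push-through step applies to the accumulated product $\WC_{1,k}\circ\cdots\circ\WC_{1,1}$ and not merely to the individual gates. Once the displayed identity is established, invoking Corollary~\ref{cor1-app} (and hence Theorem~\ref{thm1}) for both $C_{\HST}$ and $C_{\LHST}$ completes the proof.
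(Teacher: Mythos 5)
Your proposal is correct and follows essentially the same route as the paper: reduce to Corollary~\ref{cor:Pauli-conjugation} by showing that a Clifford $W_1^A$ automatically satisfies the condition~\eqref{eq:pauli-channel-conjugation}, via Lemma~\ref{lem:clifford-conjugation}. The paper states this in one sentence without elaboration; you additionally (and correctly) flag that Lemma~\ref{lem:clifford-conjugation} commutes the Pauli channel in the opposite direction from what the condition requires, and you resolve this by applying the lemma to $W\ad$ and then inducting on $k$, which is a legitimate filling-in of details the paper leaves implicit.
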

\begin{proof}
From Lemma \ref{lem:clifford-conjugation} it follows that Clifford unitaries satisfy the condition in \eqref{eq:pauli-channel-conjugation1}. Therefore, Corollary \ref{cor2-app} is a special case of Corollary \ref{cor1-app}. 
\end{proof}

\begin{corollaryApp}\label{cor3-app}
Let the $W=V\ad U$ gate sequence have the form $W = W_2^A W_1^A$ with $W_1^A =  W^{A'}_1 \otimes W^{A''}_1$ being a tensor product, i.e., $W$ is a tensor product up to a particular time. Then the cost functions $C_{\HST}$ and $C_{\LHST}$ exhibit strong-OPR to a noise model that includes the following: (1) all noise processes in Noise Model 1, as well as (2) a noise process during the implementations of $\WC_1^{A'} = \WC_{1,k}^{A'} \circ \cdots \circ \WC_{1,1}^{A'}$ and $\WC_1^{A''} = \WC_{1,l}^{A''} \circ \cdots \circ \WC_{1,1}^{A''}$ in which local depolarizing channels $\{\DC^{A'}_{1,1}, \dots ,\DC^{A'}_{1,k}\}$ and  $\{\DC^{A''}_{1,1}, \dots ,\DC^{A''}_{1,l}\}$ act on subsystems 
$A'$ and $A''$, respectively, such that the overall channel on $A'A''$ is $(\DC^{A'}_{1,k}\circ \WC^{A'}_{1,k}\dots \DC^{A'}_{1,1}\circ \WC^{A'}_{1,1})\otimes(\DC^{A''}_{1,l}\circ \WC^{A''}_{1,l}\dots \DC^{A''}_{1,1}\circ \WC^{A''}_{1,1})$. 
\end{corollaryApp}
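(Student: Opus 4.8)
The plan is to reduce Corollary~\ref{cor3-app} to Corollary~\ref{cor1-app} by showing that, in the tensor-product setting with local depolarizing noise, the overall noisy channel acting during $W$ satisfies the hypothesis \eqref{eq:pauli-channel-conjugation1}. The single structural fact I would rely on is the \emph{unitary covariance} of the depolarizing channel: for any depolarizing channel $\mathcal{D}$ on a subsystem and any unitary channel $\mathcal{U}$ on that same subsystem, $\mathcal{D}\circ\mathcal{U} = \mathcal{U}\circ\mathcal{D}$. This holds because $\mathcal{D}$ replaces its input with the maximally mixed state (weighted by the survival probability $p$), and both the maximally mixed state and the trace are invariant under conjugation by any unitary. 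This is exactly the feature that distinguishes this corollary from Corollary~\ref{cor2-app}: there, one commutes Pauli noise through the gates only because the gates are Clifford (Lemma~\ref{lem:clifford-conjugation}), whereas here the gates $\mathcal{W}^{A'}_{1,i}$ and $\mathcal{W}^{A''}_{1,i}$ are arbitrary, which forces the noise to be of the fully covariant depolarizing form.

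First I would handle each tensor factor in isolation. On subsystem $A'$ the noisy channel is $\mathcal{D}^{A'}_{1,k}\circ \mathcal{W}^{A'}_{1,k}\circ\cdots\circ\mathcal{D}^{A'}_{1,1}\circ \mathcal{W}^{A'}_{1,1}$. Applying covariance repeatedly, I would commute every $\mathcal{D}^{A'}_{1,i}$ through the unitaries to the earliest time $\tau_1$, collecting them into the single channel $\widehat{\mathcal{D}}^{A'}\coloneqq\mathcal{D}^{A'}_{1,k}\circ\cdots\circ\mathcal{D}^{A'}_{1,1}$. Since a composition of depolarizing channels is again depolarizing (the survival parameters simply multiply), $\widehat{\mathcal{D}}^{A'}$ is depolarizing, and the $A'$ channel becomes $\mathcal{W}^{A'}_{1}\circ\widehat{\mathcal{D}}^{A'}$. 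The identical argument on $A''$ gives $\mathcal{W}^{A''}_{1}\circ\widehat{\mathcal{D}}^{A''}$.

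Next I would recombine the factors. Because each subsystem's gates and noise act trivially on the other, the overall channel on $A=A'A''$ factorizes as $(\mathcal{W}^{A'}_1\circ\widehat{\mathcal{D}}^{A'})\otimes(\mathcal{W}^{A''}_1\circ\widehat{\mathcal{D}}^{A''}) = (\mathcal{W}^{A'}_1\otimes\mathcal{W}^{A''}_1)\circ(\widehat{\mathcal{D}}^{A'}\otimes\widehat{\mathcal{D}}^{A''}) = \mathcal{W}_1\circ\widehat{\mathcal{P}}^A$, where $\widehat{\mathcal{P}}^A\coloneqq\widehat{\mathcal{D}}^{A'}\otimes\widehat{\mathcal{D}}^{A''}$. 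The one verification that needs care is that $\widehat{\mathcal{P}}^A$ is a bona fide global Pauli channel in the sense of Definition~\ref{def-UPC}: a single-subsystem depolarizing channel sends every nonidentity Pauli to itself times a non-negative survival parameter and fixes the identity, so the tensor product is Pauli-diagonal with non-negative coefficients and $c_{\vec{0}\vec{0}}=1$. Appending the clean $\mathcal{W}_2$ (which carries no additional noise in this corollary) then gives $\mathcal{W}\circ\widehat{\mathcal{P}}^A$, which is precisely the right-hand side of \eqref{eq:pauli-channel-conjugation1} with $\widehat{\mathcal{P}}^A$ Pauli. Corollary~\ref{cor1-app} then immediately yields strong-OPR of $C_{\HST}$ and $C_{\LHST}$.

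I expect the main (indeed essentially the only) obstacle to be the clean statement and use of the covariance identity, together with the accompanying observation that it genuinely fails for general Pauli noise --- this is what explains why the corollary must restrict to depolarizing noise rather than the broader Pauli noise permitted in Corollary~\ref{cor1-app}. Everything else (commuting the channels past the gates and recombining the two tensor factors) is routine bookkeeping once covariance and the Pauli-channel check are in place.
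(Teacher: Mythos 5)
Your proposal is correct and takes essentially the same route as the paper: the paper's proof likewise establishes that a local depolarizing channel commutes with the tensor-product unitary (via an explicit chain of equalities using the invariance of the maximally mixed state and the partial trace under local unitary conjugation), applies this repeatedly to push all the depolarizing noise to time $\tau_1$, and then invokes Corollary~\ref{cor:Pauli-conjugation}. Your added check that the collected channel $\widehat{\DC}^{A'}\otimes\widehat{\DC}^{A''}$ is a bona fide Pauli channel with non-negative coefficients in the sense of Definition~\ref{def-UPC} is left implicit in the paper but is a worthwhile clarification.
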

\begin{proof}
Let $\rho$ denote a quantum state. Consider the following chain of equalities:
\begin{align}
(\DC^{A'}_p\otimes \DC^{A''}_q)(\WC^{A'}\otimes \WC^{A''})(\rho) & =(\IC^{A'}\otimes \DC^{A''}_q) \left( p(\WC^{A'}\otimes \WC^{A''}(\rho)) +(1-p)\pi^{A'} \Tr_{A'}((\WC^{A'}\otimes \WC^{A''})(\rho)) \right)\\
&=(\IC^{A'}\otimes \DC^{A''}_q) \left( p(\WC^{A'}\otimes \WC^{A''}(\rho)) +(1-p)\pi^{A'} \Tr_{A'}((\IC^{A'}\otimes \WC^{A''})(\rho)) \right)\\
& = (\IC^{A'}\otimes \DC^{A''}_q)\left((\WC^{A'}\otimes \WC^{A''})(p\rho + (1-p)\pi^{A'}\Tr_{A'}(\rho)) \right)\\
& = (\IC^{A'}\otimes \DC^{A''}_q)(\WC^{A'}\otimes \WC^{A''})(\DC^{A'}_p(\rho))\\
& = (\WC^{A'}\otimes \WC^{A''})(\DC^{A'}_p\otimes \DC^{A''}_q)(\rho)~,\label{eq:depolarizing-conjugation}
\end{align}
where $\pi^{A'}$ is a maximally mixed state on system $A'$. 
Therefore, the result follows by applying \eqref{eq:depolarizing-conjugation} several times and invoking Corollary \ref{cor:Pauli-conjugation}.
\end{proof}

\begin{corollaryApp}
The cost functions $C_{\HST}$ and $C_{\LHST}$ exhibit strong-OPR to the following noise model: (1) all noise processes in  Noise Model 2, as well as 
(2) a noise process during the implementation of $\mathcal{W} = \mathcal{W}_k \circ \cdots \circ \mathcal{W}_1 = \mathcal{V}^{\dagger}\circ\mathcal{U}$ (i.e., in the time interval between $\tau_1$ and $\tau_2$) in which global non-unital Pauli channels $\{\mathcal{P}^A_{\text{NU},1}, \dots, \mathcal{P}^A_{\text{NU},k}\}$ act on system $A$  such that the overall channel on $A$ is $\mathcal{P}^A_{\text{NU},k}\circ \mathcal{W}_k  \cdots \circ\mathcal{P}^A_{\text{NU},1}\circ \mathcal{W}_1$, provided that the following condition is satisfied:
\begin{align}\label{eq:non-unital-pauli-channel-conjugation}
(\mathcal{P}^A_{\text{NU},k}\circ \mathcal{W}_k  \cdots \circ\mathcal{P}^A_{\text{NU},1}\circ \mathcal{W}_1)(\cdot) = (\mathcal{W}_k \circ \mathcal{W}_{k-1}\cdots \mathcal{W}_1\circ \widehat{\mathcal{P}}^A_{\text{NU}}) (\cdot)~,
\end{align}
where $\widehat{\mathcal{P}}^A_{\text{NU}}$ is also a Pauli channel.
\end{corollaryApp}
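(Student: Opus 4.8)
The plan is to mirror the proof of Corollary~\ref{cor:Pauli-conjugation}, but replace its appeal to Theorem~\ref{thm1} with an appeal to Theorem~\ref{thm2}. The key observation is that Noise Model~2 already permits a global non-unital Pauli channel to act on system $A$ at time $\tau_1$; it therefore suffices to show that the extra interleaved non-unital Pauli noise appearing in the corollary can be absorbed into that existing channel, reducing the problem exactly to the hypotheses of Theorem~\ref{thm2}.

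First I would invoke the hypothesis~\eqref{eq:non-unital-pauli-channel-conjugation}, which asserts that the full interleaved channel on $A$ during the implementation of $\WC$ satisfies
\begin{equation}
(\mathcal{P}^A_{\text{NU},k}\circ \mathcal{W}_k \cdots \circ\mathcal{P}^A_{\text{NU},1}\circ \mathcal{W}_1)(\cdot) = (\mathcal{W}_k \circ \cdots \circ \mathcal{W}_1 \circ \widehat{\mathcal{P}}^A_{\text{NU}})(\cdot) = (\WC \circ \widehat{\mathcal{P}}^A_{\text{NU}})(\cdot).
\end{equation}
Thus the combined effect of the unitary evolution $\WC$ together with the interleaved non-unital Pauli channels is mathematically equivalent to first applying the single non-unital Pauli channel $\widehat{\mathcal{P}}^A_{\text{NU}}$ and then the noiseless unitary $\WC$. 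In other words, $\widehat{\mathcal{P}}^A_{\text{NU}}$ acts at time $\tau_1$, immediately before $\WC$ begins, precisely in the slot occupied by the non-unital Pauli channel $\mathcal{P}^A_{\text{NU}}$ in Noise Model~2.

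Next I would compose $\widehat{\mathcal{P}}^A_{\text{NU}}$ with the non-unital Pauli channel $\mathcal{P}^A_{\text{NU}}$ already present at $\tau_1$ in Noise Model~2 and verify that the class of non-unital Pauli channels is closed under composition. By~\eqref{def:non-un-Puali1}, on a nonidentity Pauli $X^{\vec{a}}Z^{\vec{b}}$ the composite acts as multiplication by the product $c^{(1)}_{\vec{a},\vec{b}} c^{(2)}_{\vec{a},\vec{b}}$, while on $\id$ it produces, via~\eqref{def:non-un-Puali2}, an operator of the form $\id + \sum_{(\vec{a},\vec{b})\neq(\vec{0},\vec{0})} d_{\vec{a},\vec{b}} X^{\vec{a}}Z^{\vec{b}}$; both outputs are of the non-unital Pauli form, and the product coefficients remain nonnegative, so the composite $\widehat{\mathcal{P}}^A_{\text{NU}} \circ \mathcal{P}^A_{\text{NU}}$ is again a valid non-unital Pauli channel on $A$ at $\tau_1$. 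The resulting noise process is then identical to Noise Model~2 (with $\mathcal{P}^A_{\text{NU}}$ replaced by this composite), so invoking Theorem~\ref{thm2} immediately yields strong-OPR for $C_{\HST}$ and $C_{\LHST}$.

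I expect the corollary to be essentially immediate once this reduction is set up; the only genuine computation is the closure check, and the main thing to get right is the bookkeeping, namely the ordering of the composition and the fact that $\widehat{\mathcal{P}}^A_{\text{NU}}$ lands on system $A$ at time $\tau_1$, exactly where Theorem~\ref{thm2} permits non-unital Pauli noise. I also note that the appendix statement's phrase ``$\widehat{\mathcal{P}}^A_{\text{NU}}$ is also a Pauli channel'' should read ``non-unital Pauli channel,'' consistent with the main-text Corollary~\ref{cor:non-unital-Pauli-conjugation}, since the hypothesis indeed produces a channel of the non-unital form.
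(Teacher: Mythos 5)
Your proposal is correct and follows essentially the same route as the paper, which likewise reduces the interleaved noise via condition \eqref{eq:non-unital-pauli-channel-conjugation} to a single non-unital Pauli channel acting at $\tau_1$ and then invokes Theorem~\ref{thm2}. The only difference is that you explicitly verify closure of non-unital Pauli channels under composition with the $\tau_1$ noise already present in Noise Model~2 (a detail the paper leaves implicit), and your remark that ``Pauli channel'' should read ``non-unital Pauli channel'' matches the main-text statement of the corollary.
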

\begin{proof}
This follows from the fact that the overall noisy channel acting during the implementation of $\WC$ is mathematically equivalent to a non-unital Pauli channel followed by the unitary $\WC$, as described in the condition \eqref{eq:non-unital-pauli-channel-conjugation} and by invoking Theorem \ref{thm2}, which allows for non-unital Pauli noise at time $\tau_1$. 
\end{proof}

\begin{corollaryApp}
The cost functions $C_{\HST}$ exhibits strong-OPR to the following noise model: (1) global depolarizing noise acting continuously throughout the circuit, (2) global non-unital Pauli noise on system $A$ at a fixed time in between $\tau_1$ and $\tau_2$. 
\end{corollaryApp}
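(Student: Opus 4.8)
The plan is to reduce the noisy Hilbert--Schmidt fidelity to a clean expression in which the non-unital part of the noise leaves no trace, and then pin the optimum by a Cauchy--Schwarz argument. Since this noise model contains no measurement noise, the ideal final measurement composed with $E\ad$ is simply the projector onto the standard Bell state, so $\widetilde{F}_{\HST}(V)=\bra{\Phi^+}\sigma\ket{\Phi^+}$, where $\sigma$ is the state just before $E\ad$. First I would dispose of the global depolarizing noise exactly as in Lemma~\ref{lemmaGDN}, by expanding every depolarizing channel into its ``keep'' branch (weight $p_i$, identity action) and its ``reset'' branch (state replaced by $\id/d^2$). The all-keep branch merely multiplies the $\PC_{\text{NU}}$-only fidelity by the constant $\prod_i p_i$, while any branch containing a reset contributes a term independent of $V$: once the state is reset, the residual dependence on $V$ enters only through traces $\Tr\!\big((W_2^A) X^{\vec a}Z^{\vec b}(W_2^A)\ad\big)=\Tr(X^{\vec a}Z^{\vec b})=0$ of non-identity Paulis. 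Hence it suffices to treat the circuit with $\PC_{\text{NU}}^A$ as the only noise.

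Next I would write $W=V\ad U=W_2W_1$, with $W_1$ and $W_2$ the unitaries applied before and after the fixed noise time, so that the reduced fidelity is $\bra{\Phi^+}(W_2^A\ot\id)(\PC_{\text{NU}}^A\ot\id)\big[(W_1^A\ot\id)\dya{\Phi^+}((W_1^A)\ad\ot\id)\big]((W_2^A)\ad\ot\id)\ket{\Phi^+}$. The key move is the ricochet property $(M^A\ot\id)\ket{\Phi^+}=(\id\ot M^{T})\ket{\Phi^+}$: applying it to both $W_1$ and $W_2$ transfers them onto system $B$, and since $\PC_{\text{NU}}^A$ commutes with any operator on $B$, the computation collapses to $\PC_{\text{NU}}^A$ acting on the bare Bell state, $\Omega:=(\PC_{\text{NU}}^A\ot\id)(\dya{\Phi^+})$, conjugated on $B$ by the image of $W_1W_2$. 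Expanding $\Omega$ in the Pauli basis splits it into a unital piece $\tfrac{1}{d^2}\sum_{\vec l,\vec k}c_{\vec l,\vec k}X_A^{\vec l}Z_A^{\vec k}\ot X_B^{\vec l}Z_B^{\vec k}$ and a non-unital piece $\tfrac{1}{d^2}\big(\sum_{(\vec a,\vec b)\neq(\vec 0,\vec 0)}d_{\vec a,\vec b}X_A^{\vec a}Z_A^{\vec b}\big)\ot\id_B$.

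The crux is that the non-unital piece contributes nothing. Because its $B$-factor is $\id_B$, the two conjugating $B$-unitaries collapse to the identity there, and $\bra{\Phi^+}(O_A\ot\id_B)\ket{\Phi^+}=\tfrac1d\Tr O_A=0$ since $O_A$ is a sum of non-identity Paulis. This is exactly where the full identity on $B$ in the $\HST$ measurement is used, and it explains why the statement is claimed only for $C_{\HST}$ and not $C_{\LHST}$, whose local measurement leaves only a partial identity behind. What survives is $\widetilde{F}_{\HST}(V)\propto\sum_{\vec l,\vec k}c_{\vec l,\vec k}\,\tfrac1d\Tr\!\big((X^{\vec l}Z^{\vec k})\ad\,\widetilde{W}\ad\,(X^{\vec l}Z^{\vec k})\,\widetilde{W}\big)$ with $\widetilde{W}=W_1W_2$. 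Each summand is bounded by $1$ via Cauchy--Schwarz, with equality iff $\widetilde{W}$ commutes with $X^{\vec l}Z^{\vec k}$; since $c_{\vec l,\vec k}\geq 0$, the weighted sum is maximal precisely when $\widetilde{W}$ commutes with every Pauli of positive weight. Taking the noise coefficients to be positive (the convention already used in Theorems~\ref{thm1}--\ref{thm2}), this commutant is trivial, forcing $\widetilde{W}\propto\id$ and hence $W=e^{i\phi}\id$ (using $\Tr(W_1W_2)=\Tr W$); therefore $\widetilde{\VB}_d^{\opt}=\VB_d^{\opt}$ and $C_{\HST}$ exhibits strong-OPR.

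The main obstacle I anticipate is controlling the non-unital ($\id_B$) term so that it genuinely leaves no $V$-dependent footprint --- both inside the reset branches of the depolarizing decomposition and after the ricochet reduction --- together with the final, more routine verification that the commutant of the positively weighted Paulis is trivial, so that the optimum is pinned to $W\propto\id$ rather than an enlarged degenerate set.
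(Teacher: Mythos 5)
Your proof is correct and follows essentially the same route as the paper's: both hinge on the ricochet property of $\ket{\Phi^+}$ to transport the unitary applied before the noise past the non-unital Pauli channel, both kill the non-unital ($\id_B$) corrections and the depolarizing reset branches by tracelessness of non-identity Paulis, and both pin the optimum with the same Pauli-basis Cauchy--Schwarz bound. The only difference is presentational: the paper ricochets $W_1$ around to obtain $\mathcal{W}_1\circ\mathcal{W}_2\circ\mathcal{P}^A_{\text{NU}}$ and then invokes Theorem~\ref{thm2} as a black box, whereas you inline the corresponding computation (and are, if anything, slightly more explicit about needing strictly positive coefficients to force the commutant to be trivial).
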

\begin{proof}
Let us decompose $\WC$ as $\WC = \WC_2\circ \WC_1$ such that the non-unital Pauli channel $\mathcal{P}^A_{\text{NU}}$ acts at time $\tau'$ between $\WC_1$ and $\WC_2$, with the overall channel between $\tau_1$ and $\tau_2$ given by $\mathcal{W}_2\circ\mathcal{P}^A_{\text{NU}}\circ\mathcal{W}_1$.  The state at time $\tau_1$ is
\begin{align}
\rho^{(1)} = p^{(1)} \dya{\Phi^+} + (1-p^{(1)}) \id/d~,
\end{align}
where  $p^{(1)} = p^{(k, 1)} \cdots p^{(1,1)}$ corresponds to the continuous depolarizing channel as discussed in Appendix~\ref{sec:proof-thm-1}. We break up the time interval in between $\tau'$ and $\tau_1$ into $l$ steps. The state at time $\tau'$ is given by 
\begin{align}
\widetilde{\rho}^{(2)} &= \mathcal{P}^A_{\text{NU}} \circ \mathcal{D}^{AB}_{q^{(2,l)}} \circ \mathcal{W}_1^l \cdots \circ \mathcal{D}^{AB}_{q^{(2,1)}}\circ \mathcal{W}_1^1(\rho^{(1)})     \\
& = \mathcal{P}^A_{\text{NU}}(p^{(1)}q^{(2)} \mathcal{W}_1(\dya{\Phi^+}) + (1-p^{(1)}q^{(2)})\id/d)\\
& = p^{(1)}q^{(2)}\mathcal{P}^A_{\text{NU}}(\mathcal{W}_1(\dya{\Phi^+})) + (1-p^{(1)}q^{(2)})\id/d + (1-p^{(1)}q^{(2)})\frac{1}{d} \sum_{(\vec{g}, \vec{h})\neq (\vec{0}, \vec{0})} d_{\vec{g}, \vec{h}} X_A^{\vec{g}}Z^{\vec{h}}_A \otimes \id_B~,
\end{align}
where $q^{(2)} = q^{(2, k)}\cdots q^{(2,1)}$ and $\mathcal{W}_1 = \mathcal{W}_1^l \cdots \mathcal{W}_1^1$. Similarly, we break up the the time interval between $\tau_2$ and $\tau'$ into $m$ steps. The term that depends on $\mathcal{W}$ at time $\tau_2$ is given by 
\begin{align}
\widetilde{\sigma}^{(2)} = p^{(1)}q^{(2)}r^{(2)} \mathcal{W}_2 \circ \mathcal{P}_{\text{NU}}^A\circ \mathcal{W}_1 (\dya{\Phi^+}) + r^{(2)}(1-p^{(1)}q^{(2)})\frac{1}{d}\sum_{(\vec{g}, \vec{h})\neq (\vec{0}, \vec{0})}d_{\vec{g},\vec{h}} W_2 X^{\vec{g}}_A Z^{\vec{h}}_A W^{\dagger}_2 \otimes \id_B~.
\end{align}
Let 
\begin{align}
    \Ft_{\HST}(V) \propto f(V) \coloneqq \Tr[\dya{\Phi^+}\widetilde{\sigma}^{(2)}]~.
\end{align}
Moreover, for simplicity we denote
\begin{align}
    f_1(V)& \coloneqq \Tr\left[\dya{\Phi^+}(\mathcal{W}_2 \circ \mathcal{P}^A_{\text{NU}}\circ\mathcal{W}_1 )(\dya{\Phi^+}) \right]~,\\
    f_2(V) & \coloneqq \Tr\left[ \dya{\Phi^+}( W_2 X^{\vec{g}}_A Z^{\vec{h}}_A W^{\dagger}_2 \otimes \id_B) \right]~.
\end{align}
Consider the followings:
\begin{align}
f_1(V)  & =    \Tr\left[\dya{\Phi^+}(\mathcal{W}_2^A \circ \mathcal{P}^A_{\text{NU}})((\mathcal{I}_A \otimes (\mathcal{W}^{T}_1)^B)(\dya{\Phi^+}_{AB}))\right]\\
& = \Tr\left[ (\mathcal{I}_A \otimes (\mathcal{W}^{*}_1)^B)(\dya{\Phi^+}) (\mathcal{W}_2^A \circ \mathcal{P}^A_{\text{NU}})(\dya{\Phi^+}) \right]\\
& = \Tr\left[((\mathcal{W}^{\dagger}_1)^A \otimes \mathcal{I}_B)(\dya{\Phi^+}) (\mathcal{W}_2^A \circ \mathcal{P}^A_{\text{NU}})(\dya{\Phi^+})  \right]\\
& = \Tr\left[\dya{\Phi^+} (\mathcal{W}_1^A\circ \mathcal{W}_2^A \circ \mathcal{P}^A_{\text{NU}})(\dya{\Phi^+}) \right]\\
& \leq f_1(V')~\,,
\end{align}
where $V'\in \VB_d^{\opt}$, and where $\VB_d^{\opt}$ denote the sets of unitaries that optimize $F_{\HST}(V)$ (and hence $C_{\HST}(V)$) as defined in \eqref{eq-setHST}.
The first and third equalities follow from the ricochet property. The last equality corresponds to the case when there is non-unital Pauli noise at time $\tau_1$ and no other noise in the HST circuit, which is a special case of Theorem \ref{thm2}. Therefore, the inequality follows from Theorem \ref{thm2}. 
Moreover, by  using the arguments similar to \eqref{eq:f2alpha-1-thm3}--\eqref{eq:f2alpha-2-thm3}, we find that $f_2(V)$ is independent of $W$. This completes the proof. 
\end{proof}

\begin{corollaryApp}\label{cor6-app}
The cost functions $C_{\LET}$ and $C_{\LLET}$ exhibit weak-OPR to a noise model that includes the following: (1) all noise processes in  Noise Model 3, as well as
(2) a noise process during the implementation of $\WC = \mathcal{W}_k \circ \cdots \circ \mathcal{W}_1 = \mathcal{V}^{\dagger}\circ\mathcal{U}$ in which global Pauli channels $\{\mathcal{P}_1, \dots, \mathcal{P}_k\}$ act, such that the overall channel is $\mathcal{P}_k\circ \mathcal{W}_k  \cdots \circ\mathcal{P}_1\circ \mathcal{W}_1$, provided that the following condition is satisfied:
\begin{align}\label{eq:pauli-channel-conjugation-LET2_App}
(\mathcal{P}_k\circ \mathcal{W}_k  \cdots \circ\mathcal{P}_1\circ \mathcal{W}_1)(\cdot) = (\mathcal{W}_k \circ \mathcal{W}_{k-1}\cdots \circ \mathcal{W}_1\circ \widehat{\mathcal{P}}) (\cdot)~.
\end{align}
where $\widehat{\mathcal{P}}$ is also a Pauli channel.
\end{corollaryApp}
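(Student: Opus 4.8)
The plan is to reduce the new noise to an effective Pauli channel at $\tau_1$ and then quote Theorem \ref{thm3}, exactly paralleling the one-line argument used for Corollary \ref{cor1-app} in the FUMC case, but making explicit the verification of the hypotheses that Theorem \ref{thm3} requires. First I would use Lemma \ref{lemmaGDN} to dispose of the global depolarizing noise acting continuously throughout the LET and LLET circuits: every remaining operation — the unitary steps $\mathcal{W}_i$, the interspersed Pauli channels $\mathcal{P}_i$, and the $\tau_1$ Pauli channel of Noise Model 3 — is unital, so by Lemma \ref{lemmaGDN} the depolarizing noise merely rescales the cost and leaves both $\VB_d^{\opt}$ and $\widetilde{\VB}_d^{\opt}$ unchanged. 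It therefore suffices to treat the noise consisting of the channels acting during $\WC$, the $\tau_1$ Pauli channel, and the measurement noise.

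Next I would invoke the hypothesis \eqref{eq:pauli-channel-conjugation-LET2_App}, which asserts that the interleaved process $\mathcal{P}_k\circ\mathcal{W}_k\cdots\circ\mathcal{P}_1\circ\mathcal{W}_1$ equals $\mathcal{W}\circ\widehat{\mathcal{P}}$ for some Pauli channel $\widehat{\mathcal{P}}$. This relocates all noise occurring during $\WC$ to a single Pauli channel $\widehat{\mathcal{P}}$ acting immediately before $\WC$, i.e.\ at time $\tau_1$. Composing $\widehat{\mathcal{P}}$ with the Pauli channel already present at $\tau_1$ in Noise Model 3 produces a single effective Pauli channel $\mathcal{P}'$: since the superoperators of Pauli channels are diagonal in the Pauli basis, their composition is again diagonal, with Pauli eigenvalues equal to the products of the individual eigenvalues. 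The noise model has thereby collapsed to Noise Model 3 with a modified $\tau_1$ Pauli channel, so Theorem \ref{thm3} yields $\widetilde{\VB}_d^{\opt}\subseteq\VB_d^{\opt}$ for both $C_{\LET}$ and $C_{\LLET}$, which is exactly weak-OPR in the sense of Definition \ref{def:OPR}.

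The step demanding care, and the main obstacle, is checking that $\mathcal{P}'$ lies in the class for which Theorem \ref{thm3} was established, i.e.\ that it satisfies the nonnegativity assumption of Definition \ref{def-UPC} underlying the dominance $q_{\vec{0}}\geq q_{\vec{i}}$ in \eqref{eq-inequality-pq}. Writing $c'_{\vec{0}\vec{m}}$ for the Pauli eigenvalues of $\mathcal{P}'$, these are products of the (nonnegative) eigenvalues of $\widehat{\mathcal{P}}$ and of the original $\tau_1$ channel, hence nonnegative; and since the induced marginal satisfies $q_{\vec{0}}=2^{-n}\sum_{\vec{m}}c'_{\vec{0}\vec{m}}\geq 2^{-n}\sum_{\vec{m}}(-1)^{\vec{m}\cdot\vec{i}}c'_{\vec{0}\vec{m}}=q_{\vec{i}}$, the required inequality \eqref{eq-inequality-pq} holds. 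With this verified, the rearrangement-inequality argument of Theorem \ref{thm3} applies unchanged, and the corollary follows.
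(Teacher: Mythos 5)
Your proposal is correct and follows essentially the same route as the paper's (one-sentence) proof: use the hypothesis \eqref{eq:pauli-channel-conjugation-LET2_App} to relocate the interleaved Pauli noise to a single Pauli channel at $\tau_1$, absorb it into the $\tau_1$ channel of Noise Model 3, and invoke Theorem~\ref{thm3}. Your additional verification that the composed channel still has nonnegative Pauli eigenvalues, so that the dominance $q_{\vec{0}}\geq q_{\vec{i}}$ needed in the rearrangement argument survives, is a detail the paper leaves implicit but is correctly carried out.
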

\begin{proof}
This follows from arguments similar to Corollary \ref{cor1-app} and by invoking Theorem \ref{thm3}.  
\end{proof}

\begin{corollaryApp}\label{cor7-app}
Let the $W=V\ad U$ gate sequence have the form $W = W^A_2 W^A_1$ with $W_1^A$ be composed only of Clifford gates. Then the cost functions $C_{\LET}$ and $C_{\LLET}$ exhibit weak-OPR to a noise model that includes the following: (1) all noise processes in  Noise Model 3, as well as  
(2) a noise process during the implementation of $\mathcal{W}^A_1 = \mathcal{W}_{1,k} \circ \cdots \circ \mathcal{W}_{1,1}$, in which global Pauli channels $\{\mathcal{P}^A_1, \dots, \mathcal{P}^A_k\}$ act on system $A$, such that the overall channel on $A$ is $\mathcal{P}^A_k\circ \mathcal{W}_{1,k}  \cdots \circ\mathcal{P}^A_1\circ \mathcal{W}_{1,1}$. 
\end{corollaryApp}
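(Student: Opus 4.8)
The plan is to prove Corollary~\ref{cor7-app} by reducing it to Corollary~\ref{cor6-app}, exactly as Corollary~\ref{cor2-app} was reduced to Corollary~\ref{cor1-app} in the full-unitary-matrix-compiling setting. That is, I would show that the hypothesis that $W_1^A$ is composed only of Clifford gates forces the overall noisy channel acting on $A$ during $\WC$ to satisfy the commutation condition \eqref{eq:pauli-channel-conjugation-LET2_App} required by Corollary~\ref{cor6-app}; once this condition is verified, weak-OPR of $C_{\LET}$ and $C_{\LLET}$ follows immediately from that corollary.

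First I would cast the present noise process into the framework of Corollary~\ref{cor6-app}. Writing $\WC = \WC_2^A \circ \WC_1^A$ with $\WC_1^A = \WC_{1,k}\circ\cdots\circ\WC_{1,1}$, I would regard the noise as the global Pauli channels $\{\PC_1^A,\dots,\PC_k^A\}$ interleaved with the Clifford gates of $\WC_1^A$, together with trivial (identity) Pauli channels during $\WC_2^A$. The overall channel on $A$ during $\WC$ is then $\WC_2^A$ composed with $\PC_k^A\circ\WC_{1,k}\cdots\circ\PC_1^A\circ\WC_{1,1}$.

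The key step is to collapse the interleaved Pauli channels into a single Pauli channel sitting before $\WC_1^A$. I would apply Lemma~\ref{lem:clifford-conjugation} (and its immediate variant obtained by replacing the Clifford unitary by its inverse, which lets a Pauli channel be commuted past a Clifford in either order, since conjugation by a Clifford permutes the Pauli group bijectively) iteratively: each $\PC_i^A$ is pushed rightward past the Clifford gates $\WC_{1,i},\dots,\WC_{1,1}$ lying to its right, becoming a new Pauli channel that merges with those already accumulated at the right end. Here I would invoke the elementary fact that the composition of two Pauli channels is again a Pauli channel, since Pauli channels are exactly the channels whose superoperator is diagonal in the Pauli basis. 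After $k$ such steps, the interleaved noise becomes $\WC_{1,k}\circ\cdots\circ\WC_{1,1}\circ\widehat{\PC}^A = \WC_1^A\circ\widehat{\PC}^A$ for a single Pauli channel $\widehat{\PC}^A$, and hence the full channel on $A$ during $\WC$ equals $\WC_2^A\circ\WC_1^A\circ\widehat{\PC}^A = \WC\circ\widehat{\PC}^A$. This is precisely condition \eqref{eq:pauli-channel-conjugation-LET2_App}, so Corollary~\ref{cor6-app} applies and the claim follows.

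I expect the only real subtlety, rather than a genuine obstacle, to lie in the bookkeeping: confirming that the commutation is carried out in the correct order, that the accumulation indeed terminates in a single global Pauli channel on $A$, and that embedding the hypothesis (noise only during the Clifford block $W_1^A$, none during $W_2^A$) into the Corollary~\ref{cor6-app} template via identity Pauli channels is legitimate. All of these are consequences of the Clifford group normalizing the Pauli group, so no ideas beyond Lemma~\ref{lem:clifford-conjugation} are needed.
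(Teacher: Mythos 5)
Your proposal is correct and follows essentially the same route as the paper: the paper's proof simply observes that Lemma~\ref{lem:clifford-conjugation} implies Clifford unitaries satisfy condition \eqref{eq:pauli-channel-conjugation-LET2_App}, making the statement a special case of Corollary~\ref{cor6-app}, which is exactly your reduction (you merely spell out the iterative commutation of the Pauli channels past the Cliffords, including the correct direction of the commutation, in more detail). No gaps.
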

\begin{proof}
This corollary is a special case of Corollary \ref{cor6-app}, since Lemma \ref{lem:clifford-conjugation} implies that Clifford unitaries satisfy \eqref{eq:pauli-channel-conjugation-LET2_App}. 
\end{proof}

\begin{corollaryApp}
Let the $W=V\ad U$ gate sequence have the form $W = W_2^A W_1^A$ with $W_1^A =  W^{A'}_1 \otimes W^{A''}_1$ being a tensor product, i.e., $W$ is a tensor product up to a particular time. Then the cost functions $C_{\LET}$ and $C_{\LLET}$ exhibit weak-OPR to a noise model that includes the following: (1) all noise processes in Noise Model 3, as well as (2) a noise process during the implementations of $\WC_1^{A'} = \WC_{1,k}^{A'} \circ \cdots \circ \WC_{1,1}^{A'}$ and $\WC_1^{A''} = \WC_{1,l}^{A''} \circ \cdots \circ \WC_{1,1}^{A''}$ in which local depolarizing channels $\{\DC^{A'}_{1,1}, \dots ,\DC^{A'}_{1,k}\}$ and  $\{\DC^{A''}_{1,1}, \dots ,\DC^{A''}_{1,l}\}$ act on subsystems 
$A'$ and $A''$, respectively, such that the overall channel on $A'A''$ is $(\DC^{A'}_{1,k}\circ \WC^{A'}_{1,k}\dots \DC^{A'}_{1,1}\circ \WC^{A'}_{1,1})\otimes(\DC^{A''}_{1,l}\circ \WC^{A''}_{1,l}\dots \DC^{A''}_{1,1}\circ \WC^{A''}_{1,1})$. \end{corollaryApp}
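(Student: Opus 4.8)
The plan is to reduce this statement to Corollary~\ref{cor:Pauli-conjugation-LET} in exactly the way that Corollary~\ref{cor:TensorProduct} was reduced to Corollary~\ref{cor:Pauli-conjugation} in the FUMC setting. The essential tool is the commutation identity \eqref{eq:depolarizing-conjugation} established inside the proof of Corollary~\ref{cor3-app}, which shows that a pair of \emph{local} depolarizing channels $\DC_p^{A'}\otimes\DC_q^{A''}$ commutes past a tensor-product unitary $\WC^{A'}\otimes\WC^{A''}$; that is, the depolarizing channels may be slid to the right of the unitary. Since $W_1^A = W_1^{A'}\otimes W_1^{A''}$ factorizes, each elementary unitary $\WC_{1,i}^{A'}$ (respectively $\WC_{1,i}^{A''}$) acts on only one factor, so every stage of the interleaved channel $(\DC_{1,k}^{A'}\circ\WC_{1,k}^{A'}\cdots\DC_{1,1}^{A'}\circ\WC_{1,1}^{A'})\otimes(\DC_{1,l}^{A''}\circ\WC_{1,l}^{A''}\cdots\DC_{1,1}^{A''}\circ\WC_{1,1}^{A''})$ has precisely the tensor-product form to which \eqref{eq:depolarizing-conjugation} applies.

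First I would apply \eqref{eq:depolarizing-conjugation} iteratively to slide every local depolarizing channel past the unitaries, until all of them sit before the entire product $\WC_1^A$, i.e.\ at time $\tau_1$. Composing the depolarizing channels on the $A'$ factor gives a single depolarizing channel (they are closed under composition, with $\DC_{p_1}\circ\DC_{p_2}=\DC_{p_1 p_2}$), and likewise on $A''$; their tensor product $\widehat{\DC}^{A'}\otimes\widehat{\DC}^{A''}$ is a Pauli channel $\widehat{\PC}^A$ on $A=A'A''$, since depolarizing noise is a special case of Pauli noise and a tensor product of Pauli channels is again a Pauli channel. Its Pauli coefficients are products of the depolarizing eigenvalues $\{1,p\}$ and hence nonnegative, so $\widehat{\PC}^A$ meets the requirement of Definition~\ref{def-UPC}. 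Consequently the overall channel implemented during $\WC=\WC_2^A\circ\WC_1^A$ equals $\WC\circ\widehat{\PC}^A$ (taking the inserted channels during $\WC_2^A$ to be the identity), which is exactly the hypothesis \eqref{eq:pauli-channel-conjugation-LET} of Corollary~\ref{cor:Pauli-conjugation-LET}. The weak-OPR of $C_{\LET}$ and $C_{\LLET}$ then follows by invoking that corollary, and hence ultimately Theorem~\ref{thm3}.

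The main obstacle is conceptual rather than computational: one must recognize that the restriction to \emph{local} depolarizing noise, as opposed to a global Pauli channel on $A$, is precisely what makes the commutation \eqref{eq:depolarizing-conjugation} valid, since a generic global Pauli channel does not commute past a tensor-product unitary. The remaining work is bookkeeping --- confirming that the iterated sliding terminates with all depolarizing channels gathered at $\tau_1$ and that their composition stays within the depolarizing (hence Pauli, hence nonnegative-coefficient) class, so that Corollary~\ref{cor:Pauli-conjugation-LET} genuinely applies.
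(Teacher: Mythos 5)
Your proposal is correct and follows essentially the same route as the paper: the paper likewise proves this by applying the commutation identity \eqref{eq:depolarizing-conjugation} repeatedly to slide the local depolarizing channels past the tensor-product unitary factors and then invoking Corollary~\ref{cor:Pauli-conjugation-LET} (and hence Theorem~\ref{thm3}). Your additional bookkeeping --- composing the accumulated depolarizing channels into a single tensor-product Pauli channel with nonnegative coefficients so that the hypothesis \eqref{eq:pauli-channel-conjugation-LET} is genuinely met --- is exactly the detail the paper leaves implicit.
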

\begin{proof}
This follows from arguments similar to the proof of Corollary \ref{cor3-app} and by invoking Corollary \ref{cor6-app}. 
\end{proof}

\end{document}